\newsavebox{\fmbox}
\newfont{\twelvecyr}{wncyr10 at 12pt}
\def\sha{\text{\twelvecyr\cyracc{Sh}}}
\def\pplogo{\vbox{\kern-\headheight\kern -15pt
\halign{##&##\hfil\cr&{
\ppnumber}\cr\rule{0pt}{2.5ex}&\ppdate\cr} }} \makeatletter
\def\ps@firstpage{\ps@empty \def\@oddhead{\hss\pplogo}%
  \let\@evenhead\@oddhead 
}
\def\maketitle{\par
 \begingroup
 \def\thefootnote{\fnsymbol{footnote}}
 \def\@makefnmark{\hbox
 to 0pt{$^{\@thefnmark}$\hss}}
 \if@twocolumn
 \twocolumn[\@maketitle]
 \else \newpage
 \global\@topnum\z@ \@maketitle \fi\thispagestyle{firstpage}\@thanks
 \endgroup
 \setcounter{footnote}{0}
 \let\maketitle\relax
 \let\@maketitle\relax
 \gdef\@thanks{}\gdef\@author{}\gdef\@title{}\let\thanks\relax}
\def\ppnumber{\vbox{\baselineskip14pt
\hbox{UCSB Math 2011-08}
\hbox{IPMU 11-0106}
 }}
\def\ppdate{}
\date{} 
\title[Anomalies and the Euler characteristic]{Anomalies and the
Euler characteristic of elliptic Calabi--Yau threefolds}
\author[A. Grassi and D. R. Morrison]{Antonella Grassi and David
R. Morrison}
\address{Department of Mathematics, University of Pennsylvania,
Philadelphia, PA 19104} \email{grassi@math.upenn.edu}
\address{Institute for the Physics and Mathematics of the Universe,
The University of Tokyo, Kashiwa, Chiba 277-8582, Japan, and}
\address{Departments of Mathematics and Physics, University of
California, Santa Barbara, CA 93106}
\email{drm@math.ucsb.edu}
\thanks{Research partially supported by
National Science
Foundation grants DMS-0606578, DMS-1007414, PHY-1066293,
and by World Premier International Research Center Initiative (WPI Initiative), MEXT, Japan.
}
\theoremstyle{plain}
\newtheorem{theorem}{Theorem}
\newtheorem{lemma}[theorem]{Lemma}
\newtheorem{definition}[theorem]{Definition}
\numberwithin{theorem}{section} \numberwithin{equation}{section}
\theoremstyle{remark}
\newtheorem{remark}[theorem]{Remark}
\newcommand{\Tr}{\operatorname{Tr}}
\newcommand{\tr}{\operatorname{tr}}
\newcommand{\adj}{\operatorname{adj}}
\newcommand{\fund}{\operatorname{fund}}
\newcommand{\spinrep}{\operatorname{spin}}
\newcommand{\vect}{\operatorname{vect}}
\newcommand{\ad}{\operatorname{ad}}
\newcommand{\End}{\operatorname{End}}
\newcommand{\trace}{\operatorname{trace}}
\newcommand{\ord}{\operatorname{ord}}
\def\bs1{\boldsymbol\Sigma}
\def\p2{\mathbb P^2}
\def\r{\boldmath  $\mathcal R$}
\def\c{\chi _{top}}
\def\kb{ K_B}
\def\dim{\operatorname{dim}}
\def\rk{\operatorname{rk}}
\def\f{\frac{1}{2}}
\renewcommand{\div}{\operatorname{div}}
\newcommand{\LB}[1]{#1L}
\newcommand{\minusLB}{-L}
\newcommand{\optionLB}[2]{#2L}
\newcommand{\parenLB}{L}
\newcommand{\Sing}{\operatorname{Sing}}
\newcommand{\betaSigma}{\beta_\Sigma} 
\newcommand{\gammaSigma}{\gamma_\Sigma} 
\newcommand{\deltaSigma}{\delta_\Sigma} 
\newcommand{\rhoP}{\rho_P}
\begin{document}
\begin{abstract}
We investigate the
 delicate interplay between the types of singular fibers in
elliptic fibrations  of Calabi--Yau threefolds (used to formulate F-theory) and the
``matter'' representation of the associated Lie algebra.
The main tool is the analysis and the appropriate interpretation of the anomaly formula for six-dimensional supersymmetric theories.
We find that this anomaly formula is geometrically captured by
a relation among codimension two cycles on the base of the elliptic
fibration, and that this relation holds for elliptic fibrations of any dimension.
We introduce a ``Tate cycle'' which efficiently describes this relationship, and which
is remarkably easy to calculate explicitly from the Weierstrass equation
of the fibration.
 We check the anomaly cancellation formula
 in a
number of situations and show how this formula constrains
the geometry (and in particular the Euler characteristic) of
the Calabi--Yau threefold.

\end{abstract}
\maketitle

Traditional compactification of string theory or M-theory on the
product of a Calabi--Yau manifold and Minkowski space leads to a low-energy
effective theory in which---at least semiclassically---the
physics in Minkowski space is encoded by
well-known and clearly understood geometric features of the Calabi--Yau
manifold.

One of the early lessons in the second superstring revolution
was that for the type II theories, as well as M-theory and F-theory,
singularities in the Calabi--Yau manifold could lead to interesting and
important physical effects.  The first of these is non-abelian gauge
symmetry arising from ADE singularities in complex codimension two
\cite{Witten:1995ex,Aspinwall:1995zi};
refinements to the original story show that
non-simply-laced as well as simply-laced gauge groups can occur
\cite{MR1416960, geom-gauge, fiveDgauge, Witten:1997kz, Diaconescu:1998cn}.
The second effect is  charged matter in the
non-abelian gauge sector \cite{geom-gauge,Katz:1996xe}, which
arises from the singular structure in complex codimension three.
(The codimensions here are
codimensions within the Calabi--Yau manifold; in F-theory, one
sees the same effects in the base of an elliptic fibration on the
Calabi--Yau manifold, where the gauge groups are associated with
complex codimension one and the matter with complex codimension two.)

There are many subtleties involved in determining the physics in this
context, and one important guide has been consistency of the induced
low-energy theory.  In this paper, we investigate the conditions imposed
by anomaly cancellation on F-theory models in six dimensions.  There is
a delicate interplay between the types of singular fibers in the
elliptic fibration used to formulate F-theory, and the corresponding
matter representation.  We not only check the anomaly cancellation formula
(based on an intrinsic geometric analysis of matter from singular
fibers, as in \cite{Katz:1996xe,fiveDgauge,LieF}) in a
number of situations, but we also show how this formula constrains
the geometry: given the representation, one can often determine the
allowed set of geometries.

We have found a natural setting for our key computation which applies to
an elliptically fibered Calabi--Yau manifold of arbitrary dimension.  In our
setup, for each such manifold we describe both a gauge {\em divisor}\/ and a
matter {\em cycle}\/ (of codimension two) on the base; the anomaly
cancellation is the result of a tight relationship between these two.

For an elliptic Calabi--Yau threefold, the associated
 six-dimensional quantum field theory is again
a gauge  theory, and in
order for it to be  consistent,  the gauge, gravitational, and mixed anomalies must
vanish.  The vanishing of one of these anomalies can be interpreted as a
formula
for the Euler characteristic of the Calabi--Yau manifold; the
vanishing of the others severely constrains the ``dictionary'' between
singular fibers in the elliptic fibration and the matter representation
of the gauge theory.  Our earlier paper \cite{grouprep} focused
on the Euler characteristic formula.  In this paper, we will concentrate
on the other anomaly cancellation formulas and the correspondence between
singular fibers and the matter representation (in our more general
setting of a ``matter cycle'').  We stress that these are closely related.

Interest in these questions has been revived recently by some work
on the consistency of low-energy 6D supergravity theories
\cite{Kumar:2009us,Kumar:2009ae}.  A systematic application of constraints on
the low-energy theories,  combined with  anomaly cancellation,
 comes fairly close to matching the set of F-theory vacua
\cite{mapping,global6D,Kumar:2010am}.

Moreover, the continued importance of F-theory in constructing string
vacua in four dimensions\footnote{See \cite{Denef:2008wq,Heckman:2010bq} and
references therein
for recent progress in this direction.}
makes a precise understanding
of these issues quite important.  The lessons learned from the present study
of six-dimensional models have immediate applications to four-dimensional
F-theory models, including those with fluxes and
branes.  In addition, there are other
issues in four dimensions, particularly those related to singularities
in complex codimension three in the base of the F-theory elliptic
fibration \cite{codimthree,Denef:2005mm}, whose complete physical
understanding will undoubtedly require an  understanding of
 the complex codimension two
effects investigated in this paper.

\bigskip

The formula for anomalies is perfectly well-defined in terms of quantities
in the quantum field theory, but a complete dictionary between the
geometry
of the Calabi--Yau manifold and the corresponding quantities in the
quantum
field theory is not yet known.  Thus, our purpose will be to not only
verify this formula, but to complete the dictionary with field-theoretic
quantities at the same time.

 When one of the ``type II string
theories'' is
formulated on a ten-manifold
of the form $M^{3,1}\times X$ with $X$ a Calabi--Yau
threefold
and $M^{3,1}$ a flat spacetime of dimension four, the resulting theory has
a low energy
approximation which takes the form of a
four-dimensional quantum field theory with quite realistic physical
properties
(depending on certain properties of the Calabi--Yau threefold).

Elliptic Calabi--Yau
threefolds with a section $\pi:X\to B$ have also
been used in a different way in string theory.
We can ask what happens to the type IIA theory
in the limit in which the
Calabi--Yau metric on $X$ is varied so that the fibers of the map $\pi$
shrink to zero area and the string coupling approaches infinity.  
It turns out that the resulting physical theory has a
low energy approximation which takes the form of a \emph{six-dimensional}
quantum field theory.  This limiting theory can also be described more
directly in terms of the periods $\tau(b)$ of the elliptic curves
$\pi^{-1}(b)$, regarded as a multi-valued function on $B$.  The type IIB
string theory is compactified on $B$ with the aid of this function, using
what are known as D-branes along the discriminant locus of the map $\pi$.
(This latter approach is known as ``F-theory.'')

This six-dimensional quantum field theory includes gravity as well as
 a gauge field theory whose gauge group
is the compact reductive group $G$  defined
in Section \ref{gaugealgebra}.

The gauge theory is on a eight-dimensional manifold  with boundary, $Y$, whose boundary
is $M^{3,1} \times S^1$; $Y$ is equipped  with a principal $G$-bundle
$\mathcal{G}$ (the
``gauge bundle'').
Then the curvature $F$ of the gauge connection
is an $\ad(\mathcal{G})$-valued
two-form, where each fiber $\ad(\mathcal{G})_x$ of $\ad(\mathcal{G})$ is
isomorphic to the Lie
algebra $\mathfrak{g}$ of $G$, with $\mathcal{G}_x$ acting on
$\ad(\mathcal{G})_x$ via the adjoint action of $G$ on $\mathfrak{g}$.
Similarly,  $Y$ is equipped
with a (pseudo-)Riemannian metric, and the curvature $R$ of the
Levi--Civita connection is a two-form taking values in the endomorphisms
of the tangent bundle. (Notations and background can be found in Appendix \ref{app:A}.)

  In
order to be a consistent quantum theory, the ``anomalies'' of this
theory must vanish.
In particular, Schwarz shows \cite{Schwarz:1995zw} that in these model, 
(${\mathcal{N}}{=}1$ supersymmetric theories in
six dimensions
with a reductive gauge group  $G$), the anomalies are characterized by an eight-form, made from
curvatures and gauge field two-forms.

\vskip 0.1in

We now discuss the content of each Section of the paper,
although not in the order they actually appear:

Sections \ref{sec:Chow}, \ref{1}, and \ref{2} are the pivot around which the paper unfolds: we describe in Section~\ref{sec:Chow} the anomaly cancellations coming from physics and recast them in the geometric language of codimension two  cycles and representations in the F-theory set up.  This is summarized in equations
\eqref{condition1}--%
\eqref{condition4} and expressed in geometric terms in Sections \ref{1} and \ref{2}. One upshot is that through the anomalies the representation theory puts constraints on the possible F-theory geometries and  vice versa.

In the following Sections \ref{sec:examples} and \ref{sec:about} we describe in concrete terms how this works for the standard
``generic'' codimension two singularities of elliptic fibrations
from \cite{geom-gauge}
(as already verified in \cite{grouprep}) as well as for some new
codimension two singularities such as the one from \cite{newTate}.
We also introduce some singularities which are considered here
for the first time.  In the sections preceding Section~\ref{sec:Chow} we introduce the language and the concepts which are necessary to properly state and understand the geometric interpretations of the anomaly.

Some key features of the anomaly cancellation are phrased in terms
of the Casimir operators of a (real) reductive Lie algebra $\mathfrak{g}$. These are discussed in Section \ref{sec:Casimir}. Motivated by the anomalies we say that two representations $\rho$ and $\rho'$ are
{\em Casimir equivalent in degrees $2$ and $4$}\/ if
  $\Tr_\rho F^k = \Tr_{\rho'} F^k$ for $k=2$, $4$.  In fact, since
the anomalies we consider only involve these degree $2$ and $4$
quantities, we can freely replace a representation with a Casimir-equivalent
one without affecting the validity of anomaly cancellation,
as described in Table~\ref{tab:casimir-equivalences}.
Several examples of this equivalence will be important for us in this paper.
In the following Section \ref{gaugealgebra} we explain how a gauge algebra can be naturally associated to an elliptic fibration with section, via its Weierstrass equation, Kodaira's classification of singular fibers and Tate's algorithm; this is summarized in Table~\ref{tab:kodaira}.

An elliptic fibration with a base of dimension at least two also
has a ``matter representation''  (i.e., the matter representation appropriate
to F-theory or M-theory), with contributions  from the components
of the discriminant locus
of the elliptic fibration and  from the singular locus  of
the discriminant locus $\Sigma$. The matter representation for six-dimensional theories
gets contributions from rational curves which are components of fibers
in the Calabi--Yau resolution of a Weierstrass elliptic fibration. We describe how to compute this in Section \ref{sec:matter} and introduce the \emph{virtual matter cycle}, which is an element of the Chow group
with coefficients in the representation ring of the gauge algebra
$\mathfrak{g}$; the representations are derived from the branching rules. 
(In
Section~\ref{2}
we show that anomaly cancellation for these
six-dimensional theories follows from a relation among algebraic cycles.  If the virtual matter cycle is rationally equivalent to another cycle, the \emph{Tate cycle}, and some conditions on the
intersections  of the components of the discriminant locus hold, then the anomalies vanish; this is actually verified in last two Sections of the paper.)

 The \emph{Tate cycle} is introduced in equation \ref{tatecycle} in Section \ref{sec:Tatecycle}. The constituents of this cycle are found in  Tables \ref{tab:divisors} and \ref{tab:reps}: the first are derived from the Tate algorithm, while the latter are \emph{mostly} derived from the branching rules, except some which are substituted by their Casimir equivalent: this substitution is crucial for the verification of  anomaly cancellation.
In fact, in Section \ref{sec:about}
for each type of singular point, we compute the contribution to the
Tate cycle and compare it to the contribution to the matter cycle.

We wish to emphasize that the Tate cycle is very easy to compute from
the Weierstrass equation, and gives a very quick route to determining
the matter representation up to Casimir equivalence.  Existing techniques
for determining the matter representation more precisely are much
more complicated, and involve either making a group-theory analysis
of each singular point (as in \cite{Katz:1996xe}),
or constructing a resolution of singularities
explicitly (as in \cite{matter1,Esole:2011sm}).

\section{Casimir operators}\label{sec:Casimir}

Some key features of the anomaly cancellation are phrased in terms
of the Casimir operators of a (real) reductive Lie algebra $\mathfrak{g}$
of compact type.
Recall that these are complex-valued polynomial functions on
$\mathfrak{g}$ which are invariant under the adjoint action of $\mathfrak{g}$
on itself.
By the basic structure theory for reductive Lie
algebras (cf.~\cite{MR0027270,MR0044515}), if we choose
a Cartan subalgebra $\mathfrak{h}\subset \mathfrak{g}_{\mathbb{C}}$
of the complexification $\mathfrak{g}_{\mathbb{C}}$ of $\mathfrak{g}$,
then the Casimir operators can be identified with elements of the ring
$S(\mathfrak{h}^*)^W$, where $S(\mathfrak{h}^*)$ is the ring of
polynomial functions
on $\mathfrak{h}$, and $W$ is the Weyl group of $\mathfrak{g}$.

Useful examples of Casimir operators are given by $F\mapsto
\operatorname{Tr}_\rho F^k$  where
$\rho$ is a finite-dimensional complex
representation of $\mathfrak{g}$,
$k$ is a positive integer,
and
$F\in\mathfrak{g}$.
The anomalies of the six-dimensional theories we study only
involve Casimir operators of degrees $2$ and $4$; examples of the latter
are provided by $F\mapsto
(\operatorname{Tr}_\rho F^2)^2$ and $F\mapsto
\operatorname{Tr}_\rho F^4$.  We often shorten the notation and
simply refer to these operators as $(\operatorname{Tr}_\rho F^2)^2$
and $\operatorname{Tr}_\rho F^4$, respectively.

For any simple non-abelian
Lie algebra $\mathfrak{g}$, the space of Casimir operators
of degree $2$ is one-dimensional, spanned by $\Tr_\rho F^2$ for any
nontrivial representation $\rho$.

Our main result will be formulated with the aid of a specific basis
for this one-dimensional space, given by a particular
Casimir operator of degree $2$  which we will denote by
$F\mapsto \operatorname{tr} F^2$, or simply $\operatorname{tr} F^2$
(using lowercase $\operatorname{tr}$ to distinguish this particular
``normalized'' trace).
The normalization we use is a natural one, introduced in
\cite{MR0459426} (see also \cite{global6D}): if $G$ is the compact simple simply-connected
Lie group whose  Lie algebra is $\mathfrak{g}$, and
if we choose an appropriate generator of $\pi_3(G)\cong \mathbb{Z}$,
then for any group homomorphism $\varphi:\operatorname{SU}(2)\to G$
whose homotopy class is $N$ times the generator, the integral
\[ \frac1{32\pi^2}\int \tr \left(d\varphi(F)\wedge d\varphi(F)\right)\]
evaluates to $N$ when $F$ is the curvature of (any) four-dimensional
$\operatorname{SU}(2)$-instanton with instanton number $1$,
and $d\varphi: \mathfrak{su}(2)\to \mathfrak{g}$ is the induced
Lie algebra homomorphism.
The choice of generator of $\pi_3(G)$ used in this definition (i.e., the sign)
is determined by the additional
requirement that $\Tr_\rho F^2$ should be a positive multiple of
$\tr F^2$ for any nontrivial irreducible representation $\rho$.

\begin{table}
\begin{center}
\begin{tabular}{|c|ccccc|} \hline
$\mathfrak{g}$ & $\mathfrak{e}_6$ & $\mathfrak{e}_7$ & $\mathfrak{e}_8$ &
$\mathfrak{f}_4$ & $\mathfrak{g}_2$ \\ \hline
$\tr F^2$ & $\frac16\Tr_{\mathbf{27}}F^2$ & $\frac1{12}\Tr_{\mathbf{56}}F^2$
& $\frac1{60}\Tr_{\operatorname{adj}}F^2$ & $\frac16\Tr_{\mathbf{26}}F^2$
& $\frac12\Tr_{\mathbf{7}}F^2$
\\
\hline
\end{tabular}
\end{center}
\smallskip
\caption{Normalized traces for the exceptional algebras}\label{tab:normal-trace}
\end{table}

This ``normalized trace'' has been determined for all of the simple
non-abelian algebras,
in terms of traces in familiar representations.
For the exceptional simple algebras,
the normalized trace is described
in Table~\ref{tab:normal-trace}
in terms of the trace in the
irreducible representation of smallest dimension.
(We label irreducible representations of the exceptional
simple algebras by their dimension, in bold face.)
For $\mathfrak{su}(m)$ (resp.\
$\mathfrak{sp}(n)$), we have $\tr F^2= \Tr_{\operatorname{fund}} F^2$
in terms of
the fundamental representation ``$\operatorname{fund}$'', i.e., the standard
representation of complex dimension $m$ (resp.\ $2n$).
For $\mathfrak{so}(\ell)$, we have
$\tr F^2= \frac12 \Tr_{\operatorname{vect}} F^2$ in terms of the vector
representation ``$\operatorname{vect}$'' (i.e., the standard real
representation of real dimension $\ell$).
In addition,
for any positive integer
$k$ we define
 $\tr F^k = \Tr_{\operatorname{fund}} F^k$ when
$\mathfrak{g}\cong \mathfrak{su}(m)$ or
$\mathfrak{sp}(n)$, and
$\tr F^k= \frac12 \Tr_{\operatorname{vect}} F^k$
when $\mathfrak{g}\cong\mathfrak{so}(\ell)$; this agrees with the facts
about $\tr$ stated above in the case $k=2$.

\begin{table}[ht]
\begin{center}

\begin{tabular}{|c|c|c|c|} \hline
 $\mathfrak{g}$ & $\rho$ & $\Tr_\rho F^2$ & $\Tr_\rho F^4 $ \\ \hline
$\mathfrak{su} (m) $,& $\adj$&$2m\tr F^2$&$(m+6)(\tr F^2)^2  $ \\
$m=2$, $3$         &      $\fund$ &$\tr F^2$& $\frac12(\tr F^2)^2$ \\ \hline
 $\mathfrak{su} (m) $,& $\adj$ &$2m\tr F^2$&$6(\tr F^2)^2+2m\tr F^4  $ \\
$m\ge4$         &      $\fund$ &$\tr F^2$& $0(\tr F^2)^2+\tr F^4$ \\
     & $\Lambda^2$ &$(m-2)\tr F^2$&$3(\tr F^2)^2+(m-8)\tr F^4  $  \\
     & $\Lambda^3$ &$\frac12(m^2{-}5m{+}6)\tr F^2$&$(3m{-}12)(\tr F^2)^2+
\frac12(m^2{-}17m{+}54)\tr F^4  $  \\
 \hline
$\mathfrak{su}(8)$ & $\Lambda^4$ & $20\tr F^2$ & $18(\tr F^2)^2 - 16\tr F^4$ \\
\hline
 $\mathfrak{sp} (n) $,&$\adj$ &$(2n+2)\tr F^2$&$3(\tr F^2)^2+(2n+8)\tr F^4$\\
     $n\ge2$          &$\fund$ &$\tr F^2$&$0(\tr F^2)^2+\tr F^4$\\
     &$\Lambda^2_{\text{irr}}$ &$(2n-2)\tr F^2$&$3 (\tr F^2)^2+(2n-8)\tr F^4  $  \\
     &$\Lambda^3_{\text{irr}}$ &$(2n^2{-}5n{+}2)\tr F^2$&
$(6n{-}12) (\tr F^2)^2+(2n^2{-}17n{+}26)\tr F^4  $  \\
\hline
$\mathfrak{sp}(4)$ & $\Lambda^4_{\text{irr}}$ & $14\tr F^2$ & $15(\tr F^2)^2 - 16 \tr F^4$ \\ \hline
$\mathfrak{so}(\ell)$, & $\adj$ &$(2\ell-4)\tr F^2$&$12(\tr F^2)^2+(2\ell-16)\tr F^4  $ \\
$\ell\ge7$,       & $\vect$ &$2\tr F^2$&$0(\tr F^2)^2+2 \tr F^4$ \\
$\ell\ne8$        & $\spinrep_*$ & $\dim(\spinrep_*)(\frac14 \tr F^2)$ &
        $\dim(\spinrep_*)(\frac3{16} (\tr F^2)^2-\frac18 \tr F^4 ) $ \\ \hline
$\mathfrak{so}(8)$ & $\adj$ &$12\tr F^2$
  &$4 \Tr_{\vect} F^4 + 4 \Tr_{\spinrep_+} F^4 + 4 \Tr_{\spinrep_-} F^4$ \\
       & $\vect$ &$2\tr F^2$&$ \Tr_{\vect} F^4$ \\
        & $\spinrep_+$ & $2 \tr F^2$  &$ \Tr_{\spinrep_+} F^4$ \\
        & $\spinrep_-$ & $2 \tr F^2$  &$ \Tr_{\spinrep_-} F^4$ \\ \hline
 $\mathfrak{e}_6$ & $\adj$ &$24\tr F^2$&$18(\tr F^2)^2$ \\
       & $\mathbf{27}$ &$6\tr F^2$&$3 (\tr F^2)^2$\\ \hline
 $\mathfrak{e}_7$ & $\adj$ &$36\tr F^2$&$24(\tr F^2)^2$\\
       & $\mathbf{56}$ &$12\tr F^2$&$6 (\tr F^2)^2$\\ \hline
 $\mathfrak{e}_8$ & $\adj$ &$60\tr F^2$&$36(\tr F^2)^2$\\ \hline
 $\mathfrak{f}_4$ & $\adj$ &$18\tr F^2$&$15(\tr F^2)^2$\\
       & $\mathbf{26}$ &$6\tr F^2$&$3 (\tr F^2)^2$\\ \hline
 $\mathfrak{g}_2$ & $\adj$ &$8\tr F^2$&$10(\tr F^2)^2$ \\
        & $\mathbf{7}$ &$2\tr F^2$&$(\tr F^2)^2$ \\ \hline

\end{tabular}

\end{center}
\smallskip
\caption{Casimir operators of degrees $2$ and $4$.
}
\label{tab:D}
\end{table}

For $\mathfrak{g}= \mathfrak{su}(2)$, $\mathfrak{su}(3)$, $\mathfrak{g}_2$,
$\mathfrak{f}_4$, $\mathfrak{e}_6$, $\mathfrak{e}_7$, or $\mathfrak{e}_8$,
the space of Casimir operators of degree $4$ is one-dimensional, and
$(\tr F^2)^2$ provides a natural basis.  For all other simple
algebras except
$\mathfrak{so}(8)$, the space of Casimir operators of degree $4$ is
two-dimensional, with $(\tr F^2)^2$ and $\tr F^4$ providing
a basis.  For $\mathfrak{so}(8)$, the space of Casimir operators
of degree $4$ is three-dimensional, and we can take
$\Tr_{\vect} F^4$, $\Tr_{\spinrep_+} F^4$,
and $\Tr_{\spinrep_-} F^4$ as a basis for this space, where
$\vect$, $\spinrep_+$ and $\spinrep_-$ are the three $8$-dimensional
representations of $\mathfrak{so}(8)$ (permuted by
triality). The key relation in this latter case is
\[ (\tr F^2)^2 = \frac13\Tr_{\vect} F^4+\frac13\Tr_{\spinrep_+} F^4
+\frac13\Tr_{\spinrep_-} F^4\]
(verified in Appendix~\ref{app:Casimir}).

For any
representation $\rho$ of
a simple algebra
$\mathfrak{g}$, we
can then express $\Tr_\rho F^2$ as a multiple of $\tr F^2$, and
$\Tr_\rho F^4$ as a linear combination of the basis elements chosen above.
For the representations we consider here,
much of this data was worked out by Erler \cite{Erler:1993zy}, based
in part on earlier work of van Nieuwenhuizen \cite{MR1057343}.
We collect the information we need\footnote{We have
included a complete derivation of the data for $\mathfrak{so}(\ell)$
in Appendix~\ref{app:Casimir}, including the case $\ell=8$, since the special
features of that case seem to have been overlooked in the literature.
} in
Table~\ref{tab:D}.  In addition to the adjoint representation, and
the fundamental and vector representations mentioned above, our
analysis includes the anti-symmetric powers $\Lambda^k$ of
the fundamental representation in the case of $\mathfrak{su}(m)$,
the nontrivial  irreducible component $\Lambda^k_{\operatorname{irr}}$
of the $k^{\text{th}}$ anti-symmetric power
of the fundamental representation in the
case of $\mathfrak{sp}(n)$,
and the spinor representations $\spinrep_*$ in the case of $\mathfrak{so}(\ell)$.
(We use the notation $\spinrep_*$ to denote either
the unique spinor representation ``$\spinrep$'' when $\ell$ is odd,
or either of the two spinor representations ``$\spinrep_\pm$'' when $\ell$
is even.)

We say that two representations $\rho$ and $\rho'$ are
{\em Casimir equivalent in degrees $2$ and $4$}\/ if
  $\Tr_\rho F^k = \Tr_{\rho'} F^k$ for $k=2$, $4$.  Since the
the anomalies we consider only involve these degree $2$ and $4$
quantities, we can freely replace a representation with a Casimir-equivalent
one without affecting the validity of anomaly cancellation.
Two examples of this equivalence will be important for us in this paper.
First, for $\mathfrak{su}(m)$, $m\ge6$,
the representation $\rho_m=\Lambda^3 \oplus (\fund)^{\oplus (m^2-7m+10)/2}$
has invariants
\begin{align*}
\Tr_{\rho_m} F^2
&= \left( \frac12(m^2-5m+6) + \frac12(m^2-7m+10) \right) \tr F^2
= (m-4)(m-2)\tr F^2\\
\Tr_{\rho_m} F^4
&= \left( (3m-12) + 0 \right) (\tr F^2)^2
+ \left( \frac12(m^2-17m+54) + \frac12(m^2-7m+10)\right) \tr F^4\\
&= 3(m-4) (\tr F^2)^2 + (m-4)(m-8) \tr F^4
\end{align*}
which are the same as for $(\Lambda^2)^{\oplus (m-4)}$.
Similarly, for $\mathfrak{sp}(n)$, $n\ge3$, the
representation
$\rho_n = \Lambda^3_{\text{irr}} \oplus (\fund)^{\oplus (2n^2-7n+6)}$
has invariants
\begin{align*}
\Tr_{\rho_n} F^2
&= \left( (2n^2-5n+2) + (2n^2-7n+6) \right) \tr F^2
= (2n-4)(2n-2) \tr F^2\\
\Tr_{\rho_n} F^4
&= \left( (6n-12) + 0\right) (\tr F^2)^2
+ \left( (2n^2-17n+26) + (2n^2-7n+6)\right) \tr F^4\\
&= 3(2n-4) (\tr F^2)^2 + (2n-4)(2n-8) \tr F^4
\end{align*}
which are the same as for $(\Lambda^2_{\text{irr}})^{\oplus (2n-4)}$.
We summarize these equivalences for low values of $m$ and $n$ in
Table~\ref{tab:casimir-equivalences}, as well as some analogous
equivalences for $\Lambda^4$.

\begin{table}[ht]
\begin{center}

\begin{tabular}{|c|c|} \hline
$\mathfrak{g}$ & \\ \hline
$\mathfrak{su}(6)$ & $\Lambda^3 \oplus \fund^{\oplus2 } \sim (\Lambda^2)^{\oplus2}$ \\
$\mathfrak{su}(7)$ & $\Lambda^3 \oplus \fund^{\oplus5 } \sim (\Lambda^2)^{\oplus3}$ \\
$\mathfrak{su}(8)$ & $\Lambda^3 \oplus \fund^{\oplus9 } \sim (\Lambda^2)^{\oplus4}$ \\
$\mathfrak{su}(8)$ & $\Lambda^4 \oplus \fund^{\oplus16 } \sim (\Lambda^2)^{\oplus6}$ \\ \hline
$\mathfrak{sp}(3)$ & $\Lambda^3_{\text{irr}} \oplus \fund^{\oplus3} \sim (\Lambda_{\text{irr}}^2)^{\oplus2}$ \\
$\mathfrak{sp}(4)$ & $\Lambda^3_{\text{irr}} \oplus \fund^{\oplus10 } \sim (\Lambda_{\text{irr}}^2)^{\oplus4}$ \\
$\mathfrak{sp}(4)$ & $\Lambda^4_{\text{irr}} \oplus \fund^{\oplus16 } \sim (\Lambda_{\text{irr}}^2)^{\oplus5}$ \\
\hline
\end{tabular}
\end{center}
\smallskip
\caption{Casimir equivalences}\label{tab:casimir-equivalences}

\end{table}

There is also a useful quartic (or biquadratic) Casimir operator
when $\mathfrak{g}=\mathfrak{g}_L\oplus\mathfrak{g}_R$ is the
direct sum of two simple non-abelian Lie algebras, namely,
the product
$\tr_{\mathfrak{g}_L}(F^2_{\mathfrak{g}_L}) \tr_{\mathfrak{g}_R}(F^2_{\mathfrak{g}_R})$
of the normalized traces.  This can be compared with other
possible biquadratic Casimir operators (determined from representations)
in the following way.
Let $\rho$ be a representation of $\mathfrak{g}_L\oplus
\mathfrak{g}_R$.  If $\rho$ is irreducible, then there exist irreducible
representations $\rho_L$ of $\mathfrak{g}_L$ and $\rho_R$ of
$\mathfrak{g}_R$ such that $\rho=\rho_L\otimes\rho_R$.  In this case,
we define the {\em representation-multiplicity
of $\mathfrak{g}_L$ and $\mathfrak{g}_R$ at $\rho$}\/ to be
\begin{equation} \label{eq:repmult}
 \mu_\rho(\mathfrak{g}_L,\mathfrak{g}_R) =
\frac{\Tr_{\rho_L}(F^2_{\mathfrak{g}_L}) \Tr_{\rho_R}(F^2_{\mathfrak{g}_R})}
{\tr_{\mathfrak{g}_L}(F^2_{\mathfrak{g}_L}) \tr_{\mathfrak{g}_R}(F^2_{\mathfrak{g}_R})}.
\end{equation}
By the way we defined the normalized traces for $\mathfrak{g}_L$ and
$\mathfrak{g}_R$, $\mu_\rho(\mathfrak{g}_L,\mathfrak{g}_R)$
 is always a nonnegative integer.  It is zero if
one of $\rho_L$ and $\rho_R$ is trivial.

Still assuming that $\rho$ is irreducible, we can write a formula
for $\mu_\rho(\mathfrak{g}_L,\mathfrak{g}_R)$ directly in terms of
$\rho$, using the fact that $\Tr_{\rho_L\otimes\rho_R}(F^2_{\mathfrak{g}_L})
= \dim (\rho_R) \Tr_{\rho_L}(F^2_{\mathfrak{g}_L})$ and the fact
that $\dim(\rho) = \dim(\rho_L)\dim(\rho_R)$.  The formula is:
\begin{equation} \label{eq:repmultbis}
 \mu_\rho(\mathfrak{g}_L,\mathfrak{g}_R) =
\frac1{\dim{\rho}} \, \frac{\Tr_{\rho}(F^2_{\mathfrak{g}_L}) \Tr_{\rho}(F^2_{\mathfrak{g}_R})}
{\tr_{\mathfrak{g}_L}(F^2_{\mathfrak{g}_L}) \tr_{\mathfrak{g}_R}(F^2_{\mathfrak{g}_R})}.
\end{equation}
We extend this definition linearly, to arbitrary representations
of $\mathfrak{g}_L\oplus\mathfrak{g}_R$, and also to the representation
ring with $\mathbb{Q}$-coefficients
$R(\mathfrak{g}_L\oplus\mathfrak{g}_R)\otimes\mathbb{Q}$.

Note that the formula \eqref{eq:repmultbis} scales linearly
for multiples of an irreducible representation $\rho$:  for
$\rho^{\oplus k}$, each trace in the numerator is multiplied by
$k$, and the dimension in the denominator is also multiplied by
$k$, giving an overall scaling by $k$.

\section{The gauge algebra}\label{gaugealgebra}

We consider a nonsingular
elliptically fibered complex manifold $\pi: X \to B$, where $B$ is smooth, and  with a section; we denote by $E$ the general elliptic fiber of $\pi$.
Without loss of generality, we can assume that $\pi: X \to B$ is a resolution
of singularities
of a Weierstrass model $W\to B$ \cite{MR977771, alg-geom/9704008}.
$W$ can be described  by a Weierstrass
equation which locally can be written as:
\begin{equation}\label{W} y ^2= x^3 + fx +g, \end{equation}
where $f$ and $g$ are sections of appropriate line bundles\footnote{In fact, 
$\mathcal{O}_B(L)$ is the relative dualizing sheaf of the fibration,
and can be identified with the (extended) Hodge bundle.}
$\mathcal{O}_B(4L)$ and $\mathcal{O}_B(6L)$, respectively, on the base $B$.
We denote by $\Delta \subset B$ the ramification locus of $\pi$; $\Delta$ has codimension one, and it is defined by the equation $4f^3+27g^2$ (using
the standard conventions of the F-theory literature).

We are primarily interested in the case in which $X$ is a Calabi--Yau
manifold; this happens when $L=-K_B$ is the anticanonical bundle on the
base $B$.  Since much of our analysis can be formulated without making
this assumption, we shall do so, and only assume $L=-K_B$ when
strictly necessary.

We also assume that the fibration $X \to B$ is \emph{equidimensional}. It is known
in the Calabi--Yau case
\cite{alg-geom/9305003}
that at least for a base of dimension two, any Weierstrass
fibration can be partially  desingularized to a Calabi--Yau variety with
only $\mathbb{Q}$-factorial terminal
 singularities\footnote{Terminal singularities, which include
the familiar ordinary double points in dimension three occurring on conifolds,
are a natural class of singularities in birational geometry.
See for example \cite{nonspherI, Denef:2005mm} in the physics literature.}
in such a way that the  elliptic fibration  of the Calabi-Yau variety is
 equidimensional.%
\footnote{Note that there are examples with a base of dimension three
in which the fibration cannot be made equidimensional while preserving the
Calabi--Yau
condition on the total space \cite{codimthree,Denef:2005mm}.  The physics
in these cases is not completely understood.}
We are assuming here that this equidimensional Calabi--Yau variety is actually
\emph{nonsingular}, which is a nontrivial assumption. In fact,
there are $\mathbb{Q}$-factorial terminal singularities
which do not admit any (locally) Calabi--Yau desingularization, but
these  do not have a direct physical interpretation.

In addition, we assume that at the generic point $P$ of each codimension two
subvariety $\Gamma$ of $B$,
if we restrict the fibration to a
general local curve $C$
passing through $P$ and transverse to each component of the discriminant
 containing $\Gamma$,
then it is
a  \emph{minimal}\/ Weierstrass fibration.\footnote{A two-dimensional 
Weierstrass fibration is 
\emph{minimal at $P$} if one of the following conditions holds:
$\operatorname{ord}_{u=0}(f) <4$, or
$\operatorname{ord}_{u=0}(g) <6$, or
$\operatorname{ord}_{u=0}(4f^3+27g^2)<12$,
where $P$ is locally defined by the equation $u=0$.}  
Non-minimal cases correspond
to extremal transition points in the moduli space \cite{FCY2}, and
 involve light strings in the spectrum.  They can be analyzed from
the point of view of either of the two branches of the moduli space
which are coming together at the extremal transition point.
As a consequence, we do not lose anything essential by excluding them
from the present analysis.

\vskip 0.1in

A few remarks are in order: the existence of a section is a mild assumption.
In fact, if  $X$ is an elliptic fibration without a section, we can consider the associated Jacobian fibration (with section)
 $\pi_{\mathcal{J}}: \mathcal{J}(X) \to B$, where $\mathcal{J}(X)$ is still a
 Calabi-Yau threefold \cite{MR0387292,MR1242006,MR1272978}.
Then the only assumption is again that $\mathcal{J}(X)$  is smooth; in fact $\mathcal{J}(X)$ could  in principle have
  terminal (and not smooth) singularities, even if $B$ is smooth;
  however, we do not know of
 any such example.

In  the first  four columns of Table \ref{tab:kodaira} we list  Kodaira's classification of singular fibers in  the smooth resolution of a two-dimensional Weierstrass fibration.
Note that if we restrict a higher-dimensional fibration to a
general local curve $C$
passing through $P$ and transverse to each component of the discriminant
locus  containing $\Gamma$, the orders of vanishing of $f$, $g$ and $\Delta$ 
at $P$ do not necessarily determine the singular fiber in the original 
higher-dimensional fibration (cf.\ \cite{matter1,Esole:2011sm}).

\vskip 0.1in

Each F-theory compactification on an elliptically fibered manifold
$\pi:X\to B$
has an associated gauge group, which
can be
determined\footnote{For Calabi--Yau fourfolds, we are describing
the gauge group in the absence of flux.  This ``geometric'' gauge group
is determined in exactly the same way as for K3 surfaces or for Calabi--Yau
threefolds, although the actual gauge group may be smaller since some
of the gauge symmetry may be broken by flux.  We shall not discuss that
aspect of those models in this paper.}
 by compactifying on a circle and analyzing the
corresponding singular Calabi--Yau variety in M-theory.  The gauge
group is a compact reductive group $G$ whose component group
$\pi_0(G)$ coincides with the Tate--Shafarevich group $\sha_{X/B}$
of the fibration \cite{triples}, whose fundamental group
$\pi_1(G)$ coincides with the Mordell--Weil group $\operatorname{MW}(X/B)$
of the fibration \cite{pioneG}, and whose Lie algebra $\mathfrak{g}_{X/B}$
is determined by the singular fibers in codimension one as
described below.

Since the gauge algebra is a reductive Lie algebra,
it can be written as the direct sum
of an abelian Lie algebra and a finite number of simple Lie algebras.
The abelian part is given by $\pi_1(G)\otimes \mathbb{R}$, and so its
dimension coincides with the rank of the Mordell--Weil group.
In this paper we will only consider elliptic fibrations
$\pi:X\to B$ whose Mordell--Weil
group has rank $0$.   We denote the gauge algebra by 
$\mathfrak{g}_{X/B} = \bigoplus_\Sigma \mathfrak{g}(\Sigma)$
where the sum is taken over the components of the discriminant locus,
 or simply by $\mathfrak{g}$ if the meaning if clear.  There are some additional anomaly cancellation
conditions which must hold when there is a nontrivial abelian part of
the gauge algebra, and to simplify our discussion we will
not consider those here.\footnote{In general, one would need to 
add the abelian Lie algebra $\operatorname{MW}(X/B)\otimes\mathbb{R}$
to $\bigoplus_\Sigma \mathfrak{g}(\Sigma)$ in order to obtain the
full gauge algebra.}

Each summand of
the non-abelian part of the gauge algebra in an F-theory compactification
is associated to a component $\Sigma$ of the discriminant locus, and the gauge
algebra summand $\mathfrak{g}(\Sigma)$
for any given component depends on the generic type
of singular fiber of the Weierstrass model
along that component as well as the monodromy of the components of
the exceptional divisor when the singular fibers are resolved.  (We sometimes
denote the summand associated to a single component by $\mathfrak{g}$ when
there is no danger of confusion.) 
The singular fibers were classified by Kodaira \cite{MR0184257},
whereas the monodromy was determined in a systematic way by
Tate \cite{MR0393039} as part of
what is called Tate's algorithm.\footnote{In fact,
Tate was interested in a more general situation which also included
elliptic fibrations over fields of characteristic $p$.  Since we
only need to consider characteristic $0$ the algorithm we present here
is a slightly simplified version of the one given by Tate.}
Tate's algorithm has been discussed in the physics literature before
\cite{geom-gauge}, but the conditions which determine monodromy
were not
spelled out in full generality.

The Kodaira classification and Tate's algorithm only depend on
the generic behavior of the elliptic fibration along a particular component
$\Sigma$ of
the discriminant locus, and describe the behavior of the fibration
near that component.
In a sufficiently small open subset of the base $B$, the discriminant
component $\Sigma$ can be described as $\{z=0\}$;
the relevant data for the algorithm will then be (i) the orders of vanishing
along $\Sigma=\{z=0\}$ of the coefficients $f$ and $g$ in the Weierstrass
equation $ y^2=x^3+fx+g$ and of the discriminant $\Delta=4f^3+27g^2$, and (ii)
the quantities
\[
\left.\frac{f}{z^{\operatorname{ord}_\Sigma(f)}}\right|_{z=0}, \quad
\left.\frac{g}{z^{\operatorname{ord}_\Sigma(g)}}\right|_{z=0}, \quad
\text{and} \quad
\left.\frac{\Delta}{z^{\operatorname{ord}_\Sigma(\Delta)}}\right|_{z=0},
\]
which can interpreted  as generically defined meromorphic
sections of appropriate line bundles on $\Sigma$.

\begin{table}
{\footnotesize
\begin{center}
\begin{tabular}{|c|c|c|c|c|c|} \hline
&$\operatorname{ord}_\Sigma(f)$&$\operatorname{ord}_\Sigma(g)$
&$\operatorname{ord}_\Sigma(\Delta)$
&eqn.\ of monodromy cover&$\mathfrak{g}(\Sigma)$\\ \hline\hline
$I_0$&$\ge0$&$\ge0$&$0$& --
& -- \\ \hline
$I_1$&$0$&$0$&$1$&
-- & -- \\ \hline
$I_2$&$0$&$0$&$2$&
-- & $\mathfrak{su}(2)$ \\ \hline
$I_m$, $m\ge3$&$0$&$0$&$m$&$\psi^2+(9g/2f)|_{z=0}$
& $\mathfrak{sp}(\left[{\frac m2}\right])$ or $\mathfrak{su}(m)$\\ \hline
$II$&$\ge1 $&$   1  $&$    2 $&  --
& -- \\ \hline
$III$&$  1 $&$   \ge2 $&$   3 $& --
& $\mathfrak{su}(2)$ \\ \hline
$IV$&$ \ge2 $&$  2  $&$    4 $&$\psi^2-(g/z^2)|_{z=0}$
& $\mathfrak{sp}(1)$ or $\mathfrak{su}(3)$  \\ \hline
$I_0^*$&$\ge2$&$\ge3$&$6$&
$\psi^3+(f/z^2)|_{z=0}\cdot\psi+(g/z^3)|_{z=0}$
& $\mathfrak{g}_2$ or $\mathfrak{so}(7)$ or $\mathfrak{so}(8)$
\\ \hline
$I_{2n-5}^*$, $n\ge3$&$2$&$3$&$2n+1$&$\psi^2+\frac14(\Delta/z^{2n+1})(2zf/9g)^3|_{z=0}$
& $\mathfrak{so}(4n{-}3)$ or $\mathfrak{so}(4n{-}2)$ \\ \hline
$I_{2n-4}^*$, $n\ge3$&$2$&$3$&$2n+2$&$\psi^2+(\Delta/z^{2n+2})(2zf/9g)^2|_{z=0}$
& $\mathfrak{so}(4n{-}1)$ or $\mathfrak{so}(4n)$ \\ \hline
$IV^*$&$\ge3$&$  4$  &$  8$& $\psi^2-(g/z^4)|_{z=0}$
& $\mathfrak{f}_4$ or $\mathfrak{e}_6$ \\ \hline
$III^*$&$  3 $&$   \ge5 $&$   9 $& --
& $\mathfrak{e}_7$ \\ \hline
$II^*$&$ \ge4$&$   5   $&$   10 $& --
& $\mathfrak{e}_8$ \\ \hline
non-min.\ &$\ge4$&$\ge6$&$\ge12$&--&--\\ \hline
\end{tabular}
\end{center}
\smallskip
\caption{Kodaira--Tate classification of singular fibers, monodromy covers,
and gauge algebras}\label{tab:kodaira}
}
\end{table}

Kodaira's classification and Tate's monodromy refinement of it
are presented in Table~\ref{tab:kodaira}.  The type of singularity
is determined by the orders of vanishing of $f$, $g$,
and $\Delta$ along $\Sigma$.  The monodromy is described (in cases where
it is relevant) by defining a certain {\em monodromy cover}\/ of $\Sigma$
by means of a polynomial of degree $2$ or $3$ in an auxiliary variable
$\psi$, which is a meromorphic section of a certain line bundle over $\Sigma$;
the lines bundles are displayed
 in Table~\ref{tab:psi}.
The key question for determining the gauge algebra
is how many irreducible components this monodromy
cover has.  When the polynomial has degree $2$, this amounts to asking
whether the discriminant of the quadratic equation (which is a local meromorphic
function on $\Sigma$) has a square root or not.
One gets a smaller gauge algebra when the square root does not exist and
the monodromy cover is irreducible, and a larger gauge algebra when
the square root does exist and the monodromy cover is reducible.  Both
algebras are listed in the final column of the table.

The defining polynomial of the monodromy cover has
degree $3$ only in the case of Kodaira type $I_0^*$.
In that case, if the cover is irreducible,
the gauge algebra is $\mathfrak{g}_2$; if the cover has two components, the
gauge algebra is $\mathfrak{so}(7)$; and if the cover has three components,
the gauge algebra is $\mathfrak{so}(8)$.

\begin{table}
{
\begin{center}
\begin{tabular}{|c|c|c|c|c|c|} \hline
&$\operatorname{ord}_\Sigma(f)$&$\operatorname{ord}_\Sigma(g)$
&$\operatorname{ord}_\Sigma(\Delta)$
&line bundle for $\psi$\\ \hline\hline
$I_m$, $m\ge3$&$0$&$0$&$m$&$\mathcal{O}_\Sigma(L)$
\\ \hline
$IV$&$ \ge2 $&$  2  $&$    4 $&$\mathcal{O}_\Sigma(3L-\Sigma)$
\\ \hline
$I_0^*$&$\ge2$&$\ge3$&$6$& $\mathcal{O}_\Sigma(2L-\Sigma)$
\\ \hline
$I_{2n-5}^*$, $n\ge3$&$2$&$3$&$2n+1$&$\mathcal{O}_\Sigma(3L-(n{-}1)\Sigma)$
\\ \hline
$I_{2n-4}^*$, $n\ge3$&$2$&$3$&$2n+2$&$\mathcal{O}_\Sigma(4L-n\Sigma)$
\\ \hline
$IV^*$&$\ge3$&$  4$  &$  8$& $\mathcal{O}_\Sigma(3L-2\Sigma)$
\\ \hline
\end{tabular}
\end{center}
\smallskip
\caption{The line bundle of which $\psi$ is a meromorphic section, in
cases with a monodromy cover}\label{tab:psi}
}
\end{table}

The final row of Table~\ref{tab:kodaira}
gives the divisibility conditions which lead to a
non-minimal Weierstrass model (see Section \ref{gaugealgebra}).  In such a case, replacing $(x,y)$
by $(xz^2,yz^3)$ leads to a new Weierstrass equation
\[ (yz^3)^2 = (xz^2)^3 + (xz^2)f + g,\]
or
\[ y^2 = x^3 + x (f/z^4) + (g/z^6)\]
in which
the coefficients $(f,g)$ have been replaced by $(f/z^4,g/z^6)$
(and $\Delta$ has been replaced with $\Delta/z^{12}$).  One can
then apply the Kodaira--Tate algorithm to the new model instead.
Notice that this change affects the canonical bundle, and hence the
Calabi--Yau condition on the total space.

In Appendix \ref{app:Tate}, we present the detailed derivation
of the information in Table~\ref{tab:kodaira}.

In the case of $I^*_{m-4}$, our formulation of the monodromy
condition in terms of $f$, $g$, $\Delta$ and an appropriate power
of the local equation of $\Sigma$ does not seem to have been
stated explicitly in the literature (even in the number theory
literature, where it would also be relevant).  However, it does
seem to be known: one of the steps in Algorithm 7.5.1 of
\cite{MR1228206} appears to rely on this formulation.

Some aspects of the detailed geometry associated with these gauge
algebras are not directly visible in
Table~\ref{tab:kodaira}.  In fact, for each entry in the table,
there is both a
gauge algebra $\mathfrak{g}(\Sigma)$ and
a {\em covering algebra}\/ $\widetilde{\mathfrak{g}}(\Sigma)$,
together with an action
on $\widetilde{\mathfrak{g}}(\Sigma)$ by the monodromy group $\Gamma$
such that the fixed subspace coincides with the gauge algebra:
\[ (\widetilde{\mathfrak{g}}(\Sigma))^\Gamma=\mathfrak{g}(\Sigma).\]
The covering algebra is simply the algebra associated with the same
Kodaira type but with no monodromy.
As we will explain in the next Section, the action of
$\Gamma$ on the covering algebra preserves
both the set of simple roots
and the
 set of positive roots.
The orbits of the action on the set of positive roots of
the covering algebra $\widetilde{\mathfrak{g}}(\Sigma)$  
coincide with the positive
roots of the gauge algebra $\mathfrak{g}(\Sigma)$.

\section{The matter representation and the virtual matter cycle}\label{sec:matter}

An elliptic fibration with a base of dimension at least two also
has a ``matter representation''  (i.e., the matter representation appropriate
to F-theory or M-theory), with contributions (1) from the components $\Sigma$
of the discriminant locus, (2) from the singular locus $\Sing{\Sigma}$ of
each component $\Sigma$, and (3) from the components $\Gamma_\Sigma$
of each
 {\em residual discriminant locus,}
which is the zero-set of $\deltaSigma:=\Delta/z^{\ord_\Sigma(\Delta)}$
for any local
equation $z=0$ defining $\Sigma$ (away from its singular locus).
Note that the components $\Gamma_\Sigma$ include all pairwise
intersections of components $\Sigma\cap\Sigma'$.

If the base of the elliptic fibration has dimension two, the matter
representation is a quaternionic representation of the gauge algebra 
$\mathfrak{g}:=\mathfrak{g}_{X/B}$.
We follow the tradition of describing this matter representation by
means of a complex representation $\rho: \mathfrak{g} \to \mathfrak{gl}(V)$ which determines
the associated quaternionic representation as $\rho \oplus \overline{\rho}$.
This traditional notation has the drawback that if an irreducible
 complex representation $\tau$
is itself quaternionic, there is  no way to write it in
the form $\rho \oplus \overline{\rho}$.  We can, however, formally write
it as $\frac12\tau \oplus \frac12\overline{\tau}$ since $\tau \cong
\overline{\tau}$.  Thus, the complex matter representation should actually
be treated as an element of $R(\mathfrak{g})\otimes \mathbb{Q}$, the
representation ring with rational coefficients.
We refer to an element $\rho\in R(\mathfrak{g})\otimes\mathbb{Q}$
as a {\em pre-quaternionic representation}\/ if
$\rho \oplus \overline{\rho} \in R(\mathfrak{g})$, i.e., if
$\rho \oplus \overline{\rho}$ is an actual (quaternionic) representation,
not just a
$\mathbb{Q}$-linear combination of representations.
(One says that the corresponding six-dimensional theory
contains {\em half-hypermultiplets in representation $\rho$.})

On the other hand,
if the base of the elliptic fibration has dimension greater than two, the
physical matter representation is a complex representation of $\mathfrak{g}$
rather than a quaternionic representation.  The description we give here of
a complex representation whose irreducible constituents are all associated
with algebraic cycles of codimension two on the base is an important ingredient
in the full description of the matter representation in such cases,
which also involves a topologically twisted
gauge theory \cite{Donagi:2008ca,Beasley:2008dc}.
In particular, the geometric description we give here
specifies potential contributions to the matter representation, but an
additional quantization problem must be solved in order to determine
the multiplicity of such contributions (or even if they are present at all).

Let $\rho_{\text{matter}}$ be the complex matter representation of the gauge algebra
$\mathfrak{g}$ of a six-dimensional F-theory model.  Since the gauge fields
transform in the adjoint representation, the gauge and mixed anomalies of the theory
involve the virtual representation
\[ \rho_{\text{virtual}} := \rho_{\text{matter}} - \adj_{\mathfrak{g}} \in R(\mathfrak{g})\otimes \mathbb{Q},\]
which we call the {\em virtual matter representation}.  (We work with
$\mathbb{Q}$-coefficients here because of the possibility of
half-hypermultiplets in $\rho_{\text{matter}}$.)
This combination
is also the one which has a natural geometric interpretation.
As we will show, each irreducible component of
$ \rho_{\text{virtual}}$
is associated with an algebraic cycle of codimension two.
By pairing each component with the corresponding cycle, we get a
{\em virtual matter cycle}
\[ Z_{\text{virtual}} \in A^2(B) \otimes R(\mathfrak{g})\otimes \mathbb{Q},\]
where $A^2(B)$ denotes the Chow group of codimension two cycles on the
base $B$ modulo rational equivalence.\footnote{As we will see in the
construction, the algebraic cycle will itself have rational coefficients
in certain cases.}  To read off the actual
matter representation in six-dimensional F-theory,\footnote{In
four-dimensional F-theory models, the procedure is more complicated.}
 we just take the
degree and add a copy of the adjoint representation:
\[ \rho_{\text{matter}} = \adj_{\mathfrak{g}} \oplus
\operatorname{degree}(Z_{\text{virtual}}).\]

We will show in
section~\ref{2}
that anomaly cancellation for these
six-dimensional theories follows from a relation in this ``Chow group
with coefficients in the representation ring of $\mathfrak{g}$.''
That relation itself, though, is
not restricted to six-dimensional theories, and also holds when
the base of the elliptic fibration has dimension greater than two.

The matter representation for six-dimensional theories
gets contributions from rational curves which are components of fibers
in the Calabi--Yau resolution of a Weierstrass elliptic fibration.
The deformation spaces for such rational curves can be either one-dimensional
or zero-dimensional, and Witten \cite{WitMF} analyzed the contribution
to the matter representation in each case.  We follow Witten's analysis
for one-dimensional deformation spaces,
with some slight modifications in light of \cite{LieF}.  For the case of
zero-dimensional deformation spaces, it is easier in practice to use
the methods of \cite{Katz:1996xe} instead.

We first analyze the contributions from connected rational curves
contracted by $\varphi: X \to W$ which have nontrivial
deformations on $X$.  Such a rational curve $C$ is
rationally equivalent to a connected linear combination of
components of the fiber over a general point $P\in \Sigma$ for
some component $\Sigma$ of the discriminant locus.
That is, $C$ corresponds to a positive root $r$ of the covering algebra
$\widetilde{\mathfrak{g}}(\Sigma)$.

Now by following any closed loop on $\Sigma$, the curve $C$ may be
transported to another rational curve $C'$: this is the monodromy
action, and it comes from an action of the monodromy group $\Gamma$
on the roots of $\widetilde{\mathfrak{g}}(\Sigma)$.
If the orbit of a root $r$ under the
monodromy action contains $d$ elements, then the deformation space
of the corresponding curve $C$ will be a connected $d$-sheeted cover of $\Sigma$.

Thus, if there is no monodromy, each positive root will have $\Sigma$
itself as parameter curve.  By Witten's analysis, there is then
a contribution of $g=g(\Sigma)$ hypermultiplets in the adjoint
representation of $\mathfrak{g}(\Sigma)$ to the matter representation; here $g(\Sigma)$ is the geometric genus of $\Sigma$, that is, the genus of its normalization.

In our virtual matter representation, we subtract the vector multiplet
in the adjoint representation to obtain
$g-1=\frac12\deg((K_B+\Sigma)|_\Sigma) - \frac12 \sum_P \mu_P(\mu_P-1)$ copies  of the adjoint representation, where $\mu_P$ are the multiplicity of all the points in $\Sigma$ including the ``infinitely near points."\footnote{If $P \in B$ is a singular point of $\Sigma$ there is a sequence of blow ups $\phi= \varphi _n \circ \cdots \circ \varphi_1: \ {\tilde B} \to B$ such that $\tilde \Sigma$ the (strict) transform of $\Sigma$  is smooth around $\phi^{-1}(P)$. The infinitely near points $P$ are the points $Q$ mapped to $P$ by $\phi$.} We denote by $S$ the cycle $  \sum_P \mu_P(\mu_P-1)P$, which is supported on the singular locus of the component $\Sigma$.

The non-local contribution to the virtual matter cycle is then
defined to have two parts:
\[ \frac12((K_B+\Sigma)|_\Sigma)-S) \otimes \rho_\alpha,\]
(where $\rho_\alpha$ is the adjoint representation of $\mathfrak{g}(\Sigma)$) 
represents
the truly non-local contribution; the other part
\[ \frac12(S)\otimes \rho_\alpha,\]
is actually part of the local contribution, supported at
the singular locus of $\Sigma$.

If there is monodromy, then it turns out that there is an integer $d>1$
 such that  the
positive roots of the covering algebra can be divided into two classes:
the ones invariant under monodromy with parameter space $\Sigma$,
and the ones not invariant under monodromy with parameter
space $\widetilde{\Sigma}$ (the same parameter space for all such roots), 
which is a connected branched cover
$\widetilde{\Sigma}\to\Sigma$ of degree $d$.
Let $R$ be the ramification divisor
of
the branched cover,
and assume for simplicity that
all branching is simple.\footnote{It is easy to modify the analysis
in the case of non-simple branching, which can only occur for $d>2$.}
By the Riemann--Hurwitz formula, the genus  $\widetilde{g}:=g(
\widetilde{\Sigma})$ satisfies
\[ \widetilde{g} -1 = d(g-1) + \frac12\deg(R),\]
where $g$ is the  (geometric) genus of $\Sigma$.    Thus,
\[ \widetilde{g}-g = (d-1)(g-1) + \frac12\deg(R).\]

Witten's analysis says that each invariant root contributes
a space of dimension $g(\Sigma)-1$
to the virtual matter representation, and each non-invariant root
contributes\footnote{There is one subtlety here, in the case
of ${\widetilde{\mathfrak{g}}}(\Sigma)=\mathfrak{su}(2n+1)$ and $\mathfrak{g}(\Sigma)=
\mathfrak{sp}(n)$: some of the non-invariant roots do not appear
to have a vector multiplet in their spectrum.  However, as argued in
\cite{LieF}, the gauge algebra must be $\mathfrak{sp}(n)$ and
this implies that the ``index'' contribution to the virtual matter
representation must still have dimension $\widetilde{g}({\Sigma})-1$.}
a space of dimension $g(\widetilde{\Sigma})-1$.
To see how these contributions to the virtual matter representation
transform under the gauge algebra,
 we decompose the adjoint representation of $\widetilde{\mathfrak{g}}(\Sigma)$
as a representation of $\mathfrak{g}({\Sigma})$.

In all cases, we find 
\begin{equation} \label{eq:rho0}
 \adj_{\widetilde{\mathfrak{g}}(\Sigma)}
= \adj_{\mathfrak{g}(\Sigma)} \oplus \rho_0^{\oplus (d-1)}
\end{equation}
for some representation $\rho_0$ which is easily calculated
(see Table~\ref{tab:monodromy}).
Both kinds of roots contribute to the adjoint representation of
$\mathfrak{g}(\Sigma)$, but only the non-fixed roots contribute to the
representation $\rho_0$.  Thus, this part of the virtual matter representation
takes the form
\[
 (g-1) \cdot \adj_{\mathfrak{g}(\Sigma)} + (\widetilde{g}-g)\cdot\rho_0 =
 (g-1) \cdot \adj_{\mathfrak{g}(\Sigma)} + ((d-1)(g-1) + \frac12\deg(R))\cdot\rho_0
\]
as an element of $R(\mathfrak{g}(\Sigma))\otimes \mathbb{Q}$.
It is convenient to rewrite this as
\[ (g-1)\cdot (\adj_{\mathfrak{g}(\Sigma)} + (d-1)\cdot \rho_0) + \frac12\deg(R)\cdot\rho_0
\in R(\mathfrak{g}(\Sigma))\otimes \mathbb{Q}.\]
We then define one part of the nonlocal contribution to the
virtual matter cycle to be
\[ \frac12((K_B+\Sigma)|_\Sigma)-S) \otimes \rho_\alpha,\]
where
\begin{equation} \label{eq:rhoalpha}
\rho_\alpha := \adj_{\mathfrak{g}(\Sigma)} \oplus  \rho_0^{\oplus (d-1)},
\end{equation}
and the other part to be
\[ R   \otimes \frac12 \rho_0 + S \otimes \frac12 \rho_\alpha.\]

That is, we formally represent the first term in
this second part of the nonlocal contribution
as if each point in the ramification divisor carried a representation
of $\frac12\rho_0$.  As we will see in Section~\ref{sec:examples},
in all known cases the virtual matter representation associated to a ramification
point can be written in the form $\frac12\rho_0+\rho'$, where $\rho'\in
R(\mathfrak{g})$ is
a pre-quaternionic representation, often empty.
Thus,  this interpretation is sensible: $\rho'$ represents the truly
local matter contribution at such points.

\begin{table}
\begin{center}
\begin{tabular}{|c|c|c|c|c|} \hline
Type & $\mathfrak{g}(\Sigma)$ & $\widetilde{\mathfrak{g}}(\Sigma)$ & $d$ & $\rho_0$ \\ \hline
$I_3$ or $IV$ & $\mathfrak{sp}(1)$ & $\mathfrak{su}(3)$ & $2$ & $\fund^{\oplus 2}$ \\
$I_{2n}$, $n\ge2$ & $\mathfrak{sp}(n)$ & $\mathfrak{su}(2n)$ & $2$ & $\Lambda^2_{\text{irr}}$ \\
$I_{2n+1}$, $n\ge1$ & $\mathfrak{sp}(n)$ & $\mathfrak{su}(2n+1)$ & $2$ & $\Lambda^2_{\text{irr}}\oplus \fund^{\oplus 2}$ \\
$I_0^*$ & $\mathfrak{g}_2$& $\mathfrak{so}(8)$ & $3$ & $\mathbf{7}$ \\
$I_{m-4}^*$, $m\ge4$ & $\mathfrak{so}(2m-1)$ & $\mathfrak{so}(2m)$ & $2$& $\vect$ \\
$IV^*$ & $\mathfrak{f}_4$ & $\mathfrak{e}_6$ & $2$ & $\mathbf{26}$ \\
\hline
\end{tabular}
\smallskip
\caption{Monodromy and the representation $\rho_0$} \label{tab:monodromy}
\end{center}
\end{table}

We summarize the data about monodromy and the matter representation
$\rho_0$ in
Table~\ref{tab:monodromy}, which shows the algebras
$\mathfrak{g}(\Sigma)$ and $\widetilde{\mathfrak{g}}(\Sigma)$, the degree $d$ of
the monodromy cover, and the representation $\rho_0$ (which is
easily calculated from a reference such as \cite{Slansky:1981yr}
or \cite{MR604363}, and was already given in \cite{grouprep}).

In addition to the rational curves which move in families, there are
isolated rational curves.  If all of these curves $C_j$ have been calculated,
then the representations they form can be determined from the intersection
data $D_i\cdot C_j$ which measures the charges of these classes under
the Cartan subalgebra of the gauge algebra \cite{fiveDgauge,LieF},
and hence determines the weights of $\mathfrak{g}$ occurring in the
representation.
(Here, the $D_i$ are the
components of the inverse images of the various components $\Sigma$.)
In practice, though, the method of Katz and Vafa allows the
matter representation
to be determined more quickly.  The equivalence of these two approaches
has not been checked in every case, but see
\cite{matter1,Esole:2011sm}
for recent progress in this direction.

Here is the Katz--Vafa method.  Given a point $P\in \Sigma$
such that $\pi^{-1}(P)$ contains rational
curves which do not move in families, take a general small disk
$D$ in the base
which meets $\Sigma$ at $P$, and consider the family of elliptic curves
over $D$.  Our key assumption  (see Section \ref{gaugealgebra})
implies that this will never have a non-minimal
Kodaira fiber; thus, there is a particular Kodaira fiber over $P$ on
the Weierstrass model, and the inverse image of $D$ on the Weierstrass
model is a surface with a rational double point.\footnote{For clarity,
we stress that here we are referring to ``Kodaira type'' as determined
by orders of vanishing as in Table~\ref{tab:kodaira}, and not assuming
that the fiber in the nonsingular model is one of Kodaira's fibers.
In fact, in codimension at least two on the base, the fiber need
not coincide with Kodaira's fibers \cite{MR615858,matter1,Esole:2011sm}.}

Now consider a deformation of $D$, i.e.,
a family of disks $D_t$ meeting $\Sigma$ at a variable point
$P_t$ with $D_0=D$.  By the Brieskorn--Grothendieck theorem
\cite{MR0437798,MR584445}, after an appropriate base change $t=\tau^\ell$,
the singularities on $\pi^{-1}(D_{\tau^\ell})$ can be simultaneously resolved.
Moreover, there is a versal space $V$ of deformations of $\pi^{-1}(D_0)$ which
can be simultaneously resolved, and the parameter curve $\{\tau\}$ of
our base-changed family will map to $V$.  Let $m$ be the ramification
of that map at the origin.  Then locally, $t=s^k$ where $s$ is the coordinate
on a local disk within $V$ and $k=\ell/m$.

The Kodaira fiber  over $P=P_0$ determines a
covering algebra $\widetilde{\mathfrak{g}}_P$, and the Kodaira
fiber over $P_{s^k}$ determines a generic
covering algebra $\widetilde{\mathfrak{g}}(\Sigma)$,
with an embedding $\widetilde{\mathfrak{g}}(\Sigma)\subset
\widetilde{\mathfrak{g}}_P$.

The Katz--Vafa method says: there is
a complex representation $\rhoP$ of $\widetilde{\mathfrak{g}}(\Sigma)$
such that the adjoint representation of
$\widetilde{\mathfrak{g}}_P$ decomposes under
$\widetilde{\mathfrak{g}}(\Sigma)$ as
\[ \adj_{\widetilde{\mathfrak{g}}_P} = \adj_{\widetilde{\mathfrak{g}}(\Sigma)} \oplus
\rhoP \oplus \overline{\rhoP} \oplus
\mathbf{1}^{\oplus(\operatorname{rank}(\widetilde{\mathfrak{g}}_P)-
 \operatorname{rank}(\widetilde{\mathfrak{g}}(\Sigma)))};\]
the corresponding local contribution to the matter representation is
the charged part\footnote{Recall (see for example \cite{grouprep}) that the charged part is defined as follows: Let $\rho$ be a representation of a Lie algebra
$\mathfrak{g}$,
with Cartan subalgebra $\mathfrak{h}$.
The {\em charged dimension of $\rho$} is
$(\dim \rho)_{ch}= \dim (\rho)- \dim (ker \rho|_{\mathfrak{h}})$.}
 of $\frac1k\cdot (\rhoP|_{\mathfrak{g}(\Sigma)})$.

(As we will see in examples, when $P$ is a ramification point
for monodromy, this representation naturally contains $\frac12\rho_0$
as a summand.)
As a cycle on $\Sigma$, we define the local contribution
to be the charged part of
\[ \frac1k\cdot P \otimes (\rhoP|_{\mathfrak{g}(\Sigma)}).\]
Note that the representation $\rhoP|_{\mathfrak{g}(\Sigma)}$ could
{\em a priori}\/  have a summand which is neutral under the gauge
algebra $\mathfrak{g}(\Sigma)$; we exclude any such summands from
the local contribution to the matter representation.

The Katz--Vafa method is in fact ambiguous in at least one case, as observed
in \cite{matter1}, since there are two inequivalent embeddings of $A_7$
into $E_8$ \cite{MR0047629,MR1104782}.
We will discuss this case further in Section~\ref{sec:examples}.

\bigskip

There are several  ways to extract information from the
virtual matter cycle, which will prove useful in checking anomaly
cancellation.  First, for any component $\Sigma$ of the discriminant
locus which contributes a non-abelian summand $\mathfrak{g}(\Sigma)$
to the gauge algebra, we can restrict the matter representation
or the matter cycle to  $\mathfrak{g}(\Sigma)$:
\[ Z_{\text{virtual}}|_{\mathfrak{g}(\Sigma)},\]
obtaining a cycle which will involve representations of  $\mathfrak{g}(\Sigma)$
only. 

Second, for any pair of distinct components $\Sigma\ne\Sigma'$
of the discriminant locus which each contribute a non-abelian summand
to the gauge algebra, we can compute the representation-multiplicity
of  $\mathfrak{g}(\Sigma)$ and  $\mathfrak{g}(\Sigma')$ at
the cycle $Z_{\text{virtual}}$:
\[ \mu_{Z_{\text{virtual}}}(\mathfrak{g}(\Sigma),\mathfrak{g}(\Sigma')).\]
Here, we have extended the definition of representation-multiplicity
in a natural way to include the case of
an algebraic cycle $Z$ whose coefficients
are representations, i.e., to elements of
$A^2(B)\otimes
R(\mathfrak{g}_L\otimes\mathfrak{g}_R)\otimes\mathbb{Q}$.
In this extension, the representation-multiplicity
$\mu_Z(\mathfrak{g},\mathfrak{g}')$ is an ordinary cycle, i.e.,
a linear combination of subvarieties with numerical coefficients.

\section{The Tate cycle for $\Sigma$}\label{sec:Tatecycle}

For each component $\Sigma$ of the discriminant locus corresponding
to a non-abelian summand of the gauge algebra,
we now introduce another cycle of codimension two which we call the
{\em Tate cycle for $\Sigma$},
since its definition is closely related to Tate's algorithm.
The part of the anomaly cancellation condition involving $\mathfrak{g}(\Sigma)$
will be satisfied if
the restriction of the virtual matter cycle to $\mathfrak{g}(\Sigma)$
is rationally equivalent to the Tate cycle
for $\Sigma$, or more generally, if the two are Casimir equivalent
in degrees $2$ and $4$.
This
 formulation becomes a criterion which can be checked locally,
for many specific kinds of degenerate fibers in elliptic fibrations.
Sections~\ref{sec:examples} and \ref{sec:about}
will be devoted to checking this criterion
in a wide variety of cases.

Tate's algorithm, reviewed in Appendix~\ref{app:Tate}, starts from a
Weierstrass model of an elliptic fibration and, making changes of coordinates
that involve rational functions on the base as well as appropriate
relative blowups, finds expressions in which it is possible to determine
things such as monodromy of components of the exceptional divisor of
the blowups.  Here, we adapt that algorithm and produce certain rational
sections of line bundles on $\Sigma$ which can be used to mimic Tate's
forms of the equations.  The zeros and poles of those rational sections
are then associated to specific representations of the gauge algebra,
producing an algebraic cycle of codimension two with coefficients in
the representation ring $R(\mathfrak{g})\otimes\mathbb{Q}$,
as we previously did with
the virtual matter cycle.  As we will see, it  can be verified that
anomaly cancellation holds whenever the virtual matter cycle is equal to the
Tate cycle for all components $\Sigma$
(and a certain condition holds on intersections between pairs
of components of the discriminant locus).

\begin{table}[t]
\begin{center}

\begin{tabular}{||c||c|c||c||} \hline
Type & $\betaSigma$ & $\gammaSigma$ & $\deltaSigma$ \\ \hline
$I_2$ & -- & $\deltaSigma (4f^2/81g^2)|_\Sigma$  & $\gammaSigma (-9g/2f)^2|_\Sigma$\\ \hline
$I_m$, $m\ge3$ & $(-9g/2f)|_{\Sigma}$ & $\deltaSigma/\betaSigma^2$ & $\betaSigma^2\gammaSigma$\\ \hline
$III$ & -- & $(f/z)|_{\Sigma}$ &  $4\gammaSigma^3$ \\ \hline
$IV$ &  $(g/z^2)|_{\Sigma}$ & -- & $27\betaSigma^2$ \\ \hline
$I_0^*$ & \multicolumn{3}{|c||}{special case} \\ \hline
$I_{2n-5}^*$, $n\ge3$ & $\deltaSigma/\gammaSigma^3$ & $(-9g/2zf)|_{\Sigma}$ & $\betaSigma\gammaSigma^3$ \\ \hline
$I_{2n-4}^*$, $n\ge3$ & $-\deltaSigma/\gammaSigma^2$ & $(-9g/2zf)|_{\Sigma}$ & $-\betaSigma\gammaSigma^2$ \\ \hline
$IV^*$ & $(g/z^4)|_{\Sigma}$ & -- & $27\betaSigma^2$ \\ \hline
$III^*$ &  -- & $(f/z^3)|_{\Sigma}$  &$4\gammaSigma^3$ \\ \hline
$II^*$ & -- & $(g/z^5)|_{\Sigma}$  &$27\gammaSigma^2$ \\ \hline
\end{tabular}
\smallskip
\caption{Main construction} \label{tab:ab}
\end{center}
\end{table}

Our starting point is an elliptic fibration in Weierstrass form
\[y^2=x^3+fx+g.\]
We assume that $\Sigma$ is a  component of the discriminant
locus $\Delta=0$ of multiplicity $m_\Sigma$, and let $z=0$ be a local defining
equation for $\Sigma$ around a nonsingular point of $\Sigma$.  We let $\deltaSigma = (\Delta/z^{m_\Sigma})|_\Sigma$
be the residual discriminant, a section of $(\LB{12}-m_\Sigma \Sigma)|_\Sigma$.
\begin{itemize}
\item For Kodaira fibers of type $I_m$,  by changing coordinates in $x$ we
may assume that the singular point is located at the origin, and the
equation takes the form
\begin{equation} \label{eq:Weier} y^2 = x^3 + u x^2 + v x + w.\end{equation}
Since $v|_\Sigma$ and $w|_\Sigma$ both vanish, we can determine $u|_\Sigma$
from $f$ and $g$ in the following way.  Completing the cube in
\eqref{eq:Weier} gives
\begin{equation} \label{eq:fg}
\begin{aligned}
f &= -\frac13u^2+v\\ g &= \frac2{27}u^3-\frac13uv+w,
\end{aligned}
\end{equation}
and so $(u|_\Sigma)^2 = -3f|_\Sigma$ and
$(u|_\Sigma)^3 = \frac{27}2 g|_\Sigma$.
Taking the ratio, it follows that
 $u|_\Sigma = (-9g/2f)|_\Sigma$, which can be regarded as
a nonzero rational section of
$\LB2|_\Sigma$.  Note that since this is expressed in terms of $f$ and
$g$, it is independent of the choice of coordinates used to obtain
\eqref{eq:Weier}.

We now define a rational section $\gammaSigma$
of $(\LB8-m_\Sigma \Sigma)|_\Sigma$ by
\[ \gammaSigma := \frac{\deltaSigma}{(u|_\Sigma)^2}.\]
When $m>2$, Tate's analysis
shows that the ramification divisor for monodromy is $\div(u|_\Sigma)$.
In this case, we also define $\betaSigma:=u|_\Sigma$, and note that
 there is a natural algebraic cycle on $\Sigma$
which we can identify with
$\frac12\div(\betaSigma)$, rationally equivalent to $\parenLB|_\Sigma$.

\item For Kodaira fibers of type $II$, $III$, or $IV$, all of the vanishing of
$\deltaSigma$ is attributable to either vanishing of $(f/z^k)|\Sigma$ or of
$(g/z^k)|_\Sigma$; we call this $\gammaSigma$ in types $II$ and $III$ when there
is no possibility of monodromy, and $\betaSigma$ in type $IV$ when Tate's analysis
tells us that $\div((g/z^2)|_\Sigma)$ is the ramification divisor for
monodromy. Note that since singular fibers of type $I_1$ and $II$ are
irreducible on the nonsingular model, they make no contribution to the gauge
algebra.  For this reason, those types are
not included in  Tables \ref{tab:ab},
  \ref{tab:divisors}, and \ref{tab:reps}.

\item For Kodaira fibers of type $I_{m-4}^*$, there is an auxiliary equation
\begin{equation}\label{eq:auxeq}
y^2 = x^3 + (f/z^2) x + (g/z^3)
\end{equation}
describing part of a relevant blowup.  The cubic equation in $x$ has
no {\em a priori}\/ factorization
when $m-4=0$, and this is the trickiest case to
characterize: the characterization depends on how many irreducible factors
the right-hand side of
\eqref{eq:auxeq} has after restriction to $\Sigma$.
(We don't define either $\betaSigma$ or $\gammaSigma$
in this case.)
If the right-hand side of \eqref{eq:auxeq} restricted to $\Sigma$
has a linear factor
\[ (x^3 + (f/z^2) x + (g/z^3))|_\Sigma = (x-a)(x^2+dx+e)\]
then $a$ and $d$ are rational sections of $(\LB2-\Sigma)|_\Sigma$,
while $e$ is a rational section of $(\LB4-2\Sigma)|_\Sigma$.
On the other hand, if
the right-hand side of \eqref{eq:auxeq} restricted to $\Sigma$
has three linear factors
\[ (x^3 + (f/z^2) x + (g/z^3))|_\Sigma = (x-a)(x-b)(x-c)\]
then $a$, $b$, and $c$ are all rational sections of $(\LB2-\Sigma)|_\Sigma$.

For $I_{m-4}^*$ with $m-4>0$, the auxiliary equation \eqref{eq:auxeq} has
one double root and one simple root, and by changing coordinates
we can put the double root at the origin, giving a new auxiliary
equation of the form
\[ y^2 = x^3 + (u/z) x^2 + (v/z^2) x + (w/z^3),\]
with $z\ |\ v/z^2$ and $z\ |\ w/z^3$.
We let $\gammaSigma=(u/z)|_\Sigma = (-9g/2zf)|_\Sigma$, a rational section of
$(\LB2-\Sigma)|_\Sigma$, determined analogously to the $I_m$ case by
completing the cube to obtain $f$ and $g$ from $u$, $v$, and $w$.
Tate's algorithm for $I_{m-4}^*$ is quite involved, but
the upshot is that
$\deltaSigma = (-1)^{k+1}\betaSigma\gammaSigma^k$ for some rational section
$\betaSigma$ of $(\optionLB{(2k-12)}{(12-2k)} - (m+2-k)\Sigma)$,
where $k=3$ when $m$ is odd, and $k=2$ when $m$ is even.
(This is explained in detail in Appendix~\ref{app:Tate}.)
The key fact about $\betaSigma$ is that $\div(\betaSigma)$ is the ramification divisor
for monodromy.\footnote{This corrects a statement from \cite{grouprep},
where the ramification divisor was misidentified for $I_{m-4}^*$.}

\item For Kodaira fibers of type $IV^*$, $III^*$, or $II^*$, once again
all of the vanishing of
$\deltaSigma$ is attributable to either vanishing of $(f/z^k)|_\Sigma$ or of
$(g/z^k)|_\Sigma$; we call this
$\gammaSigma$ in types $III^*$ and $II^*$ when there
is no possibility of monodromy,
and
$\betaSigma$ in type $IV^*$ when Tate's analysis
tells us that $\div((g/z^4)|_\Sigma)$ is the ramification divisor for
monodromy.
\end{itemize}

\begin{table}[t]
\begin{center}

\begin{tabular}{||c|c|c|c||} \hline
Type & $\frac12\div(\betaSigma)$ & $\div(\gammaSigma)$ & $\div(\deltaSigma)$ \\ \hline
$I_2$ & -- & $(\LB8-2\Sigma)|_{\Sigma}$ & $(\LB{12}-2\Sigma)|_{\Sigma}$\\ \hline
$I_m$, $m\ge3$ & $\parenLB|_{\Sigma}$ & $(\LB8-m\Sigma)|_{\Sigma}$ & $(\LB{12}-m\Sigma)|_{\Sigma}$\\ \hline
$III$ & -- & $(\LB4-\Sigma)|_{\Sigma}$ &  $(\LB{12}-3\Sigma)|_{\Sigma}$ \\ \hline
$IV$ & $(\LB3-\Sigma)|_{\Sigma}$ & -- & $(\LB{12}-4\Sigma)|_{\Sigma}$ \\ \hline
$I_0^*$ & \multicolumn{2}{|c|}{special case}& $(\LB{12}-6\Sigma)|_{\Sigma}$ \\ \hline
$I_{2n-5}^*$, $n\ge3$ & $(\LB3-(n-1)\Sigma)|_{\Sigma}$ & $(\LB2-\Sigma)|_{\Sigma}$ & $(\LB{12}-(2n+1)\Sigma)|_{\Sigma}$\\ \hline
$I_{2n-4}^*$, $n\ge3$ & $(\LB4-n\Sigma)|_{\Sigma}$ & $(\LB2-\Sigma)|_{\Sigma}$ & $(\LB{12}-(2n+2)\Sigma)|_{\Sigma}$\\ \hline
$IV^*$ & $(\LB3-2\Sigma)|_{\Sigma}$ & -- & $(\LB{12}-8\Sigma)|_{\Sigma}$ \\ \hline
$III^*$ & -- &  $(\LB4-3\Sigma)|_{\Sigma}$ &$(\LB{12}-9\Sigma)|_{\Sigma}$ \\ \hline
$II^*$  & -- & $(\LB6-5\Sigma)|_{\Sigma}$ &$(\LB{12}-10\Sigma)|_{\Sigma}$ \\ \hline
\end{tabular}
\smallskip
\caption{The constituents of the Tate cycle} \label{tab:divisors}
\end{center}
\end{table}

The definitions of $\betaSigma$ and $\gammaSigma$ are summarized in Table~\ref{tab:ab},
which also shows how $\deltaSigma$ is related to these.
Note that in some cases, only one of $\betaSigma$ and $\gammaSigma$ is defined,
and in the case of $I_0^*$, neither one is defined (and there is
correspondingly no description of $\deltaSigma$).
Whenever $\betaSigma$ is defined, it describes the ramification
of a double cover of $\Sigma$.  Thus, either $\sqrt{\betaSigma}$ is well-defined
on $\Sigma$ (when the double cover splits), or $\sqrt{\betaSigma}$ is
well-defined on the double cover; in either case,
 $\frac12\div(\betaSigma)$ is a well-defined algebraic
cycle class on $\Sigma$.  The equivalence classes of this cycle\footnote{Since
$\Sigma$ is a divisor on $B$, any divisor on $\Sigma$ is an algebraic
cycle of codimension two on $B$.}  and the other
cycles $\div(\gammaSigma)$ and $\div(\deltaSigma)$ are displayed
in Table~\ref{tab:divisors}.

\begin{table}[ht]
\begin{center}

\begin{tabular}{||c|c|c|c|c||} \hline
Type & $\mathfrak{g}(\Sigma)$ & $\rho_\alpha$ & $\rho_{\sqrt{\beta}}$
  & $\rho_{\gamma}$\\ \hline
$I_2$ & $\mathfrak{su}(2)$& $\adj$ &  -- & $\fund$ \\ \hline
$I_{3}$ & $\mathfrak{sp}(1)$& $\adj + 2 \cdot \fund$ & $\fund$ & $\fund$
  \\ \hline
$I_3$ & $\mathfrak{su}(3)$& $\adj$& $\fund$ & $\fund$ \\ \hline
$I_{2n}$, $n\ge2$ & $\mathfrak{sp}(n)$ &  $\adj + \Lambda^2_{\text{irr}}$& $\Lambda^2_{\text{irr}}$ &$\fund$
\\ \hline
$I_{2n+1}$, $n\ge1$ & $\mathfrak{sp}(n)$ & $\adj + \Lambda^2_{\text{irr}} +2\cdot  \fund $& $\Lambda^2_{\text{irr}}+\fund$& $\fund$
\\ \hline
$I_m$, $m\ge2$ & $\mathfrak{su}(m)$& $\adj$ & $\Lambda^2$ & $\fund$  \\ \hline
$III$ & $\mathfrak{su}(2)$& $\adj$& -- & $2\cdot\fund$  \\ \hline
$IV$ & $\mathfrak{sp}(1)$& $\adj + 2\cdot\fund$& $3\cdot \fund $ &-- \\ \hline
$IV$ & $\mathfrak{su}(3)$& $\adj$& $3\cdot\fund $ &-- \\ \hline
$I_0^*$ & $\mathfrak{g}_2$& $\adj + 2\cdot \mathbf{7}$ &
 \multicolumn{2}{|c||}{$\mathbf{7} \mapsto \frac12\div(\deltaSigma)$}
\\ \hline
$I_{0}^*$ & $\mathfrak{so}(7)$& $\adj+\vect$&
 \multicolumn{2}{|c||}{$\spinrep \mapsto \frac16\div(\deltaSigma)$, $\vect\mapsto\frac13\div(\deltaSigma)$ }
\\ \hline
$I_{0}^*$ & $\mathfrak{so}(8)$& $\adj$&
 \multicolumn{2}{|c||}{$\vect$, $\spinrep_+$, $\spinrep_-$ $\mapsto \frac16\div(\deltaSigma)$}
\\ \hline
$I_{1}^*$ & $\mathfrak{so}(9)$& $\adj +\vect$& $\vect$&$\spinrep_*$
\\ \hline
$I_{1}^*$ & $\mathfrak{so}(10)$& $\adj$& $\vect$&$\spinrep_*$ \\ \hline
$I_{2}^*$ & $\mathfrak{so}(11)$& $\adj +\vect$& $\vect$ &$\frac12\cdot\spinrep_*$
\\ \hline
$I_{2}^*$ & $\mathfrak{so}(12)$& $\adj$& $\vect$ &$\frac12\cdot\spinrep_*$
\\ \hline
$I_{3}^*$ & $\mathfrak{so}(13)$& $\adj +\vect$& $\vect$ &$\frac14\cdot\spinrep_*+\vect$
\\ \hline
$I_{3}^*$ & $\mathfrak{so}(14)$& $\adj$& $\vect$  &$\frac14\cdot\spinrep_*+\vect$
\\ \hline
$I_{m-4}^*$, $m\ge8$ & $\mathfrak{so}(2m-1)$& $\adj + \vect$& $\vect$ &N/A
\\ \hline
$I_{m-4}^*$, $m\ge8$ & $\mathfrak{so}(2m)$& $\adj$& $\vect$ &N/A \\ \hline
$IV^*$ & $\mathfrak{f}_4$&$\adj+ \mathbf{26}$& $\mathbf{26}$ &-- \\ \hline
$IV^*$ & $\mathfrak{e}_6$ & $\adj$& $\mathbf{27}$ &--   \\ \hline
$III^*$ &  $\mathfrak{e}_7$& $\adj$& -- &$\frac12\cdot\mathbf{56}$  \\ \hline
$II^*$ & $\mathfrak{e}_8$& $\adj$& -- &N/A  \\ \hline
\end{tabular}

\smallskip
\caption{The Tate representations (in $R(\mathfrak{g}(\Sigma))\otimes\mathbb{Q})$}
\label{tab:reps}
\end{center}
\end{table}

One key thing to note: by our assumptions, we cannot have a point
at which the multiplicities of $(f,g,\Delta)$ exceed $(4,6,12)$.
This implies that
 (1) for $I_{m-4}^*$, $m-4\ge4$, $\gammaSigma$ may not vanish, and
(2) for $II^*$, $\gammaSigma$ may not vanish.  Thus, for $I_{m-4}^*$, $m-4\ge4$,
$(\LB2-\Sigma)|_\Sigma$ must be trivial, and for $II^*$, $(\LB6-5\Sigma)|_\Sigma$
must be trivial.

\begin{table}[t]
\begin{center}

\begin{tabular}{|c|c|c|} \hline
 $\mathfrak{g}(\Sigma)$ & $\rho$ & $\Gamma_\rho$ \\ \hline
 $\mathfrak{su} (2) $& $\adj$ &$\frac12(\minusLB+\Sigma)|_{\Sigma}$\\
                   & $\fund$ &$(\LB8-2\Sigma)|_{\Sigma}$ \\ \hline
 $\mathfrak{su} (3) $& $\adj$ &$\frac12(\minusLB+\Sigma)|_{\Sigma}$\\
                   & $\fund$ &$(\LB9-3\Sigma)|_{\Sigma}$ \\ \hline
 $\mathfrak{su} (m) $,& $\adj$ &$\frac12(\minusLB+\Sigma)|_{\Sigma}$   \\
$m\ge4$         &      $\fund$ &$(\LB8-m\Sigma)|_{\Sigma}$ \\
     & $\Lambda^2$ &$\parenLB|_{\Sigma}$  \\
 \hline
 $\mathfrak{sp} (n) $,&$\adj$ &$\frac12(\minusLB+\Sigma)|_{\Sigma}$\\
     $n\ge2$          &$\fund$ &$(\LB8-2n\Sigma)|_{\Sigma}$\\
     &$\Lambda^2_{\text{irr}}$ &$\frac12(\LB{}+\Sigma)|_{\Sigma}$\\
\hline
$\mathfrak{so}(\ell)$, & $\adj$ &$\frac12(\minusLB+\Sigma)|_{\Sigma}$\\
$7\le \ell\le14$,       & $\vect$ &$\frac12(\optionLB{(4-\ell)}{(\ell-4)}+(6-\ell)\Sigma)|_{\Sigma}$\\
& $\spinrep_*$ & $\frac1{\dim(\spinrep_*)}(\LB{32}-16\Sigma)|_{\Sigma}$ \\ \hline
$\mathfrak{so}(4n{-}1)$, & $\adj$ &$\frac12(\minusLB+\Sigma)|_{\Sigma}$\\
$n\ge4$       & $\vect$ &$(\LB{\frac72}-(n{-}\frac12)\Sigma)|_{\Sigma}$\\ \hline
$\mathfrak{so}(4n)$, & $\adj$ &$\frac12(\minusLB+\Sigma)|_{\Sigma}$\\
$n\ge4$       & $\vect$ &$(\LB4-n\Sigma)|_{\Sigma}$\\ \hline
$\mathfrak{so}(4n{+}1)$, & $\adj$ &$\frac12(\minusLB+\Sigma)|_{\Sigma}$\\
$n\ge4$       & $\vect$ &$(\LB{\frac52}-(n{-}\frac12)\Sigma)|_{\Sigma}$\\ \hline
$\mathfrak{so}(4n{+}2)$, & $\adj$ &$\frac12(\minusLB+\Sigma)|_{\Sigma}$\\
$n\ge4$       & $\vect$ &$(\LB3-n\Sigma)|_{\Sigma}$\\ \hline
 $\mathfrak{e}_6$ & $\adj$ &$\frac12(\minusLB+\Sigma)|_{\Sigma}$\\
       & $\mathbf{27}$ &$(\LB3-2\Sigma)|_{\Sigma}$\\ \hline
 $\mathfrak{e}_7$ & $\adj$ &$\frac12(\minusLB+\Sigma)|_{\Sigma}$\\
       & $\mathbf{56}$ &$\frac12(\LB4-3\Sigma)|_{\Sigma}$\\ \hline
 $\mathfrak{e}_8$ & $\adj$ &$\frac12(\minusLB+\Sigma)|_{\Sigma}$\\ \hline
 $\mathfrak{f}_4$ & $\adj$ &$\frac12(\minusLB+\Sigma)|_{\Sigma}$\\
       & $\mathbf{26}$ &$\frac12(\LB5-3\Sigma)|_{\Sigma}$\\ \hline
 $\mathfrak{g}_2$ & $\adj$ &$\frac12(\minusLB+\Sigma)|_{\Sigma}$\\
        & $\mathbf{7}$ &$(\LB5-2\Sigma)|_{\Sigma}$\\ \hline

\end{tabular}

\end{center}
\smallskip
\caption{%
Cycles associated to representations.
}\label{tab:reverse}
\end{table}

For each gauge algebra, we  associate a representation (with $\mathbb{Q}$
coefficients) to each of
these cycles, as specified in\footnote{These representations have
not been chosen arbitrarily.  Rather, as we will see in Section~\ref{2},
they are precisely the representations we need in order to cancel
anomalies.} Table~\ref{tab:reps}.
(Note that we do not assign a representation to the excluded cases:
there is no $\rho_\gamma$ for either $I_{m-4}^*$, $m-4\ge4$, or $II^*$.)
The case of $I_0^*$ is not completely described in the Table, but
follows our earlier discussion in the various cases.
First, if the cubic equation is irreducible (so that the gauge algebra
is $\mathfrak{g}_2$),
then each point of the ramification locus  of the cover is
associated to $1/2$ of a $7$-dimensional representation.
Second, if the cubic equation has one linear and one quadratic factor
(so that the gauge algebra is $\mathfrak{so}(7)$),
the linear factor determines a bundle equivalent to $\frac16\div(\deltaSigma)$,
and the zeros of that are identified with the spinor representation,
while the ramification points of the quadratic factor correspond
to the vector representation.  And third, if the cubic equation
factors completely, then each factor determines a divisor equivalent
to $\frac16\div(\deltaSigma)$, and the zeros of each of those
correspond to one of
the $8$-dimensional representations of $\mathfrak{so}(8)$.

We now define the {\em Tate cycle for $\Sigma$}\/ to be
\begin{equation}\label{tatecycle} Z_{\text{Tate},\Sigma}=
 \frac12(\minusLB+\Sigma)|_\Sigma \otimes \rho_{\alpha}
+ \frac12\div(\betaSigma) \otimes \rho_{\sqrt{\beta}}
+  \div(\gammaSigma) \otimes \rho_\gamma.
\end{equation}
Aside from the first term, this cycle has a representative
(with $\mathbb{Q}$-coefficients)
which is localized at the zeros of $\deltaSigma$.
Each component $\Gamma_\Sigma$
of $\{\deltaSigma=0\}$ thus has an associated representation,
namely, its coefficient in $Z_{\text{Tate},\Sigma}$.

A very simple manipulation with
Tables~\ref{tab:divisors} and \ref{tab:reps}, collecting terms by irreducible
representation instead of by irreducible cycle,  now shows that the
Tate cycle is rationally equivalent to a cycle of the form
\[ Z_{\text{Tate},\Sigma}\sim\sum_\rho \Gamma_{\rho}\otimes \rho,\]
where the cycle classes $\Gamma_\rho$ are given in Table~\ref{tab:reverse},
and, remarkably,
depend only on the gauge algebra $\mathfrak{g}(\Sigma)$, not on its particular
geometric realization.

Let us compare the ``Tate representations'' of this paper
with the representations described
in \cite[Table A]{grouprep}.  There are some minor differences, but
for the most part the representation denoted $\rho_0$ in the earlier
paper corresponds to the representation $\rho_0$ from \eqref{eq:rho0}
and also coincides with the new $\rho_{\sqrt{\beta}}$ when
there is monodromy;
the representation $\rho_1$ in the earlier paper corresponds to
$\rho_{\sqrt{\beta}}$ in this paper when there is no monodromy;
and the representation $\rho_2$ in the earlier paper corresponds
to $\rho_\gamma$ in this paper.

\section{Six-dimensional anomalies} \label{sec:Chow}

The anomaly of a supersymmetric
 six-dimensional theory (with no abelian local factor in its gauge
group) consists of a pure gravitational
anomaly which is a quartic Casimir in the gravitational
curvature, a pure gauge anomaly which is a quartic Casimir in
the gauge curvature, and a mixed anomaly which is a  product of quadratic
Casimirs in the gravitational and gauge curvatures
\cite{Green:1984bx,Erler:1993zy}.  Each
of these anomalies must vanish.

Without a Green--Schwarz term, such a  theory is typically
anomalous.  The total anomaly
 (in a suitable normalization \cite{Schwarz:1995zw}) is:
\begin{equation} \kappa \cdot A \cdot tr R^4+ B \cdot  (\tr R^2)^2 + \frac{1}{6} \tr R^2 \sum
X_i ^{(2)} -\frac{2}{3} \sum X_{i} ^{(4)} + 4\sum _{i < j
 }  Y_{ij}
\end{equation}
 where
 $\kappa$ is a nonzero proportionality constant,
 $A= (n_V- n_H + 273 - 29 n_T)$ and $B= (\frac{9-n_T}{8})$.
Here $n_T$ is the number of tensor
multiplets,
$n_V$ the number of
vector multiplets,  $ n_H$ the number of
hypermultiplets,\footnote{The geometric interpretation of these various
multiplets is given in Section \ref{1} below.} 
and
\begin{align*}
X_i ^{(n)} &= \Tr_{\adj} F_i ^n - \sum_\rho {n_\rho} \Tr _\rho F^n _i\\
Y_{ij}&= \sum _{\rho,\sigma} {n_{\rho\sigma}} \ \Tr _{\rho} F^2 _i
\  \Tr _{\sigma} F^2 _j,
\end{align*}
where
$R$ is the curvature of the Levi-Civita connection,
and $F$ is the curvature of the
gauge connection.
In these formulas,
$\Tr_{adj}$ means the trace in the adjoint representation, $\Tr
_\rho  $ denotes the trace in the representation $\rho$ of the
simple algebra $\mathfrak{g}_i$ (see Appendix \ref{app:A}), $n_\rho$ is the
multiplicity of the representation $\rho$ of $\mathfrak{g}_i$ in the matter
representation,\footnote{In the physics literature one says that
there are ``$n_\rho$ hypermultiplets in the representation
$\rho$.''} and $n_{\rho\sigma}$ is the multiplicity of the representation
$\rho\otimes\sigma$ of $\mathfrak{g}_i \oplus \mathfrak{g}_j$
in the matter representation.

We can rewrite these expressions in terms of the total virtual
representation
\[ \rho_{\text{virtual}} = -\adj_{\mathfrak{g}}+\sum_\rho n_\rho \cdot\rho \]
as follows.
First,\footnote{%
Note that since no irreducible component of the
 ``adjoint'' term in
  $\rho_{\text{virtual}}$ is  charged under two
different non-abelian summands of $\mathfrak{g}$, this term has
no effect on $Y_{ij}$.
} note that
\[ Y_{ij} = \mu_{\rho_{\text{virtual}}}(\mathfrak{g}_i,\mathfrak{g}_j)
\tr_{\mathfrak{g}_i} (F_i^2)\tr_{\mathfrak{g}_j} (F_j^2),\]
that is, $Y_{ij}$ can be expressed in terms of the representation-multiplicity.
Next we define
\begin{align*}
 X^{(n)}(\rho_{\text{virtual}}) &=- \sum X_i^{(n)}\\
Y(\rho_{\text{virtual}}) &= \sum_{i\ne j}Y_{ij} = 2\sum_{i<j}Y_{ij}.
\end{align*}
(The awkward sign in the first definition is due to the sign
in the original definition of $X^{(n)}_i$, and is designed
so that $X^{(n)}(\sum_\rho n_\rho\cdot\rho) = \sum_{i,\rho} n_\rho
\Tr_\rho F_i^n $.)
This lets us rewrite the anomaly in the form
\begin{equation} \kappa \cdot A \cdot tr R^4+ B \cdot  (\tr R^2)^2 - \frac{1}{6} \tr R^2
X ^{(2)}(\rho_{\text{virtual}}) +\frac{2}{3}  X ^{(4)}(\rho_{\text{virtual}})
+ 2Y(\rho_{\text{virtual}}).
\end{equation}

To ensure an anomaly-free six-dimensional theory,
a term of Green--Schwarz type  must be included
\cite{Green:1984sg, Sagnotti:1992qw, Schwarz:1995zw}.
In the Calabi--Yau threefold case,
Sadov \cite{Sadov:1996zm} derived the form of the Green--Schwarz term
by reducing from ten dimensions, for all F-theory models which in type IIB
language admit only D7-branes and orientifold O7-planes.  (In the Kodaira
language used in this paper, this corresponds to allowing only
 singular fibers of types
$I_m$ and $I_{m-4}^*$.)  This was extended to the general case in
\cite{grouprep}, as we review here.

The Green--Schwarz term in the action takes the form
\[ \int \Psi \cdot \left(\frac12 K_B \otimes \tr R^2
+ 2 \sum_\Sigma (\Sigma\otimes\tr_{\mathfrak{g}(\Sigma)} F^2) \right)\]
where $\Psi$ is a $2$-form field in the effective six-dimensional
theory labeled by an element of
 $H^2(B,Z)$ (with indices suppressed) obtained by dimensional reduction
from a $4$-form field in ten dimensions, and $\cdot$ denotes the
intersection product in $H^2(B,Z)$.  The key expression
\[ D_{\text{gauge}}(F)
:=\sum_\Sigma (\Sigma\otimes\tr_{\mathfrak{g}(\Sigma)} F^2), \]
which we call the {\em gauge divisor},
involves the normalized trace for the corresponding summand of the
Lie algebra; that this normalization
gives the correct linear combination was verified by
Sadov for $\mathfrak{g}(\Sigma)= \mathfrak{u}(m)$,
 $\mathfrak{g}(\Sigma)= \mathfrak{sp}(m)$, and
 $\mathfrak{g}(\Sigma)= \mathfrak{so}(\ell)$.  The normalization for
exceptional groups is a consequence of the study of anomalies made
in \cite{grouprep}.

More generally, for an elliptic fibration which may not be
Calabi--Yau, we propose as Green--Schwarz term
\[ \int \Psi \cdot \left(-\frac12 L \otimes \tr R^2
+ 2 \sum_\Sigma (\Sigma\otimes\tr_{\mathfrak{g}(\Sigma)} F^2) \right)\]
where $L$ is the line bundle used to construct the Weierstrass
model (which coincides with $-K_B$ in the Calabi--Yau threefold case).

Including such a term in the action (as we must, in deriving F-theory from
the type IIB string), there is a contribution to the anomaly of
\begin{align*}
- \frac12\left( -\frac12L \otimes \tr R^2 +2 D_{\text{gauge}}(F)\right)^2
&=- \frac18 L^2 \otimes (\tr R^2)^2
+\sum_\Sigma (L \cdot \Sigma) \otimes (\tr R^2 \tr_{\mathfrak{g}(\Sigma)} F^2) \\
& \quad - 2 \left( \sum_\Sigma \Sigma\otimes (\tr_{\mathfrak{g}(\Sigma)} F^2)
\right)^2
\end{align*}
To obtain an anomaly-free theory, then, requires four conditions:
\begin{align}
n_V-n_H+273-29n_T &= 0 \label{condition1}\\
\frac{9-n_T}8 &= \frac18 L^2 \label{condition2}\\
\frac16 X^{(2)}(\rho_{\text{virtual}}) &=
   \sum (L\cdot \Sigma)\tr_{\mathfrak{g}(\Sigma)} F^2=L\cdot D_{\text{gauge}}(F)\label{condition3}\\
\frac13X^{(4)}(\rho_{\text{virtual}}) + Y(\rho_{\text{virtual}})
&=   \left(
\sum_\Sigma \Sigma\otimes (\tr_{\mathfrak{g}(\Sigma)} F^2)
\right)^2=D_{\text{gauge}}(F)^2.\label{condition4}
\end{align}
Note that:
\begin{itemize}
\item Condition \eqref{condition1} is equivalent to a formula for the topological
Euler characteristic of the total space of the elliptic fibration,
this was the focus of \cite{grouprep} (see Section \ref{1}).
\item Condition \eqref{condition2}
in the Calabi--Yau case
is equivalent to $K_B^2=9-n_T$, and
follows from the fact that $n_T$ is the number of times $\mathbb{P}^2$
must be blown up (or one fewer than the number of times a Hirzebruch
surface $\mathbb{F}_n$ must be blown up) to obtain $B$ (see Section \ref{1}).
\item Conditions \eqref{condition3} and \eqref{condition4} put
specific constraints on the matter representation of the theory.
\end{itemize}
Our main result is a stronger form of the last two conditions,
described in Section \ref{2}.

\section{The Euler characteristic}\label{1}

It is known \cite{FCY2} that if  $X$ is Calabi--Yau the geometric interpretation of the
numbers of multiplets is
\begin{align}\label{mv}
n_V&= \dim(G) , \\  n_T&= h^{1,1}(B)-1 \\
 n_H&=H_{ch} + H_0, \text{ with } H_0= h^{2,1}(X) +1,
\end{align}
where $n_V$,   $n_T$  denote  the number of  vector, tensor multiplets respectively;  $n_H$ is the number of hypermultiplets and $H_{ch}$ denotes the charged hypermultiplets and $H_0$ denotes
the neutral hypermultiplets. 

Moreover, since we are assuming $\rk
MW(X/B)=0$, we have
\begin{equation}\label{eq:rk}
\rk (G) = h^{1,1}(X)- h^{1,1}(B) -1.
\end{equation}

In \cite{grouprep} we provided an algorithm to analyze $H_{ch}$ in terms of the topological Euler characteristic, but we do not derive explicitly this formula from the physics. We shall do that here.

Translated into geometric quantities,
the key formula \ref{condition1} becomes
\begin{equation}\label{eq:firstvanbis}
 h^{2,1} (X)+ 1 +H_{ch} -\dim (G)= 273-29n_T
\end{equation}
and
\begin{equation}\label{eq:secondvanbis}
\f \c (X) = \rk(G) + n_T + 2 - h^{2,1}(X).
\end{equation}
The first statement follows from equation \ref{mv}; the second  holds because $X$ is a Calabi-Yau threefold. In fact the equality  $h^{1,0}(X)=h^{2,0}(X)=0$ and $h^{3,0}(X)=1$ imply:
\begin{align*}\c (X) &=b_0-b_1+b_2-b_3+b_4\\&= 1-0+h^{1,1}(X)-(2+2h^{2,1}(X))+h^{1,1}(X)-0+1\\
&=2h^{1,1}(X)-2h^{2,1}(X).
\end{align*}
Thus, using equation \ref{eq:rk}, we find
$$\f \c (X) = h^{1,1}(X)-h^{2,1}(X)
= 2+n_T+\rk(G)  - h^{2,1}(X).$$

To interpret formula \eqref{eq:firstvanbis} as a formula for the Euler
characteristic, we need a few more geometric facts.

Let $\pi: X \to B$ be an elliptic Calabi--Yau threefold; then
\begin{align} \c (B)= 2 + h^{1,1}(B) \\
10= h^{1,1}(B) + K_B ^2 \\
9= n_T + K_B ^2 \label{eq:tens}
\end{align}

Our assumptions in fact imply that $B$ is either rational or an Enriques surface \cite{Grassi91}: then $\chi (\mathcal O _B)=1$.
The second equality follows from  Noether's formula $K_B^2+ \c (B) = 12 \chi (\mathcal O_B)$  and the third from equation \ref{mv}.
Now if we add equation \ref{eq:tens} with \ref{eq:secondvanbis} we find
$$\f \c (X) + 30 \kb ^2 = 270-30n_T+2+n_T+\rk(G)  - h^{2,1}(X)$$
from which the equation
$$\f \c (X)+ 30 \kb ^2 = 273-29n_T +\rk (G) -( h^{2,1}(X) +1).$$
immediately follows.
 Then the condition \ref{condition1} is then equivalent to
\begin{equation}\label{cor:firstvan} \f \c (X) + 30 \kb ^2= H_{ch} - (\dim(G)- \rk (G)).
\end{equation}

In \cite{grouprep} we defined  and analyzed  the quantity {\r}
${}:=
 \f \c (X) + 30 \kb ^2$ in terms of the matter representation.

\begin{remark}
 In general if  $n_T= h^{1,1}(B)-1$,
then $9-n_T=K_B^2$ implies that either  $h^{1,0}(B)= h^{2,0}(B)= 0$ or $h^{1,0}(B)= 5$ and $ h^{2,0}(B)= 4$. There are other type of surfaces which satisfy these hypothesis, most notably some of general type.
\end{remark}

\section{Anomaly cancellation in the Chow group} \label{2}

It turns out that the last two conditions for anomaly cancellation
hold not only numerically, but as actual algebraic cycles.  That is,
if we use the virtual matter cycle $Z_{\text{virtual}}$ on the
left hand side, and interpret the intersection on the right hand
side as intersection in the Chow group, then we get the stronger
statements
\begin{align}
\frac16 X^{(2)}(Z_{\text{virtual}}) &=
   \sum (L\cdot \Sigma)\tr_{\mathfrak{g}(\Sigma)} F^2=L\cdot D_{\text{gauge}}(F)\label{condition3chow}\\
\frac13X^{(4)}(Z_{\text{virtual}}) + Y(Z_{\text{virtual}})
&=   \left(
\sum_\Sigma \Sigma\otimes (\tr_{\mathfrak{g}(\Sigma)} F^2)
\right)^2=D_{\text{gauge}}(F)^2,\label{condition4chow}
\end{align}
which are to be interpreted as equality of codimension two cycles
on the base $B$
up to rational equivalence.

The coefficients in these cycles are Casimir operators for the
gauge algebra $\mathfrak{g}$; for these relations to be
satisfied, they must hold in each sector of the algebra of
Casimir operators.  In particular, the first statement must
hold as an equality of quadratic Casimir operators
when restricted to each summand $\mathfrak{g}(\Sigma)$:
\[ X^{(2)}(Z_{\text{virtual}}|_{\mathfrak{g}(\Sigma)})
= 6(L\cdot \Sigma) \tr_{\mathfrak{g}(\Sigma)} F^2;\]
the second statement must hold as an equality of quartic Casimir
operators which restricted to each summand $\mathfrak{g}(\Sigma)$:
\[ X^{(4)}(Z_{\text{virtual}}|_{\mathfrak{g}(\Sigma)})
= 3(\Sigma\cdot \Sigma) (\tr_{\mathfrak{g}(\Sigma)} F^2)^2;\]
and the second statement must also hold as an equality of
bi-quadratic Casimir operators when restricted to each
pair of summands $\mathfrak{g}(\Sigma)\oplus\mathfrak{g}(\Sigma')$:
\[ \mu_{Z_{\text{virtual}}}(\mathfrak{g}(\Sigma),\mathfrak{g}(\Sigma'))
= \Sigma \cdot \Sigma'.\]
(In the last equation, we suppressed the generator
$\tr_{\mathfrak{g}(\Sigma)}(F^2)\tr_{\mathfrak{g}(\Sigma')}(F^2)$
of the bi-quadratic Casimirs since that is taken care of in the
definition of the representation-multiplicity $\mu$.)

On the other hand, it is straightforward to verify using
Tables~\ref{tab:D} and \ref{tab:reverse} that
\begin{align*}
 X^{(2)}(Z_{\text{Tate},\Sigma})
&= 6L|_{ \Sigma} \otimes \tr_{\mathfrak{g}(\Sigma)} F^2\\
 X^{(4)}(Z_{\text{Tate},\Sigma})
&= 3\Sigma|_{ \Sigma} \otimes (\tr_{\mathfrak{g}(\Sigma)} F^2)^2
\end{align*}
(and this in fact motivated our definition of the Tate cycles).
Thus, our main anomaly cancellation result is:

\medskip
\noindent
{\bf Main Result.}  {\em The elliptic fibration defines an anomaly-free theory
if it satisfies
\eqref{condition1} and \eqref{condition2},  if
there is a Casimir equivalence in degrees $2$ and $4$
\[Z_{\text{virtual}}|_{\mathfrak{g}(\Sigma)}\sim Z_{\text{Tate},
\Sigma}\]
 for all $\Sigma$, and if there is a rational equivalence of cycles
\[\mu_{Z_{\text{virtual}}}(\mathfrak{g}(\Sigma),\mathfrak{g}(\Sigma'))
=\Sigma\cdot \Sigma'
\]
 for all $\Sigma\ne\Sigma'$.}

\medskip

In the Calabi--Yau case, when $L=-K_B$, the non-local part of
these equations holds, since the contribution of the adjoint representation
is $g-1$ which is calculated by the cycle $\frac12(K_B+\Sigma)|_\Sigma$.
Thus, the only things to check are the local contributions
(including copies of $\rho_\alpha$ associated to singularities
of $\Sigma$), and these
can be checked cycle by cycle.

\medskip
\noindent
{\bf Local Anomaly Cancellation.}
{\em Suppose that $X$ is Calabi--Yau.
Let $\Gamma$ be a subvariety of codimension two, let $\rho_\Gamma$
be the local contribution to the matter
representation associated to $\Gamma$, and let $\Sigma_1$, \dots,
$\Sigma_k$ be the components of the discriminant locus which pass through
$\Gamma$ and which contribute non-abelian summands to the gauge algebra.
If $\rho_\Gamma|_{\Sigma_j}$ is Casimir equivalent in degrees $2$ and
$4$ to the contribution at $\Gamma$
to the Tate representation for $\Sigma_j$, and if for all $i\ne j$
\[\mu_{\rho_\Gamma}(\mathfrak{g}(\Sigma_i),
\mathfrak{g}(\Sigma_j))=
\operatorname{mult}_\Gamma(\Sigma_i,\Sigma_j) ,\]
that is, the representation-multiplicity coincides with the
intersection multiplicity,
then local anomaly cancellation holds at $\Gamma$.
}

\medskip

\noindent
Note that the local contribution to the Tate representation for $\Sigma$
is easily calculated from the Weierstrass equation, since it depends only
on the order of zero or pole along $\Gamma$ of $\betaSigma$ and/or
$\gammaSigma$, as well as the genus drop at a singular point
of $\Sigma$.\footnote{The simplest example of this phenomenon is
an ordinary double point of $\Sigma$ in the $I_m$ case,
which Sadov argued \cite{Sadov:1996zm} is associated to
the symmetric representation $S^2V$ of $\mathfrak{su}(m)$.
Since $S^2V$ is Casimir equivalent to $\adj - \Lambda^2$,
this is accounted for by an appropriate local computation.}

Anomaly cancellation is thus reduced to this kind of local computation.
We carry it out for a wide variety of examples in the next Section.

\section{Examples}\label{sec:examples}

We have seen that the anomaly cancellation can be reduced to
a straightforward property about codimension two cycles on which
the elliptic fibration structure degenerates.  Namely, given such
a cycle and a component $\Sigma$ of the discriminant locus containing
the cycle, one can calculate the local contribution\footnote{We have
already verified the anomaly cancellation condition for the
global contributions to the Tate and virtual matter cycles, so
we can now focus on local contributions only.} to the Tate cycle for $\Sigma$
directly from the Weierstrass equation; one can also calculate intersection
multiplicities of all pairs of components of the discriminant locus
which pass through the cycle.  This data must then be compared with
the contribution of that cycle to the virtual matter representation.

In this Section and the next, we carry out this verification for the standard
``generic'' codimension two singularities of elliptic fibrations
from \cite{geom-gauge}
(as already verified in \cite{grouprep}) as well as for some new
codimension two singularities such as the one from \cite{newTate}.
We also introduce some singularities which are considered here
for the first time.  All of our examples are local, considered in
a neighborhood of a particular codimension two locus defined by
$\{z=t=0\}$.

Let us first consider cases in which the Kodaira type along
$\Sigma=\{z=0\}$ is $I_m$ with $m=2n$ or $m=2n+1$.
The generalized Weierstrass forms proposed in \cite{geom-gauge,newTate} can
all be written in the general form (see \cite[Appendix A]{newTate})
\begin{equation} \label{eq:Im}
 y^2 = x^3 + a_2 x^2 + a_{4,n} z^n x + a_{6,2n}z^{2n},
\end{equation}
with  additional restrictions for  various particular  cases.
The discriminant for such a Weierstrass form is
\begin{equation} \label{disc:Im}
 \Delta = 4 a_2^3 a_{6,2n} z^{2n} - a_2^2a_{4,n}^2z^{2n} + O(z^{3n})
\end{equation}
whenever $n\ge1$.

\medskip

\noindent {\bf Example 1.}
Consider $I_m$, $m\ge3$, with $a_2\ne0$ at the codimension two singular point,
and let $n=[m/2]$.
Then we can rewrite \eqref{eq:Im} as
\begin{equation} \label{eq:Im1}
 y^2 = x^3 + a_2\left(x+\frac{a_{4,[m/2]}}{2a_2}z^{[m/2]}\right)^2
+tz^m + O(z^{m+1})
\end{equation}
which yields a discriminant that satisfies
\begin{equation} \label{disc:Im1}
 \Delta = 4a_2^3tz^m + O(z^{m+1}).
\end{equation}
Thus, we have identified the local contribution to the residual
discriminant with
the codimension two locus $\{z=t=0\}$.
Note that when the coefficients are generic,
the Kodaira fiber along a disk through ${z=t=0}$ has type $I_{m+1}$.

Since the coefficients of $x^1$ and $x^0$ in \eqref{eq:Im1} are
divisible by $z$, $\betaSigma$ coincides with the coefficient
of $x^2$, i.e.,  $\betaSigma=a_2$. Since $\deltaSigma=4a_2^3t$, it follows
that $\gammaSigma=4a_2t$.

\medskip

\noindent {\bf Example 2.}
Consider $I_{2n}$, $n\ge2$, in the form \eqref{eq:Im}
with $a_2=t$ vanishing at the codimension two
singular point.  Thus, we have equation
\begin{equation} \label{eq:Im2}
 y^2 = x^3 + t x^2 + a_{4,n} z^n x + a_{6,2n}z^{2n},
\end{equation}
and discriminant
\begin{equation} \label{disc:Im2}
 \Delta = t^2( 4t a_{6,2n}  - a_{4,n}^2)z^{2n} + O(z^{2n+2}).
\end{equation}
The higher order of vanishing of the error term is important, because
it shows that the total order of vanishing of $\Delta$ in the limit
is at least $2n+2$.  Thus, the special fiber is $I_{2n-4}^*$.

Again $\betaSigma$ is the coefficient of $x^2$, i.e., $\betaSigma=t$; since
$\deltaSigma=t^2( 4t a_{6,2n}  - a_{4,n}^2)$ it
follows that  $\gammaSigma= 4t a_{6,2n}  - a_{4,n}^2$.

\medskip

We now specialize to cases in which the Kodaira type along $\Sigma$
is $I_{2n+1}$, $n\ge1$.
The new generalized Weierstrass
form in this case proposed in \cite{newTate} (with a minor change of
notation) is
\begin{equation} \label{eq:generalized}
 y^2 = x^3 + (\frac14\mu\nu^2+ a_{2,1}z) x^2 + (\frac12\mu \nu \xi
+ a_{4,n+1} z)z^n x
+ (\frac14\mu\xi^2 + a_{6,2n+1}z)z^{2n},
\end{equation}
and the discriminant takes the form
\begin{equation} \label{disc:generalized}
 (\frac14\mu\nu^2+ a_{2,1}z)^2
\left[\mu(\xi^2a_{2,1} - \nu\xi a_{4,n+1}+ \nu^2a_{6,2n+1})
+ (4a_{2,1}a_{6,2n+1} - a_{4,n+1}^2)z\right]z^{2n+1} + O(z^{3n}).
\end{equation}

\medskip

\noindent {\bf Example 3.}
Consider $I_{2n+1}$, $n\ge4$ in the form \eqref{eq:generalized}
with $\mu=t$ vanishing at the singular point.  The equation becomes
\begin{equation} \label{eq:generalized3}
 y^2 = x^3 + (\frac14t\nu^2+ a_{2,1}z) x^2 + (\frac12t \nu \xi
+ a_{4,n+1} z)z^n x
+ (\frac14t\xi^2 + a_{6,2n+1}z)z^{2n},
\end{equation}
and the discriminant is
\begin{equation} \label{disc:generalized3}
 (\frac14t\nu^2+ a_{2,1}z)^2
\left[t(\xi^2a_{2,1} - \nu\xi a_{4,n+1}+ \nu^2a_{6,2n+1})
+ (4a_{2,1}a_{6,2n+1} - a_{4,n+1}^2)z\right]z^{2n+1} + O(z^{2n+4}),
\end{equation}
since $3n\ge2n+4$.
The order of vanishing increases
by at least $3$ at $z=t=0$.

By construction, $\betaSigma=\frac14t\nu^2$; since
\[\deltaSigma=(\frac14t\nu^2+ a_{2,1}z)^2t(\xi^2a_{2,1} - \nu\xi a_{4,n+1}+ \nu^2a_{6,2n+1})\]
 it follows that
$\gammaSigma=t(\xi^2a_{2,1} - \nu\xi a_{4,n+1}+ \nu^2a_{6,2n+1})$.

\medskip

\noindent {\bf Example 4.}
We again
consider $I_{2n+1}$, $n\ge3$, using equation \eqref{eq:generalized},
this time with
$\nu=t$ vanishing at the singular point.  The equation becomes
\begin{equation} \label{eq:generalized4}
 y^2 = x^3 + (\frac14\mu t^2+ a_{2,1}z) x^2 + (\frac12\mu t \xi
+ a_{4,n+1} z)z^n x
+ (\frac14\mu\xi^2 + a_{6,2n+1}z)z^{2n},
\end{equation}
and the discriminant takes the form
\begin{equation} \label{disc:generalized4}
 (\frac14\mu t^2+ a_{2,1}z)^2
\left[\mu(\xi^2a_{2,1} - t\xi a_{4,n+1}+ t^2a_{6,2n+1})
+ (4a_{2,1}a_{6,2n+1} - a_{4,n+1}^2)z\right]z^{2n+1} + O(z^{2n+3}),
\end{equation}
since $3n\ge2n+3$.
The order of vanishing increases by at
least $2$ for this example, and we have $\betaSigma=\frac14\mu t^2$,
$\gammaSigma=\mu(\xi^2a_{2,1}  - t\xi a_{4,n+1}+ t^2a_{6,2n+1} )$.

Note that example 4 includes cases without monodromy, in
the generalized Weierstrass
form from \cite{geom-gauge}:
\[ y^2 + a_1 xy + a_{3,n}z^n y = x^3 + a_{2,1} zx^2 + a_{4,n+1} z^{n+1}x
+ a_{6,2n+1},\]
since, after completing the square, we see that this is the same as equation
\eqref{eq:generalized},
with $\mu=1$, $\nu=a_1$ and $\xi=a_{3,n}$.

\medskip

\noindent {\bf Example 5.}
Consider $I_{m-4}^*$ with $m\ge7$.  The generalized Weierstrass form
from \cite{geom-gauge} can be written as
\begin{equation} \label{eq:Im4star}
y^2=x^3 + a_{2,1}zx^2 + a_{4,[(m+1)/2]}z^{[(m+1)/2]}x + a_{6,m-1}z^{m-1},
\end{equation}
with discriminant
\begin{equation} \label{disc:Im4star}
\Delta = 4a_{2,1}^3a_{6,m-1}z^{m+2}
-a_{2,1}^2 a_{4,[(m+1)/2]}z^{2+2[(m+1)/2]} + O(z^{m+3}).
\end{equation}
For our example, we assume that $a_{2,1}$ does not vanish at $z=t=0$.
Then we can rewrite the equation in the form
\begin{equation} \label{eq:Im4starbis}
y^2=x^3 +
a_{2,1}z\left(x + \frac{a_{4,[(m+1)/2]}}{2a_{2,1}}z^{[(m-1)/2]}\right)^2
+ tz^{m-1} + O(z^m)
\end{equation}
where we have set $a_{6,m-1}=t+(a_{4,[(m+1)/2]}^2/4a_{2,1})z^{2-m+2[(m-1)/2]}
$.  It follows that the
 discriminant takes the form
\begin{equation} \label{disc:Im4starbis}
\Delta = 4a_{2,1}^3tz^{m+2}
+ O(z^{m+3}).
\end{equation}
Since the coefficients of $x^1$ and $x^0$ in \eqref{eq:Im4starbis}
are divisible by $z^3$ and $z^4$,
respectively, $\gammaSigma$ coincides with the coefficient of $x^2$ divided
by $z$, i.e., $\gammaSigma=a_{2,1}$.  Since $\deltaSigma=4a_{2,1}^3t$,
it follows that
\[ \betaSigma=\begin{cases} 4t & \text{if } m \text{ is odd} \\
-4a_{2,1}t & \text{if } m \text{ is even}
\end{cases}
\]

We summarize our first five examples in Table~\ref{tab:group1}.

\begin{table}[ht]
\begin{center}

\begin{tabular}{|c|c|c|l|l|l|l|c|} \hline
&Gen.&Spec.&Eqn.&Disc. & $\betaSigma$ & $\gammaSigma$ &$t$\\ \hline
1&$I_m$, $m\ge4$&$I_{m+1}$&\eqref{eq:Im1}&\eqref{disc:Im1}
&$a_2$&$4a_2t$ &$s$\\
2&$I_{2n}$, $n\ge2$&$I_{2n-4}^*$&\eqref{eq:Im2}&\eqref{disc:Im2}
&$t$&$4ta_{6,2n}-a_{4,n}^2$ & $s^2$\\
3&$I_{2n+1}$, $n\ge4$&$I_{2n-2}^*$&\eqref{eq:generalized3}&\eqref{disc:generalized3}
&$\frac14t\nu^2$&
$t\left(\xi^2a_{2,1}\right.$
 &$s^2$\\
&& &&
&&
$\quad \left.-\nu\xi a_{4,n+1} + \nu^2a_{6,2n+1}\right)$
& \\
4&$I_{2n+1}$, $n\ge3$&$I_{2n-3}^*$&\eqref{eq:generalized4}&\eqref{disc:generalized4}
 &$\frac14\mu t^2$&$\mu\left(\xi^2a_{2,1}\right.$
 &$s$\\
&& & &&&
$\quad \left.-t\xi a_{4,n+1} + t^2a_{6,2n+1}\right)$
&\\
5&$I_{m-4}^*$, $m\ge7$&$I_{m-3}^*$&\eqref{eq:Im4star}&\eqref{disc:Im4star}
&$4\alpha t$&$1$&$s^2$\\

\hline
\end{tabular}

\end{center}
\smallskip
\caption{First group of examples.
In \#5, $\alpha=1$ if $m$ is odd, and $\alpha=-a_{2,1}$ if
$m$ is even.}\label{tab:group1}
\end{table}

We now give 15 additional examples, whose special fibers are of types
$IV$, $I_0^*$, $IV^*$, $III^*$, or $II^*$.  The advantage of these types
is that the equation of the special fiber is precisely the equation
of an ADE singularity, and the example is part of the universal
deformation of that singularity.  (We choose the subspace of the
universal deformation where the singularity corresponding to the
general fiber is retained.)  In all but one case, the rank of the
corresponding Dynkin diagram jumps by $1$ between general and special
fiber, and the deformations we give are universal.
We display this second group of
examples in
Table~\ref{tab:examples2}.

\begin{table}[ht]
\begin{center}

\begin{tabular}{|c|c|c|l|c|} \hline
&Gen. & Spc. & Equation & $t$\\ \hline
6&$III$ & $IV$ & $y^2=x^3+2tzx+z^2$&$s^{1/2}$\\
7&$IV$ & $I_0^*$ & $y^2=x^3+az^2x+tz^2+bz^3$   &$s^2$\\
8&$I_6$ & $IV^*$ &$y^2= x^3 + (tx+z^2)^2$ &  $s^2$\\
9&$I_7$ & $III^*$ & $y^2  x^3 +(- 4tz + 16t^3)x^2+ (z^3 {-} 8t^2z^2)x {+} tz^4 $& $s^2$\\
10&$I_8$ & $II^*$ & $y^2=x^3+(3t^2z+t^5)x^2-(4tz^3+2t^4z^2)x+z^5+t^3z^4$ & $s^2$\\
11&$I_1^*$ & $IV^*$ & $y^2= x^3 + tzx^2 + z^4$&$s$\\
12&$I_2^*$ & $III^*$ &$y^2= x^3 + tzx^2 + z^3x$ &$s^2$\\
13&$I_3^*$ & $II^*$ & $y^2=x^3+z(tx+z^2)^2$&$s^2$\\
14&$IV^*$ & $III^*$ & $y^2=x^3+z^3x+tz^4$&$s^2$ \\
15&$III^*$ & $II^*$ & $y^2=x^3+tz^3x+z^5$&$s^2$\\
16&$I_2{+}I_5$ & $IV^*$ & $y^2=x^3+(4tz+t^4)x^2+2t^2z^2x+z^4$&$s$\\
17 & $I_2{+}I_1^*$ & $III^*$ &$y^2=x^3+tzx^2+z^3x+tz^4$ & $s^2$\\
18 & $I_2{+}IV^*$ & $II^*$ &$y^2=x^3{-}3tz^3x{+}z^5{+}t^3z^4$ &$s^2$ \\
19 & $I_3{+}I_5$ & $III^*$ &$y^2=x^3+(400t^3-15tz)x^2+(480t^4z-45t^2z^2+z^3)x$ & $s^2$  \\
 &  &  &$\quad +144t^5z^2-5t^3z^3$ &  \\
20&$I_2{+}I_3{+}I_3$ & $IV^*$ & $y^2= x^3 -3t^4 x^2 -6t^2z^2x+16t^6z^2+z^4$
&$s$\\
\hline
\end{tabular}

\end{center}
\smallskip
\caption{Second group of examples.
In \#7, $a$ and $b$ are constants.
}
\label{tab:examples2}
\end{table}

Note that the final column of the table
 indicates which substitution $t=s^k$ must
be made in order to get a map to the versal
simultaneous resolution space for the family.
For \textbf{Examples 7-20}, these are computed quite easily by using
the fact that we have a universal deformation of a $D_4$ or
an $E_n$ singularity.
Those singularities have a weighted homogeneous equation, and
have the property that
the homogeneity can be extended to the universal deformation by giving the
deformation parameter $t$ a weight (which is determined from the
other data).  These degrees of homogeneity are displayed in
Table~\ref{tab:homogeneous}.  In each case $t$ has degree $1$ or $2$,
and that restricts the allowed base changes.

\textbf{Example} $\mathbf{6}$ is slightly different: the special fiber
\[ y^2=x^3+z^2\]
has a singularity of type $A_2$
and is weighted homogeneous, but the versal deformation
\[ y^2=x^3-s x^2 + z^2\]
does not coincide with our deformation.  In fact, completing the square
in our example, we get
\[ y^2 = x^3 -  t^2 x^2 + (z+tx)^2,\]
which shows that $s=t^2$.

\begin{table}
\begin{center}
\begin{tabular}{|c|c|c|c|c|c|} \hline
Example &$x$&$y$&$z$&$t$&eqn.\\ \hline
6 & $2$ & $3$ & $3$ & $1$ & $6$ \\
7 & $2$ & $3$ & $2$ & $2$ & $6$ \\
8 & $4$ & $6$ & $3$ & $2$ & $12$ \\
9 & $6$ & $9$ & $4$ & $2$ & $18$ \\
10 & $10$ & $15$ & $6$ & $2$ & $30$ \\
11 & $4$ & $6$ & $3$ & $1$ & $12$ \\
12 & $6$ & $9$ & $4$ & $2$ & $18$ \\
13& $10$ & $15$ & $6$ & $2$ & $30$ \\
14& $6$ & $9$ & $4$ & $2$ & $18$ \\
15& $10$ & $15$ & $6$ & $2$ & $30$ \\
16 & $4$ & $6$ & $3$ & $1$ & $12$ \\
17& $6$ & $9$ & $4$ & $2$ & $18$ \\
18& $10$ & $15$ & $6$ & $2$ & $30$ \\
19& $6$ & $9$ & $4$ & $2$ & $18$ \\
20 & $4$ & $6$ & $3$ & $1$ & $12$ \\
\hline
\end{tabular}
\end{center}
\smallskip
\caption{Degrees of homogeneity degrees for examples 6-20}\label{tab:homogeneous}
\end{table}

\begin{table}[ht]
\begin{center}

\begin{tabular}{|c|c|c|c|c|c|} \hline
&Gen.\ Fib. &  Discriminant&Type of $\Sigma$&$\betaSigma$&$\gammaSigma$ \\ \hline
6&$III$ &$z^3(27z+32t^3)$&$III$&$\Box$ &$2t$\\
7&$IV$ & $z^4(dz^2{+}54btz{+}27t^2)$&$IV$&$t$&$\Box$
   \\
8&$I_6$ & $z^6(27z^2 - 4t^3)$&$I_6$&$t^2$&$-4t^{-1}$
  \\
9&$I_7$
& $z^7(4z^2{-}13 t^2z {+}32 t^4)$&$I_7$&$16t^3$&$\frac18t^{-2}$\\
10&$I_8$ &
$z^8(27z^{2}{+}14t^3z{+}3t^6)$ &$I_8$&$t^5$&$3t^{-4}$ \\
11&$I_1^*$ &$z^7(27z+4t^3)$ &$I_1^*$&$4$&$t$ \\
12&$I_2^*$ &$z^8(4z-t^2)$ &$I_2^*$&$1$&$t$\\
13&$I_3^*$ &$z^9(27z-4t^3)$
  &$I_3^*$&$-4t^{-3}$&$t^2$\\
14&$IV^*$ &$z^8(4z+27t^2)$ &$IV^*$&$t$&$\Box$\\
15&$III^*$ &$z^9(27z+4t^3)$ &$III^*$&$\Box$&$t$\\
16&$I_2{+}I_5$ &$(z{+}2t^3)^2z^5(27z{+}4t^3)$&$I_2$&$\Box$& $64t^{10}$\\
& & &$I_5$&$t^4$&$16t$\\
17 & $I_2{+}I_1^*$ &$4(z+t^2)^2z^7$ &$I_2$&$\Box$&$t^{8}$ \\
&&&$I_1^*$&$4t$&$t$ \\
18 & $I_2{+}IV^*$ &$27(z-t^3)^2z^8$ &$I_2$&$\Box$&$3t^{14}$ \\
&&&$IV^*$&$t^3$&$\Box$\\
19 & $I_3{+}I_5$ & $z^3(z{-}32t^2)^5(4z{-}125t^2)$&$I_3$& $400t^3$&$\frac{131072}5 t^6$\\
 &   & &$I_5$&$16t^3$&$-384t^2$ \\
20&$I_2{+}I_3{+}I_3$
&$27 z^2(z{-}2t^3)^3(z{+}2t^3)^3$&$I_2$&$\Box$&$-192t^{10}$\\
& & &$I_3$& $9t^4$&$\frac{256}3t^7$\\
& & &$I_3$& $9t^4$&$-\frac{256}3t^7$\\
\hline
\end{tabular}

\end{center}
\smallskip
\caption{Calculations for the second group of examples.
We use $\Box$ to indicate an undefined quantity.
In \#7, $a$ and $b$ are constants occurring in the equation, and
$d=4a^3+27b^2$.
In \#16--\#20, the factors in the discriminant and the
($\betaSigma$,$\gammaSigma$) pairs are listed in the same order
as the components of the general fiber in the left column.}
\label{tab:examples2bis}
\end{table}

Most of these  examples have appeared before in the
literature \cite{Katz:1996xe,LieF}, but
examples 10, 13, and 16--20
are new.  For each of these examples, we need to compute the
Weierstrass coefficients $f$ and $g$ and the discriminant
$\Delta=4f^3+27g^2$, which we factor as much as possible.
This factorization allows us to identify the components
$\Sigma_j$ of the discriminant locus which pass through
the point $\{z=t=0\}$, and for each of these, we compute the
quantities $\beta_{\Sigma_j}$ and $\gamma_{\Sigma_j}$ when defined
(following Table~\ref{tab:ab}).
All of these calculations are summarized in Table~\ref{tab:examples2bis}.
We describe these calculations in more detail
in a few of the more challenging
cases below, focusing on the examples which are new.

We begin with example 10, which has defining polynomial
\[y^2=x^3 +
(3t^2z+t^5)x^2
-(4tz^3+2t^4z^2
)x+
z^5+t^3z^4.
\]
We complete the cube using $x=\widetilde{x}-t^2z-\frac13t^5$, to obtain
\begin{align*}
y^2
&= \widetilde{x}^3
+\left(-4tz^3-5t^4z^2-2t^7z-\frac13t^{10}
\right)\widetilde{x}
\\&{}\quad
+\left(z^5+5t^3z^4+\frac{16}3t^6z^3+\frac83t^9z^2+\frac23t^{12}z+\frac2{27}t^{15}
\right),
\end{align*}
i.e., the Weierstrass coefficients are
\begin{align*}
f&= -4tz^3-5t^4z^2-2t^7z-\frac13t^{10}\\
g&=z^5+5t^3z^4+\frac{16}3t^6z^3+\frac83t^9z^2+\frac23t^{12}z+\frac2{27}t^{15} .
\end{align*}
Then the discriminant is:
\begin{align*}
&4\left(-4tz^3-5t^4z^2-2t^7z-\frac13t^{10}
\right)^3\\
&{}\quad
+27\left(z^5+5t^3z^4+\frac{16}3t^6z^3+\frac83t^9z^2+\frac23t^{12}z+\frac2{27}t^{15}
\right)^2\\
&= z^8(27z^{2}+14t^3z+3t^6).
\end{align*}
The relevant discriminant-component is $\{z=0\}$, along which
we have a fiber of type $I_8$.

To calculate the other invariants for this example, we begin with
\begin{align*}
f|_{\{z=0\}}&=-\frac13 t^{10}\\
g|_{\{z=0\}}&=\frac2{27} t^{15}\\
\delta_{\{z=0\}}&=3t^6.
\end{align*}
Thus $\beta_{\{z=0\}}=(-9g/2f)|_{\{z=0\}} = t^5$, and
$\gamma_{\{z=0\}} = \delta_{\{z=0\}}/\beta_{\{z=0\}}^2 = 3t^{-4}$.

Example 13
 is quite straightforward.  We start with the polynomial
\[ y^2 = x^3 + z(tx+z^2)^2 = x^3 + t^2zx^2 + 2tz^3x + z^5\]
and complete the cube using $x = \widetilde{x} - \frac13t^2z$
to obtain the Weierstrass coefficients and discriminant
\begin{align*}
f&=z^2(2tz-\frac13t^4)\\
g&=z^3(z^2-\frac23t^3z+\frac2{27}t^6)
\\ \Delta &
= z^9(27z-4t^3).
\end{align*}
The relevant discriminant-component is $\{z=0\}$, along which we
have a fiber of type $I_3^*$ (since $f$ and $g$ vanish to orders
$2$ and $3$, respectively).

To calculate the other invariants,
\begin{align*}
(f/z^2)|_{\{z=0\}}&=-\frac13 t^{4}\\
(g/z^3)|_{\{z=0\}}&=\frac2{27} t^{6}\\
\delta_{\{z=0\}}&=-4t^3.
\end{align*}
Thus $\gamma_{\{z=0\}}=(-9g/2zf)|_{\{z=0\}} = t^2$, and
$\beta_{\{z=0\}} = \delta_{\{z=0\}}/\gamma_{\{z=0\}}^3 = -4t^{-3}$.

Example 16 is the first case in which we get more than one relevant
discriminant-component.  We begin with the polynomial
\begin{equation} \label{ex:16b}
 {y}^2
= x^3 + (4tz+t^4)x^2 + 2t^2z^2x + z^4.
\end{equation}
We then complete the cube with the substitution
$x=\tilde{x}-\frac43tz-\frac13t^4$ to obtain Weierstrass coefficients
and discriminant:
\begin{align*}
f&=
 -\frac13(10t^2z^2+8t^5z+t^8)\\
g&=\frac1{27}(27z^4+56t^3z^3+78t^6z^2+24t^9z+2t^{12})\\
\Delta&= (z+2t^3)^2z^5(27z+4t^3).
\end{align*}
Since none of the factors of the discriminant divide $f$ or $g$, we
see that the Kodaira fiber types are indeed $I_2$, $I_5$, and $I_1$
along the three components of the discriminant locus.  (Since the third
of these does not contribute to the gauge algebra, we need not consider
it in our computations.)

To compute the invariants along $\{z=-2t^3\}$, we begin with
\begin{align*}
f|_{\{z=-2t^3\}}&= -\frac13(40-16+1)t^8 = -\frac{25}3t^8\\
g|_{\{z=-2t^3\}}&= \frac1{27}(432-448+312-48+2)t^{12}=\frac{250}{27}t^{12}\\
\delta_{\{z=-2t^3\}}&= (-2t^3)^5(-54t^3+4t^3) = 1600 t^{18}.
\end{align*}
Then
\begin{align*}
 u|_{\{z=-2t^3\}}&=(-9g/2f)|_{\{z=-2t^3\}}=5t^4\\
\gamma_{\{z=-2t^3\}}&=\delta_{\{z=-2t^3\}}/(u|_{\{z=-2t^3\}})^2 = 64t^{10}.
\end{align*}

On the other hand,
to compute the invariants along $\{z=0\}$, we begin with
\begin{align*}
f|_{\{z=0\}}&= -\frac13t^8\\
g|_{\{z=0\}}&= \frac2{27}t^{12}\\
\delta_{\{z=0\}}&= (2t^3)^2(4t^3)=16t^9.
\end{align*}
Thus, $\beta_{\{z=0\}}=(-9g/2f)|_{\{z=0\}}=t^4$,
and $\gamma_{\{z=0\}}=\delta_{\{z=0\}}/\beta_{\{z=0\}}^2=16t$.

For example 17, we complete the cube on
\[ y^2 = x^3+tzx^2+z^3x+tz^4\]
and find that
\begin{align*}
f &= z^2(z-\frac13t^2)\\
g &= z^3(\frac23tz+\frac2{27}t^3)\\
 \Delta &=   4(z+t^2)^2\,z^7,
\end{align*}
from which we can easily compute the invariants along the two components
of the discriminant.  For $\{z=-t^2\}$ we have type $I_2$ and
\[ u|_{\{z=-t^2\}} = \frac{-9g|_{\{z=-t^2\}}}{2f|_{\{z=-t^2\}}}
= \frac{-9\cdot (16/27)t^9}{-2\cdot (4/3) t^6} = 2t^3,\]
while $\delta_{\{z=-t^2\}} = 4t^{14}$ so that $\gamma_{\{z=-t^2\}}
= 4t^{14}/(2t^3)^2 = t^8$.

On the other hand, for $\{z=0\}$ we have type $I_1^*$ and
\[ \gamma_{z=0}=(u/z)|_{\{z=0\}} = \frac{-9\cdot (2/27)t^3}{2\cdot (-1/3)t^2}
=t.\]
Since $\delta_{\{z=0\}}=4t^4$, we have $\beta_{\{z=0\}}=4t^4/t^3=4t$.

Examples 18 and 19 are quite similar to these.

Example 20 is the most complicated one.
We start with
\begin{equation} \label{ex:20a}
 y^2 = x^3 -3t^4 x^2 -6t^2z^2x+16t^6z^2+z^4
\end{equation}
and complete the cube with $x=\tilde{x}+t^4$ to obtain
\begin{equation} \label{ex:20b}
 y^2 = \tilde{x}^3 + (-6t^2z^2-3t^8)\tilde{x} + (z^4+10t^6z^2-2t^{12}).
\end{equation}
Thus, the discriminant is
\[ 4(-6t^2z^2-3t^8)^3+27(z^4+10t^6z^2-2t^{12})^2
=27 z^2(z-2t^3)^3(z+2t^3)^3.\]
There are three relevant components of the discriminant locus, with
fibers of types $I_2$, $I_3$, and $I_3$.

We have
\begin{align*}
f|_{\{z=0\}} &= -3t^8\\
g|_{\{z=0\}} &= -2t^{12}\\
\delta_{\{z=0\}} &= -27\cdot64 t^{18},
\end{align*}
and so $u|_{\{z=0\}}= (-9g/2f)|_{\{z=0\}}=-3t^4$ which implies
that $\gamma_{\{z=0\}}=\delta_{\{z=0\}}/(u|_{\{z=0\}})^2=-192t^{10}$.

On the other hand,
\begin{align*}
f|_{\{z=\pm2t^3\}} &= (-24-3)t^8=-27t^8\\
g|_{\{z=\pm2t^3\}} &= (16+40-2)t^{12}=54t^{12}\\
\delta_{\{z=\pm2t^3\}} &= 27(\pm2t^3)^2(\pm4t^3)^3 = \pm 27\cdot 256 t^{15},
\end{align*}
so $\beta_{\{z=\pm2t^3\}} =
(-9g/2f)|_{\{z=\pm2t^3\}} =9t^4$
and $\gamma_{\{z=\pm2t^3\}} =
\delta_{\{z=\pm2t^3\}}/
\beta_{\{z=\pm2t^3\}}^2=\pm\frac{256}3t^{7}$.

\bigskip

Each of our
new examples was constructed using the data from
\cite{gorenstein-weyl}, specialized to the particular geometric situation
we were constructing, and
the starting point for each of these derivations
 was a subdiagram of a Dynkin diagram.
For
this reason, we expect that the matter representations associated to these
examples should be obtained from the algebra inclusion corresponding
to those Dynkin sub-diagrams.  In the one ambiguous case (example 10),
this leads to a precise prediction for the matter representation: it
should be $\fund \oplus \Lambda^2 \oplus \Lambda^3$.

To analyze the matter representation for each of these examples, we take the
inclusion of (simply-laced) Dynkin diagrams $R\subset R'$ which corresponds
to the degeneration in the example, and decompose the adjoint representation
of the Lie algebra $\mathfrak{g}(R')$ under the action of the
Lie algebra $\mathfrak{g}(R)$. The nontrivial constituents
 other than the adjoint
representation of $\mathfrak{g}(R)$, when restricted to the Lie
algebra $\mathfrak{g}$, comprise the corresponding
matter representation.  This is determined by means of
 ``branching rules'' for the adjoint
representation, which are
easily obtained from a reference such as \cite{Slansky:1981yr}
or \cite{MR604363}.
We collect the information we need in Table~\ref{tab:ex:branching}
(most of which was already presented in \cite{grouprep}).
The Table lists a representation $\rho_{R,R'}$ with the property
that $\adj_{\mathfrak{g}(R')}$ decomposes as a representation over
$\mathfrak{g}(R)$ into
\[ \adj_{\mathfrak{g}(R')} = \adj_{\mathfrak{g}(R)} \oplus \rho_{R,R'}
\oplus \overline{\rho_{R,R'}} \oplus
\mathbf{1}^{\oplus (\operatorname{rank}(R') - \operatorname{rank}(R))}.\]
(Note that if $\mathfrak{g}$ is not simply-laced then $\mathfrak{g}(R)$
is the covering algebra and the actual matter representation
is $\rho_{R,R'}|_{\mathfrak{g}}$.)
In the Table, we have included two entries for $A_7\subset E_8$, since there are
known to be two different algebra embeddings, and their representation
theory differs.

\begin{table}[ht]
\begin{center}

\begin{tabular}{|c|c|c|c|} \hline
&$R$ & $ R'$ & $\rho_{R,R'}$ \\ \hline
1&$A_{m-1}$&$A_m$ & $\fund$ \\
2&$A_{2n-1}$ & $D_{2n}$ & $\Lambda^2$\\
3& $A_{2n}$ & $D_{2n+2}$ & $\fund^{\oplus2}\oplus\Lambda^2$\\
4&$A_{2n}$& $D_{2n+1}$&$\Lambda^2$\\
5&$D_m$ & $D_{m+1}$ & $\vect$\\
6&$A_1$&$ A_{2}$ & $\fund$ \\
7&$A_2$&$D_4$ &$\fund^{\oplus 3}$\\
8&$A_5$&$ E_6$ &$\mathbf{1}\oplus\Lambda^3$\\
9&$A_6$&$ E_7$ &$\fund\oplus\Lambda^3$\\
10&$A_7$&$ E_8$ &$\fund\oplus\Lambda^2\oplus\Lambda^3$\\
10$'$&$A_7$&$ E_8$ &$\mathbf{1}\oplus(\Lambda^2)^{\oplus 2}\oplus\frac12\Lambda^4$\\
11&$D_5$&$ E_6$ &$\spinrep_+$\\
12&$D_6$&$ E_7$ &$\mathbf{1}\oplus\spinrep_+$\\
13&$D_7$&$ E_8$ &$\vect\oplus\spinrep_+$\\
14&$E_6$&$ E_7$ &$\mathbf{27}$\\
15&$E_7$&$ E_8$ &$\mathbf{1}\oplus\mathbf{56}$\\
16&$A_1{+}A_4$ & $E_6$ &
$(\mathbf{1}\otimes\fund)\oplus
(\fund\otimes\Lambda^2) $\\
17&$A_1+D_5$& $E_7$ &$(\mathbf{1}\otimes\vect)\oplus(\fund\otimes\spinrep_+)$\\
18&$A_1+E_6$& $E_8$
&$(\mathbf{1}\otimes\mathbf{27})\oplus(\fund\otimes\mathbf{1})\oplus(\fund\otimes\mathbf{27})$\\
19&$A_2+A_4$& $E_7$ &
$(\mathbf{1}\otimes\fund)\oplus
(\fund\otimes\fund)
\oplus
(\fund\otimes\Lambda^2)
$\\
20&$A_1{+}A_2{+}A_2$ & $E_6$ &
$
(\mathbf{1} \otimes
\fund\otimes\fund) \oplus (\fund\otimes\mathbf{1}\otimes\mathbf{1})
\oplus
(\fund \otimes \fund \otimes \fund)
$\\
\hline
\end{tabular}

\end{center}
\smallskip
\caption{Branching rules for the examples.}\label{tab:ex:branching}
\end{table}

In addition to the ambiguity of embedding $A_7$ into $E_8$, there is
a second ambiguity: there are two embeddings of
 $D_6$ into $E_7$ (or of $I_2^*$ into
$III$) one of which yields $\spinrep_+$ in the decomposition and the other
of which yields
$\spinrep_-$.  There is an isomorphism which maps one of these
representations to the other, and it can be realized in the
geometry as well.  That is, depending on how we label the roots
and weights of $D_6$ as geometric objects, either representation
can occur.

\section{Anomaly cancellation}\label{sec:about}

We now verify that each of our examples satisfies anomaly cancellation.  That is,
for each type of singular point, we compute the contributions to the
Tate cycles and compare them to the contribution to the matter cycle;
we also compute intersection numbers for all pairs of discriminant
components meeting at the point.

\begin{itemize}
\item
Example 1 has fibers of type $I_m$ along $\Sigma$
with no monodromy at the special point $t=0$, so it can
represent either $\mathfrak{su}(m)$ or $\mathfrak{sp}([m/2])$.  $\gammaSigma$
has a zero of order $1$ and $\betaSigma$ is nonvanishing, so by
consulting Table~\ref{tab:reps} we see that the Tate representation
is the fundamental representation
(for either gauge algebra, and any parity of $m$).
On the other hand, since no basechange is required for simultaneous
resolution (i.e., $k=1$), Table~\ref{tab:ex:branching} implies that
the matter representation is also the fundamental representation.

\item Example 2 has fibers of type $I_{2n}$ along $\Sigma$ with monodromy at the special
point.  Thus, this example is naturally associated with the gauge algebra
$\mathfrak{sp}(n)$,
and we have $\gammaSigma$ nonvanishing but $\betaSigma$ has a zero of
order $1$.
From Table~\ref{tab:reps} we see that the Tate representation
is $\frac12\Lambda^2_{\text{irr}}$.  On the other hand, $k=2$ so
Table~\ref{tab:ex:branching} shows that we get
the charged part of
$\frac12\Lambda^2|_{\mathfrak{sp}(n)}=\frac12\mathbf{1}\oplus
\frac12\Lambda^2_{\text{irr}}$
for the matter representation.  Discarding the uncharged part of the
representation, we find agreement.

Note that because of the monodromy, we expect
this representation to contain $\frac12\rho_0$
as a summand.  In fact, the representation {\em coincides}\/ with
 $\frac12\rho_0$ in this case.

\item If we make a basechange $t=s^2$ in Example 2, we get an example
without monodromy which is suitable for the gauge algebra $\mathfrak{su}(2n)$.
This time $\betaSigma$ vanishes to order $2$ while $\gammaSigma$ is
nonvanishing, so the Tate representation is $\Lambda^2$.  In this case
$k=1$ so by Table~\ref{tab:ex:branching} the matter representation
is also $\Lambda^2$.

\item Example 3 has fibers of type $I_{2n+1}$ along $\Sigma$ with monodromy at
the special point, so that this example is associated with
the gauge algebra $\mathfrak{sp}(n)$.
Both $\betaSigma$ and $\gammaSigma$ have zeros of order $1$.  Thus,
in Table~\ref{tab:reps} we must add together two representations to
find the Tate representation: $\frac12\Lambda^2_{\text{irr}}
+\frac32 \fund$.
On the other hand, since $k=2$, restricting the appropriate entry
from Table~\ref{tab:ex:branching} to $\mathfrak{sp}(n)$ we find
matter representation
\[(\fund_{\mathfrak{su}(2n+1)} + \frac12\Lambda^2_{\mathfrak{su}(2n+1)})|_{\mathfrak{sp}(n)}
=  \mathbf{1} + \fund + \frac12\cdot\mathbf{1}+\frac12\fund+ \frac12\Lambda^2_{\text{irr}}.
\]
Thus, the charged parts agree.

In this case we can write the representation in the form
$\frac12\rho_0 + \frac12\fund$, and since the fundamental representation
is quarternionic, $\frac12\fund$ indeed defines a pre-quaternionic
representation as expected.

\item Example 4 has fibers of type $I_{2n+1}$ along $\Sigma$ with no local monodromy at
the special point; thus, this example can be associated with either
the gauge algebra $\mathfrak{g}(\Sigma)=\mathfrak{su}(2n+1)$ or the gauge algebra
$\mathfrak{g}(\Sigma)=\mathfrak{sp}(n)$.
$\betaSigma$ vanishes to order $2$ and $\gammaSigma$ is nonvanishing,
so Table~\ref{tab:reps} tells us that the charged part of
the Tate representation is
$\Lambda^2|_{\mathfrak{g}(\Sigma)}$.  On the other hand, since $k=1$,
Table~\ref{tab:ex:branching} shows that the charged part of the matter
representation is also $\Lambda^2|_{\mathfrak{g}(\Sigma)}$.

\item Example 5 has fibers of type $I_{m-4}^*$ along $\Sigma$ with local monodromy at
the special point, so it must represent
the gauge algebra $\mathfrak{so}(2m-1)$.
  $\betaSigma$ has a zero of order $1$ and
$\gammaSigma$ is nonvanishing, so by
consulting Table~\ref{tab:reps} we see that the Tate representation
is $\frac12\vect$.  On the other hand, $k=2$ so by Table~\ref{tab:ex:branching}
we find that the matter representation is also $\frac12\vect$.
As in Example~2, this representation also coincides
with $\frac12\rho_0$.

\item If we make a basechange $t=s^2$ in Example 5, we get an example
without monodromy which is suitable for the gauge algebra $\mathfrak{so}(2m)$.
This time $\betaSigma$ vanishes to order $2$ while $\gammaSigma$ is
nonvanishing, so the Tate representation is the vector representation.
In this case
$k=1$ so by Table~\ref{tab:ex:branching} the matter representation
is also the vector representation.

\item Example 6 has fibers of type $III$ along $\Sigma$ with no local monodromy, and
the gauge algebra is $\mathfrak{su}(2)$.  $\gammaSigma$ has a zero of
order $1$, so by Table~\ref{tab:reps}, the Tate representation is
$2\cdot\fund$.  On the other hand, since $k=\frac12$, from
Table~\ref{tab:ex:branching} we see that the matter representation is
also $2\cdot\fund$.

\item Example 7 has fibers of type $IV$ along $\Sigma$ with local monodromy,
and so is appropriate for the gauge algebra $\mathfrak{sp}(1)$.
$\betaSigma$ has a zero of order $1$, so by Table~\ref{tab:reps}, the Tate representation is
$\frac32\cdot\fund$.  On the other hand, since $k=2$, from
Table~\ref{tab:ex:branching} we see that the matter representation is
also $\frac32\cdot\fund$.  Once again, we have monodromy
and this coincides with $\frac12\rho_0$.

\item If we make a basechange $t=s^2$ in Example 7, we get an example
without monodromy which is suitable for the gauge algebra $\mathfrak{su}(3)$.
This time $\betaSigma$ has a zero of order $2$,
so the Tate representation is $3\cdot \fund$.
In this case
$k=1$ so by Table~\ref{tab:ex:branching} the matter representation
is also $3\cdot\fund$.

\item Example 8 has fibers of type $I_6$ along $\Sigma$ without local monodromy,
and so is appropriate for either the gauge algebra $\mathfrak{g}(\Sigma)=\mathfrak{su}(6)$
or the gauge algebra $\mathfrak{g}(\Sigma)=\mathfrak{sp}(3)$.
$\betaSigma$ has a zero of order $2$, and $\gammaSigma$ has a pole of
order $1$, so the Tate representation is determined as the difference
of two entries in Table~\ref{tab:reps}: it is the charged part of
\begin{equation} \label{eq:Tate8}
\Lambda^2|_{\mathfrak{g}(\Sigma)} - \fund|_{\mathfrak{g}(\Sigma)}.
\end{equation}
  On the other hand, since $k=2$, we see from
Table~\ref{tab:ex:branching} that the matter representation is the
charged part of
$\frac12\Lambda^3|_{\mathfrak{g}(\Sigma)}$, that is,
\begin{equation} \label{eq:matter8}
 \rho_{\text{matter}} = \begin{cases} \frac12\Lambda^3
& \text{if } \mathfrak{g}(\Sigma)=\mathfrak{su}(6)\\
\frac12\Lambda^3_{\text{irr}} + \frac12\fund
& \text{if } \mathfrak{g}(\Sigma)=\mathfrak{sp}(3)
\end{cases}.
\end{equation}
Note that $\Lambda^3$ is a quaternionic representation of $\mathfrak{su}(6)$,
and that both $\Lambda^3_{\text{irr}}$ and $\fund$ are quaternionic
representations of $\mathfrak{sp}(3)$, so this matter representation
is pre-quaternionic.

  To verify anomaly cancellation
in this case, we need to use Casimir equivalence: by the first
and fifth lines in Table~\ref{tab:casimir-equivalences}, the
Tate representation \eqref{eq:Tate8} is Casimir equivalent to
the matter representation \eqref{eq:matter8} for both gauge algebras.

\item Example 9 has fibers of type $I_7$ along $\Sigma$ with local monodromy,
and so is appropriate for
 the gauge algebra $\mathfrak{g}(\Sigma)=\mathfrak{sp}(3)$.
$\betaSigma$ has a zero of order $3$, and $\gammaSigma$ has a pole of
order $2$, so the Tate representation is determined as the difference
of two entries in Table~\ref{tab:reps}: it is
\begin{equation} \label{eq:Tate9}
\frac32 \Lambda^2_{\text{irr}} + \frac32 \fund - 2\fund =
\frac32 \Lambda^2_{\text{irr}} - \frac12 \fund.
\end{equation}
  On the other hand, since $k=2$, we see from
Table~\ref{tab:ex:branching} that the matter representation is
\begin{equation} \label{eq:matter9}
(\frac12\fund_{\mathfrak{su}(7)} + \frac12 \Lambda^3_{\mathfrak{su}(7)})|_{\mathfrak{sp}(3)}
= \fund +\frac12\Lambda^2_{\text{irr}}+ \frac12 \Lambda^3_{\text{irr}}
= \frac12\rho_0 + \frac12\Lambda^3_{\text{irr}}.
\end{equation}
Since $\Lambda^3_{\text{irr}}$ is quaternionic, $\frac12\Lambda^3_{\text{irr}}$
does indeed define a pre-quaternionic representation as expected.

  To verify anomaly cancellation,
we again employ Table~\ref{tab:casimir-equivalences}
to conclude that $\frac12\Lambda^3_{\text{irr}}$ is Casimir
equivalent to $\Lambda^2_{\text{irr}}-\frac32\fund$.  Thus,
the matter representation is Casimir equivalent to
\[ \fund + \frac12\Lambda^2_{\text{irr}} + \Lambda^2_{\text{irr}}-\frac32\fund
= \frac32\Lambda^2_{\text{irr}}-\frac12\fund,\]
i.e., to the Tate representation for $\Sigma$.

\item If we make a basechange $t=s^2$ in Example 9, we get an example
without monodromy which is suitable for the gauge algebra $\mathfrak{su}(7)$.
This time $\betaSigma$ has a zero of order $6$, and $\gammaSigma$ has
a pole of order $4$, so the Tate representation is determined as the difference
$3\cdot \Lambda^2 - 4 \cdot\fund$.
In this case
$k=1$ so by Table~\ref{tab:ex:branching} the matter representation
is  $\fund + \Lambda^3$.
Now the second line in Table~\ref{tab:casimir-equivalences} shows
that these are Casimir equivalent.

\item Example 10 has fibers of type $I_8$ along $\Sigma$ with local monodromy,
and so is appropriate for
 the gauge algebra $\mathfrak{g}(\Sigma)=\mathfrak{sp}(4)$.
$\betaSigma$ has a zero of order $5$, and $\gammaSigma$ has a pole of
order $4$, so the Tate representation is determined as the difference
of two entries in Table~\ref{tab:reps}: it is the charged part of
\begin{equation} \label{eq:Tate10}
\frac52 \Lambda^2_{\text{irr}}- 4\fund
\end{equation}
  On the other hand, since $k=2$, we see from
Table~\ref{tab:ex:branching} that the matter representation takes
one of two forms: it is either (1) the
charged part of
\begin{equation} \label{eq:matter10}
\frac12(\fund + \Lambda^2 + \Lambda^3)|_{\mathfrak{sp}(4)}
= \frac12\cdot\mathbf{1} + \fund + \frac12\Lambda^2_{\text{irr}} +
\frac12\Lambda^3_{\text{irr}},
\end{equation}
which can be written in the form
\begin{equation} \label{eq:matter10bis}
\frac12\rho_0 + \fund + \frac12\Lambda^3_{\text{irr}},
\end{equation}
or (2) the charged part of
\begin{equation} \label{eq:matter10alt}
\frac12(\mathbf{1} + 2\cdot \Lambda^2 + \frac12 \Lambda^4)|_{\mathfrak{sp}(4)}
= \frac74\cdot\mathbf{1} + \frac54\Lambda^2_{\text{irr}} +
 \frac14\Lambda^4_{\text{irr}},
\end{equation}
which can be written in the form
\begin{equation} \label{eq:matter10altbis}
\frac12\rho_0 + \frac34\Lambda^2_{\text{irr}} +
 \frac14\Lambda^4_{\text{irr}}.
\end{equation}
The second case does not leave us with a pre-quaternionic representation,
so does not appear to be valid choice for the matter representation
(although it is Casimir equivalent to the Tate representation, as
follows from the last line of Table~\ref{tab:casimir-equivalences}).

On the other hand, since $\Lambda^3_{\text{irr}}$ is quaternionic,
$\fund +\frac12\Lambda^3_{\text{irr}}$ defines a
pre-quaternionic representation.
Moreover, anomaly cancellation works because $\frac12\Lambda^3_{\text{irr}}$
is Casimir equivalent to $2\cdot \Lambda^2_{\text{irr}}-5\cdot \fund$.

\item If we make a basechange $t=s^2$ in Example 10, we get an example
without monodromy which is suitable for the gauge algebra $\mathfrak{su}(8)$.
This time $\betaSigma$ has a zero of order $10$, and $\gammaSigma$ has
a pole of order $8$, so the Tate representation is determined as the difference
$5\cdot \Lambda^2 - 8 \cdot\fund$.
In this case
$k=1$ so by Table~\ref{tab:ex:branching} the matter representation
is the charged part of either $\fund + \Lambda^2 + \Lambda^3$ or
$\mathbf{1} + 2\cdot \Lambda^2 + \frac12\Lambda^4$.
Now the third line in Table~\ref{tab:casimir-equivalences} shows
that $\Lambda^3$ is Casimir equivalent to $4\cdot\Lambda^2-9\cdot\fund$,
from which the anomaly cancellation follows in the first case;
the fourth line shows that $\Lambda^4$ is Casimir equivalent to
$6\cdot\Lambda^2.-16\cdot\fund$, from which the anomaly
cancellation follows in the second case.  Thus, either choice seems
{\em a priori}\/ possible, but as we stated near the end of
Section~\ref{sec:examples}, our method of
construction of this example strongly suggests that the correct matter
representation is $\fund + \Lambda^2 + \Lambda^3$.

\item Example 11 has fibers of type $I^*_1$ along $\Sigma$ without local monodromy,
and so is appropriate either for the gauge algebra $\mathfrak{so}(9)$ or
for the gauge algebra $\mathfrak{so}(10)$.
$\betaSigma$ is nonvanishing, and $\gammaSigma$ has a zero of order $1$, so by Table~\ref{tab:reps}, the Tate representation is
$\spinrep_*$.
On the other hand, since $k=1$, from
Table~\ref{tab:ex:branching} we see that the matter representation is
also $\spinrep_*$.  (More precisely the matter representation in the case
of $\mathfrak{so}(10)$ can be written as either $\spinrep_+$ or $\spinrep_-$,
which give the same quaternionic representation since they are complex
conjugates of each other.  The representation for $\mathfrak{so}(9)$
is the restriction of either of these to $\mathfrak{so}(9)$, which
is $\spinrep$ in either case.)

\item Example 12 has fibers of type $I^*_2$ along $\Sigma$ without local monodromy,
and so is appropriate either for the gauge algebra $\mathfrak{so}(11)$ or
for the gauge algebra $\mathfrak{so}(12)$.
$\betaSigma$ is nonvanishing, and $\gammaSigma$ has a zero of order $1$, so by Table~\ref{tab:reps}, the Tate representation is
$\frac12\spinrep_*$.
On the other hand, since $k=2$, from
Table~\ref{tab:ex:branching} we see that the charged part of
the matter representation is
also $\frac12\spinrep_*$.
Note that $\spinrep_*$ is quaternionic so that $\frac12\spinrep_*$ is
a pre-quaternionic representation.

\item Example 13 has fibers of type $I_3^*$ along $\Sigma$ with local monodromy,
and so is appropriate for the gauge algebra $\mathfrak{so}(13)$.
$\betaSigma$ has a pole of order $3$, and $\gammaSigma$ has a zero of order $2$,
so the Tate representation is a difference of entries in Table~\ref{tab:reps}:
\[-\frac32\cdot\vect + \frac12\cdot\spinrep_* + 2\cdot\vect
= \frac12\cdot\spinrep_* + \frac12\cdot\vect .\]
On the other hand, since $k=2$, from
Table~\ref{tab:ex:branching} we see that the matter representation is
also
\[
\frac12\cdot\vect+\frac12\cdot\spinrep_* = \frac12\rho_0 + \frac12 \spinrep_*.\]
Since $\spinrep_*$ is quaternionic, $\frac12\spinrep_*$
defines a pre-quaternionic representation as expected.

\item If we make a basechange $t=s^2$ in Example 13, we get an example
without monodromy which is suitable for the gauge algebra $\mathfrak{so}(14)$.
This time $\betaSigma$ has a pole of order $6$, and $\gammaSigma$ has a zero of order $4$,
so the Tate representation is a difference of entries in Table~\ref{tab:reps}:
\[-3\cdot\vect + \spinrep_* + 4\cdot\vect
= \vect+\spinrep_*  .\]
In this case
$k=1$ so by Table~\ref{tab:ex:branching} the matter representation
is also $\vect+\spinrep_*$.

\item Example 14 has fibers of type $IV^*$ along $\Sigma$ with local monodromy,
and so is appropriate for the gauge algebra $\mathfrak{f}_4$.
$\betaSigma$ has a zero of order $1$, so by Table~\ref{tab:reps},
the Tate representation is $\frac12\cdot \mathbf{26}$.
On the other hand, since $k=2$, from
Table~\ref{tab:ex:branching} we see that the
the matter representation is the charged part of
\[\frac12\cdot \mathbf{27}|_{\mathfrak{f}_4}= \frac12\cdot\mathbf{1}
+ \frac12\cdot\mathbf{26},\]
the same as the Tate representation.  Again, as is typical with
monodromy, this representation coincides with $\frac12\rho_0$.

\item If we make a basechange $t=s^2$ in Example 14, we get an example
without monodromy which is suitable for the gauge algebra $\mathfrak{e}_6$.
This time $\betaSigma$ has a zero of order $2$, so by Table~\ref{tab:reps},
the Tate representation is $ \mathbf{27}$.
In this case
$k=1$ so by Table~\ref{tab:ex:branching} the matter representation
is also $\mathbf{27}$,
the same as the Tate representation.

\item Example 15 has fibers of type $III^*$ along $\Sigma$ without local monodromy,
and is appropriate for the gauge algebra $\mathfrak{e}_7$.
$\betaSigma$ is nonvanishing, and $\gammaSigma$ has a zero of order $1$, so by Table~\ref{tab:reps}, the Tate representation is
$\frac12\cdot\mathbf{56}$.
On the other hand, since $k=2$, from
Table~\ref{tab:ex:branching} we see that the matter representation is
also $\frac12\mathbf{56}$.  Since $\mathbf{56}$ is a quaternionic representation,
this is pre-quaternionic.

\item Example 16 has fibers of type $I_2$ along $\Sigma$
and fibers of type $I_5$ along $\Sigma'$, which meet at a common
point.  There is no local monodromy along $\Sigma$, so
either gauge algebra 
$\mathfrak{g}(\Sigma)\oplus\mathfrak{g}(\Sigma')
=\mathfrak{su}(2)\oplus\mathfrak{sp}(2)$
or $\mathfrak{g}(\Sigma)\oplus\mathfrak{g}(\Sigma')
=\mathfrak{su}(2)\oplus\mathfrak{su}(5)$ is possible; we
will make the calculation for the second one.
We need to calculate the Tate representation for each component,
as well as the representation-multiplicity for the pair: the
data for this is contained in Table~\ref{tab:examples2bis}.
Along $\Sigma$,  $\gamma_{\Sigma}$ has a zero of order 10,
so the Tate representation for $\Sigma$
 is $10\cdot\fund$ (a representation of $\mathfrak{su}(2)$).
On the other hand,
along $\Sigma'$, $\beta_{\Sigma'}$ has a zero of order 4 and
$\gamma_{\Sigma'}$
has a zero of order 1, so adding the corresponding entries in
Table~\ref{tab:reps} we see that the Tate representation
for $\Sigma'$ is  $2\Lambda^2 + \fund$
(a representation of $\mathfrak{su}(5)$).

The expression for the discriminant of this example from
Table~\ref{tab:examples2bis} shows that
the components $\Sigma$ and $\Sigma'$ have intersection multiplicity $3$;
thus, we need
\[  \frac{ \Tr_\rho (F_{\mathfrak{g}(\Sigma)}^2)
\Tr_\rho (F_{\mathfrak{g}(\Sigma')}^2)}
{\tr_{\mathfrak{g}(\Sigma)}( F^2) \tr_{\mathfrak{g}(\Sigma')}( F^2)} = 3,\]
for the representation $\rho$ associated to our special point $z=t=0$.

Now we turn to representation theory to compute the actual matter
representation.  As indicated in Table~\ref{tab:ex:branching},
since $k=1$, the matter representation is
\[ \rho=(\mathbf{1}\otimes\fund)\oplus(\fund\otimes\Lambda^2).\]
There are three things to check in order to verify anomaly cancellation:
\begin{enumerate}
\item
When the matter representation is restricted to
$\mathfrak{g}(\Sigma)=\mathfrak{su}(2)$, the
charged part is 10
copies of the fundamental, since $\Lambda^2_{\mathfrak{su}(5)}$
has dimension $10$.  This agrees with the Tate representation for $\Sigma$.
\item
When restricted to
$\mathfrak{g}(\Sigma')=\mathfrak{su}(5)$, the
charged part is 2 copies of $\Lambda^2$ plus a fundamental, since
the fundamental of $\mathfrak{su}(2)$ has dimension $2$.  This agrees
with the Tate representation for $\Sigma'$.
\item
Since
\begin{align*}
\Tr_{\fund_{\mathfrak{su}(2)}} F^2 &= \tr_{\mathfrak{su}(2)}F^2\\
\Tr_{\Lambda^2_{\mathfrak{su}(5)}} F^2 &= 3 \tr_{\mathfrak{su}(5)} F^2,
\end{align*}
we have
$\mu_{\rho}(\mathfrak{g}(\Sigma),\mathfrak{g}(\Sigma')) = 1 \cdot 3 = 3$,
so the
representation-multiplicity coincides with the intersection multiplicity.
\end{enumerate}
Thus, the anomalies cancel.

\item Example 17
 has fibers of type $I_2$ along $\Sigma$
and fibers of type $I_1^*$ along $\Sigma'$, which meet at a common
point.  There is local monodromy along $\Sigma'$, so
the gauge algebra must be $\mathfrak{su}(2)\oplus\mathfrak{so}(9)$.
According to
Table~\ref{tab:examples2bis},
along $\Sigma$,  $\gamma_{\Sigma}$ has a zero of order 8,
so the Tate representation for $\Sigma$
 is $8\cdot\fund$.
On the other hand,
along $\Sigma'$, $\beta_{\Sigma'}$ and
$\gamma_{\Sigma'}$
each have a zero of order 1, so adding the corresponding entries in
Table~\ref{tab:reps} we see that the Tate representation
for $\Sigma'$ is $\frac12\vect\oplus\spinrep_*
= \frac12\rho_0 \oplus \spinrep_*$.  (We expect the $\frac12\rho_0$
summand since this is a branch point for $\mathfrak{so}(9)$.)
Also,
the expression for the discriminant of this example from
Table~\ref{tab:examples2bis} shows that
the components $\Sigma$ and $\Sigma'$ have intersection multiplicity $2$.

Now we turn to representation theory to compute the actual matter
representation.  As indicated in Table~\ref{tab:ex:branching},
since $k=2$, the matter representation is
\[ \rho=
\left.\left(\frac12(\mathbf{1}\otimes\vect)\oplus\frac12(\fund\otimes\spinrep_*)\right)\right|_{\mathfrak{su}(2)\oplus\mathfrak{so}(9)}.\]
The first summand is $\frac12\rho_0$ for $\mathfrak{so}(9)$, and
the second summand is pre-quaternionic, since $\fund$ is a quaternionic
representation of $\mathfrak{su}(2)$.

There are three things to check in order to verify anomaly cancellation:
\begin{enumerate}
\item
When the matter respresentation is restricted to
$\mathfrak{g}(\Sigma)=\mathfrak{su}(2)$, the
charged part is 8
copies of the fundamental, since the spinor representation
of $\mathfrak{so}(9)$ has dimension $16$ and we have half of that.
  This agrees with the Tate representation for $\Sigma$.
\item
When restricted to
$\mathfrak{g}(\Sigma')=\mathfrak{so}(9)$, the
charged part is $\frac12\vect\oplus\spinrep$, since $\frac12\fund$
has dimension $1$ as a representation of $\mathfrak{su}(2)$.
This agrees
with the Tate representation for $\Sigma'$.
\item
To evaluate the representation-multiplicity, we use the scaling property
of that quantity.  Thus, for a half-representation, we get half of
the representation-multiplicity of the corresponding full representation.
In our case, since
\begin{align*}
\Tr_{\fund_{\mathfrak{su}(2)}} F^2 &= \tr_{\mathfrak{su}(2)}F^2\\
\Tr_{\spinrep_{\mathfrak{so}(9)}} F^2 &= 4 \tr_{\mathfrak{so}(9)} F^2,
\end{align*}
we have $\mu_{\rho}(\mathfrak{g}(\Sigma),\mathfrak{g}(\Sigma')) =\frac12( 1 \cdot 4) = 2$,
and we see that the
representation-multiplicity coincides with the intersection multiplicity.
\end{enumerate}
Thus, the anomalies cancel.

A similar analysis applies after we make a basechange $t=s^2$,
obtaining a representation for $\mathfrak{su}(2)\oplus\mathfrak{so}(10)$
with $k=1$.  We omit the details.

\item Example 18
 has fibers of type $I_2$ along $\Sigma$
and fibers of type $IV^*$ along $\Sigma'$, which meet at a common
point.  There is local monodromy along $\Sigma'$, so
the gauge algebra must be $\mathfrak{su}(2)\oplus\mathfrak{f}_4$.
According to
Table~\ref{tab:examples2bis},
along $\Sigma$,  $\gamma_{\Sigma}$ has a zero of order 14,
so the Tate representation for $\Sigma$
 is $14\cdot\fund$.
On the other hand,
along $\Sigma'$, $\beta_{\Sigma'}$ has a zero of order $3$, so that
the Tate representation
for $\Sigma'$ is $\frac32\mathbf{26}
= \frac12\rho_0 \oplus \mathbf{26}$.  (We expect the $\frac12\rho_0$
summand since this is a branch point for $\mathfrak{f}_4$.)
Also,
the expression for the discriminant of this example from
Table~\ref{tab:examples2bis} shows that
the components $\Sigma$ and $\Sigma'$ have intersection multiplicity $3$.

 As indicated in Table~\ref{tab:ex:branching},
since $k=2$, and using the fact that $\mathbf{27}|_{\mathfrak{f}_4}=\mathbf{1}\oplus\mathbf{26}$, the matter representation is
\[ \rho=
(\fund\otimes\mathbf{1})\oplus
\frac12(\mathbf{1}\otimes\mathbf{26})\oplus
\frac12(\fund\otimes\mathbf{26})
.\]
The middle summand is $\frac12\rho_0$ for $\mathfrak{f}_4$, and
the last summand is pre-quaternionic, since $\fund$ is a quaternionic
representation of $\mathfrak{su}(2)$.

There are three things to check in order to verify anomaly cancellation:
\begin{enumerate}
\item
When the matter representation is restricted to
$\mathfrak{g}(\Sigma)=\mathfrak{su}(2)$, the
charged part is 14
copies of the fundamental, since the total dimension of the
representations on the $\mathfrak{f}_4$ side is $2+26=28$
and we have half of that.
  This agrees with the Tate representation for $\Sigma$.
\item
When restricted to
$\mathfrak{g}(\Sigma')=\mathfrak{f}_4$, the
charged part is $\frac32\mathbf{26}$, since the total dimension of
the representations on the $\mathfrak{su}(2)$ side is $1+2=3$
and we have half of that.
This agrees
with the Tate representation for $\Sigma'$.
\item
To evaluate the representation-multiplicity, we again use the scaling property
of that quantity to evaluate it for a half-representation.
We have
\begin{align*}
\Tr_{\fund_{\mathfrak{su}(2)}} F^2 &= \tr_{\mathfrak{su}(2)}F^2\\
\Tr_{\mathbf{26}_{\mathfrak{f}_4}} F^2 &= 6 \tr_{\mathfrak{f}_4} F^2,
\end{align*}
so that $\mu_{\rho}(\mathfrak{g}(\Sigma),\mathfrak{g}(\Sigma')) =\frac12( 1 \cdot 6) = 3$.
Hence, the
representation-multiplicity coincides with the intersection multiplicity.
\end{enumerate}
Thus, the anomalies cancel.

A similar analysis applies after we make a basechange $t=s^2$,
obtaining a representation for $\mathfrak{su}(2)\oplus\mathfrak{e}_6$
with $k=1$.  We omit the details.

\item Example 19
 has fibers of type $I_3$ along $\Sigma$
and fibers of type $I_5$ along $\Sigma'$, which meet at a common
point.  There is local monodromy along both $\Sigma$ and $\Sigma'$, so
the gauge algebra must be $\mathfrak{sp}(1)\oplus\mathfrak{sp}(2)$.
According to
Table~\ref{tab:examples2bis},
along $\Sigma$,  $\betaSigma$ has a zero of
order $3$ and $\gamma_{\Sigma}$ has a zero of order 6,
so the Tate representation for $\Sigma$
 is the sum of two terms from Table~\ref{tab:reps},
totaling $\frac{15}2\cdot\fund = \frac12\rho_0 + \frac{13}2\fund$.
(We expect the $\frac12\rho_0$
summand since this is a branch point for $\mathfrak{sp}(1)$;
the remaining representation $\frac{13}2\fund$ is pre-quaternionic.)
On the other hand,
along $\Sigma'$, $\beta_{\Sigma'}$ has a zero of order $3$ and
$\gamma_{\Sigma'}$ has
 a zero of order 2, so adding the corresponding entries in
Table~\ref{tab:reps} we see that the Tate representation
for $\Sigma'$ is $\frac32\Lambda^2_{\text{irr}}\oplus\frac72\fund
= \frac12\rho_0 \oplus \Lambda^2_{\text{irr}} \oplus \frac52\fund$.
(We expect the $\frac12\rho_0$
summand since this is a branch point for $\mathfrak{sp}(4)$;
the remaining representation $\Lambda^2_{\text{irr}} \oplus \frac52\fund$
is pre-quaternionic.))
Also,
the expression for the discriminant of this example from
Table~\ref{tab:examples2bis} shows that
the components $\Sigma$ and $\Sigma'$ have intersection multiplicity $2$.

Now we turn to representation theory to compute the actual matter
representation.  As indicated in Table~\ref{tab:ex:branching},
since $k=2$, and using the facts that
\begin{align*}
\fund_{\mathfrak{su}(3)}|_{\mathfrak{sp}(1)} &= \mathbf{1}\oplus\fund \\
\fund_{\mathfrak{su}(5)}|_{\mathfrak{sp}(2)} &= \mathbf{1}\oplus\fund \\
\Lambda^2_{\mathfrak{su}(5)}|_{\mathfrak{sp}(2)} &= \mathbf{1}\oplus
\fund\oplus\Lambda^2_{\text{irr}},
\end{align*}
the matter representation is the charged part of
\[
\left.\left(
\frac12(\mathbf{1}\otimes\fund)
\oplus\frac12(\fund\otimes\fund)
\oplus\frac12(\fund\otimes\Lambda^2)\right)
\right|_{\mathfrak{sp}(1)\oplus\mathfrak{sp}(2)},
\]
which equals
\[ \frac32(\mathbf{1}\otimes\fund)\oplus (\fund\otimes\mathbf{1})
\oplus (\fund\otimes\fund)
\oplus \frac12(\mathbf{1}\otimes\Lambda^2_{\text{irr}})
\oplus \frac12(\fund\otimes\Lambda^2_{\text{irr}})
.\]
Subtracting $\frac12\rho_0(\mathfrak{sp}(1))+\frac12\rho_0(\mathfrak{sp}(2))=
(\fund\otimes\mathbf{1})\oplus\frac12(\mathbf{1}\otimes\Lambda^2_{\text{irr}})
\oplus(\mathbf{1}\otimes\fund)
$
leaves a pre-quaternionic representation.

There are three things to check in order to verify anomaly cancellation:
\begin{enumerate}
\item
When the matter representation is restricted to
$\mathfrak{g}(\Sigma)=\mathfrak{sp}(1)$, the
charged part is $1 + \frac12\dim (\Lambda^2_{\text{irr}}) + \dim(\fund_{\mathfrak{sp}(2)})=1+ \frac52+4=\frac{15}2$
copies of the fundamental.
  This agrees with the Tate representation for $\Sigma$.
\item
When restricted to
$\mathfrak{g}(\Sigma')=\mathfrak{sp}(2)$, the
charged part is $\frac32+\dim(\fund_{\mathfrak{sp}(1)})= \frac72$
copies of the fundamental representation plus
$\frac12+\frac12\dim(\fund_{\mathfrak{sp}(1)})=\frac32$
copies of $\Lambda^2_{\text{irr}}$.
This agrees
with the Tate representation for $\Sigma'$.
\item
To evaluate the representation-multiplicity, we use the linearity property
of that quantity, getting contributions from two different irreducible
representations of $\mathfrak{sp}(1)\oplus\mathfrak{sp}(2)$.
We have
\begin{align*}
\Tr_{\fund_{\mathfrak{sp}(1)}} F^2 &= \tr_{\mathfrak{sp}(1)}F^2\\
\Tr_{\fund_{\mathfrak{sp}(2)}} F^2 &= \tr_{\mathfrak{sp}(2)}F^2\\
\Tr_{\Lambda^2_{\text{irr},\mathfrak{sp}(2)}} F^2 &= 2\tr_{\mathfrak{sp}(2)}F^2,
\end{align*}
and so
\begin{align*}
\mu_{\frac12(\fund\otimes\Lambda^2_{\text{irr}})}(\mathfrak{g}(\Sigma),\mathfrak{g}(\Sigma'))&= \frac12
(1\cdot2)=1\\
\mu_{\fund\otimes\fund}(\mathfrak{g}(\Sigma),\mathfrak{g}(\Sigma')) &= 1\cdot 1=1
\end{align*}
so the total representation multiplicity is $1+1=2$.  This
coincides with the intersection multiplicity.
\end{enumerate}
Thus, the anomalies cancel.

A similar analysis applies after we make a basechange $t=s^2$,
obtaining a representation for $\mathfrak{su}(3)\oplus\mathfrak{su}(5)$
with $k=1$.  We omit the details.

\item Finally, example 20 has fibers of type $I_3$ along $\Sigma_+$
and $\Sigma_-$, and fibers of type $I_2$ along $\Sigma$, which all
meet at a common point.  There is no local or global monodromy
for any of these, so the natural gauge algebra to consider is
$\mathfrak{su}(2)\oplus \mathfrak{su}(3)\oplus\mathfrak{su}(3)$.
We need to calculate the Tate representation of each component,
as well as the representation-multiplicity  for each pair
of components.
Along $\Sigma$, $\gammaSigma$ has a zero of order 10, so the
Tate representation for $\Sigma$ is $10\cdot\fund$.
Along $\Sigma_\pm$, $\beta_{\Sigma_\pm}$ has a zero of order 4 and
$\gamma_{\Sigma_\pm}$ has a zero of order 7, so the Tate representation
for $\Sigma_\pm$ is
\[\frac42\fund + 7\cdot \fund = 9\cdot \fund.\]
Also, from the form of the discriminant in Table~\ref{tab:examples2bis}
we see that each pair of components from $\Sigma$, $\Sigma_+$, and
$\Sigma_-$ meets with intersection multiplicity $3$.

Now from Table~\ref{tab:ex:branching} since $k=1$, the matter representation
is
\[\rho=
(\mathbf{1} \otimes
\fund\otimes\fund) \oplus (\fund\otimes\mathbf{1}\otimes\mathbf{1})
\oplus(\fund \otimes \fund \otimes \fund)
\]
as a representation of $\mathfrak{su}(2)\oplus\mathfrak{su}(3)\oplus
\mathfrak{su}(3)$.
There are several things to check:
\begin{enumerate}
\item
When the matter representation is restricted to
 $\mathfrak{g}(\Sigma)=\mathfrak{su}(2)$,
the charged part is $1+9$ copies of the fundamental representation,
$1$ from the middle term and $9$ from the last term.
This agrees with the Tate representation for $\Sigma$.
\item
When restricted to $\mathfrak{g}(\Sigma_\pm)
= \mathfrak{su}(3)$, the charged part is $3+6$ copies of the fundamental
representation, with $3$ copies coming from the first term and $2\cdot3$
copies coming from the last term.  This agrees with the Tate
representation for $\Sigma_\pm$.
\item
All of the representations here are fundamentals of $\mathfrak{su}(m)$,
so all of the ratios
of traces are $1$.  However, $(\Sigma,\Sigma_\pm)$ has three bifundamentals
(coming from the last term), so the representation-multiplicity is $3$.
\item Similarly, $(\Sigma_+,\Sigma_-)$ has three bifundamentals:
one from the first term in $\rho$ and two from the last term,
so the representation-multiplicity is again $3$.
\end{enumerate}
Thus, the anomalies cancel.

\end{itemize}

\section{Discussion} \label{sec:discussion}

As the examples in the past two Sections have shown, the geometry and
representation theory conspire in wonderful ways to ensure anomaly
cancellation in every case.  One of the remarkable things about the
present approach is how easy it is to calculate the various Tate cycles as
well as
the intersection multiplicites of pairs of divisors, starting from
the Weierstrass equation.  It is natural to wonder whether this data
is sufficient to determine the matter representation itself.

That is, suppose we are expecting an actual pre-quaternionic representation
at a given point.  Whatever it is, it will be Casimir equivalent to
some combination of the basic representations (such as $\adj$, $\fund$,
and $\Lambda^2$ in the $\mathfrak{su}(m)$ case), but that combination
would typically have both positive and negative coefficients.  Could it
be true that there is only one ``honest''
representation (up to complex conjugation
of factors) in the Casimir equivalence class?

We already have an example in hand---example 10---which shows that this
is too optimistic.  But one could hope that the number of representations
is small, and that the number of times there are duplications could
be controlled in some fashion.  This is a purely algebraic question
about the representation ring and the sub-semiring generated by
actual representations which deserves to be studied further.  One might
then hope to find an additional piece of geometric information which
would distinguish among the allowed representations in cases of ambiguity.

There are a number of other ways that this work could be usefully
extended.  First, we assumed that the gauge algebra has no abelian summands,
which geometrically corresponds to Mordell--Weil group of rank zero.
It would be very interesting to study cases with abelian summands
allowed.  Some steps in this direction are taken in \cite{park-taylor}.

Second, although our formulation allows components of the discriminant locus
to have singular points, we have not studied this case in detail.  Some
examples in addition to Sadov's appear in \cite{matter1} but it would
be good to have a more systematic treatment.  For example, the representations
of $\mathfrak{su}(m)$ which we studied in detail here are all Casimir equivalent
to combinations of $\fund$ and $\Lambda^2$ alone; allowing $\adj$ gives
a much richer class of representations (and also necessarily implies
that the relevant discriminant-component $\Sigma$ is singular).

And finally, although we made some comments about elliptic fibrations
with higher-dimensional bases or whose total space is not Calabi--Yau,
many of our results clearly extend to these settings and deserve a
more systematic treatment there.

\appendix
\section{Notation and terminology from gauge theory} \label{app:A}

\begin{lemma}\label{gaugebundle}  Let $Y$ be a manifold equipped  with a principal
$G$-bundle $\mathcal{G}$, called the ``gauge bundle'':
\begin{enumerate}
\item Each fiber $\ad(\mathcal{G})_x$ of $\ad(\mathcal{G})$ is
isomorphic to the Lie algebra $\mathfrak{g}$ of $G$, with
$\mathcal{G}_x$ acting on $\ad(\mathcal{G})_x$ via the adjoint
action of $G$ on $\mathfrak{g}$.
\item The curvature $F$ of the
gauge connection is an $\ad(\mathcal{G})$-valued two-form.
\item Similarly, if $Y$ is equipped with a (pseudo-)Riemannian
metric, then the curvature $R$ of the Levi--Civita connection is a
two-form taking values in the endomorphisms of the tangent bundle.
\item  Any representation $\rho$  of the Lie algebra
can be regarded as a homomorphism $\rho:\mathfrak{g}\to \End(V)$
for some (complex) vector space $V$.
As an endomorphism of $V$,
$\rho(F_x)$ can be raised to the $k^{\text{th}}$ power.
\end{enumerate}
\end{lemma}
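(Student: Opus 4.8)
The plan is to verify each item by recalling the standard construction from the theory of connections on principal bundles, since the lemma is essentially a catalogue of the differential-geometric objects that underlie the anomaly polynomial. For item (1), I would define $\ad(\mathcal{G})$ to be the bundle associated to $\mathcal{G}$ via the adjoint representation $\operatorname{Ad}\colon G\to\GL(\mathfrak{g})$, that is, $\ad(\mathcal{G})=\mathcal{G}\times_G\mathfrak{g}$ with $G$ acting on $\mathfrak{g}$ through $\operatorname{Ad}$. By construction each fiber $\ad(\mathcal{G})_x$ is a copy of $\mathfrak{g}$, and the residual action of $\mathcal{G}_x$ on that fiber is precisely the adjoint action of $G$ on $\mathfrak{g}$, which is the first claim.

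For items (2) and (3), the key observation is that the curvature of a principal connection is an $\operatorname{Ad}$-equivariant, horizontal $\mathfrak{g}$-valued two-form on the total space, and that such forms are in natural bijection with $\ad(\mathcal{G})$-valued two-forms on the base $Y$. I would first recall this descent statement—horizontality kills the vertical directions while equivariance makes the values well-defined in the associated bundle—and then apply it to the gauge connection to obtain (2). Item (3) is the special case in which the relevant principal bundle is the frame bundle of $TY$, with structure group $\GL$ (or $\operatorname{SO}$ once a metric is chosen); the curvature of the Levi--Civita connection is then the Riemann tensor, regarded as a two-form valued in the adjoint bundle of the frame bundle, and that adjoint bundle is canonically $\End(TY)$ since $\mathfrak{gl}\cong\End$ under conjugation.

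Item (4) is really just unwinding a definition: a finite-dimensional representation is by definition a homomorphism $\rho\colon\mathfrak{g}\to\End(V)$, so composing the fiberwise identification $\ad(\mathcal{G})_x\cong\mathfrak{g}$ from (1) with $\rho$ sends $F_x$ to an honest endomorphism $\rho(F_x)\in\End(V)$, which may then be composed with itself $k$ times; this is exactly what makes the symbols $\Tr_\rho F^k$ used throughout the paper well-defined. The only step that requires genuine care is the descent argument in (2) and (3): one must check that horizontality together with equivariance guarantees that the $\mathfrak{g}$-valued form upstairs passes to a well-defined $\ad(\mathcal{G})$-valued form on $Y$, independent of the choice of local trivialization. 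Everything else is a matter of reading off definitions, and all of it is standard, so I would keep the exposition brief and refer to the gauge-theory literature cited in the introduction.
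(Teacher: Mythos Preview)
Your proposal is correct and matches the paper's approach: the paper's own ``proof'' is a single sentence deferring to a standard reference (Slodowy's \emph{Simple singularities and simple algebraic groups}), so your plan to briefly unwind the associated-bundle and curvature-descent definitions and then cite the literature is exactly in the same spirit, just more explicit than what the authors actually wrote.
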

\begin{proof}  This can be derived,  for example, from \cite{MR584445}.
\end{proof}
\begin{definition} We denote by
$$\Tr_\rho F^k=\trace_V \rho(F)^k$$
the trace of the resulting endomorphism $\rho(F_x)^k$ of $V$.
\end{definition}
\begin{definition}  Similarly,
 $$\tr R^k= \trace _V v (R)^k,$$
 where  $v$ is the ``vector'' representation of the corresponding
orthogonal group. \end{definition}

 These expressions do not depend on the choice of isomorphism
to $\mathfrak{g}$, in fact  they are invariant under the adjoint
action of $G$ on $\mathfrak{g}$ and so is independent of choices.
 The above notation is then well defined.

\medskip

\section{Tate's algorithm}
\label{app:Tate}

Kodaira's analysis identifies the type of singular fiber along each
component of the discriminant locus, and this is almost enough to
identify the gauge group.  However, there is one additional piece
of information, provided by Tate's algorithm, which
specifies the monodromy of the family of Kodaira fibers over the
component.  That information has traditionally been presented in
the form which Tate gave it -- involving  (generic) changes
of coordinates in the Weierstrass (or Tate) model -- but here we formulate
the same information in a more intrinsic form.

Thus, we start with a Weierstrass equation
\begin{equation}\label{eq:Tate00}
y^2=x^3+fx+g
\end{equation}
and the associated discriminant\footnote{We use the normalization of
the discriminant which is common in the F-theory literature.}
\begin{equation}\label{eq:discrim00}
 \Delta=4f^3+27g^2
\end{equation}
and seek conditions
which specify both the type of the fiber (following Kodaira) and
the monodromy (following Tate).  We regard $f$ and $g$ as elements of
the ring $K[[z]]$ of formal power series in $z$ with coefficients
in  some specified field $K$.  (For simplicity, we assume that $K$
contains the complex number field $\mathbb{C}$.)
Typically, $z$ is a local parameter
whose vanishing describes a divisor $\Sigma$ in some algebraic variety
$B$, and $K$ is constructed from the
 field of rational functions on the algebraic variety by localization
and taking residue field.
Note that in $K[[z]]$ we are allowed to divide by any
nonzero element which is not a multiple of $z$; this means that all
results obtained through this algorithm only hold {\em generically}\/
on the algebraic variety $B$, and may fail to hold in some
particular coordinate systems.

To determine the desired  conditions on $f$, $g$, and $\Delta$, we
follow Tate's procedure described in sections 7 and 8 of \cite{MR0393039},
specializing to the case of the coefficient field $K$ having characteristic
zero, which allows us to simplify certain aspects of Tate's procedure.
We follow the numbering of cases given in Tate's paper. We shall have
occasion to use the
discriminant for a generalized Weierstrass model
\[ Y^2 = X^3 + u X^2 + v X + w,\]
which is easily calculated from \eqref{eq:fg} to be
\[ \Delta=4u^3w-u^2v^2-18uvw+27w^2\]
(cf.\ \cite[Section 1]{MR0393039} or
\cite[Appendix I]{grouprep}).

\underline{Case 1.}
If the fiber along $z=0$ is generically nonsingular, then
$z\mathrel{\not |}\Delta$.
This is case $I_0$ in Kodaira's classification, represented by the first
line of Table~\ref{tab:kodaira}.

\underline{Case 2.}
If the fiber along $z=0$ is generically singular, we may assume
that $z\mathrel{|}\Delta$.  In this case, there is a change of coordinates
which translates the singular point in the fiber to the origin, and
this puts the equation in the form
\begin{equation}\label{eq:Tate01}
Y^2 = X^3 + u X^2 + v X + w
\end{equation}
with $z\mathrel{|}v$ and $z\mathrel{|}w$.  (This change of coordinates can be taken to be
$(x,y)=(X+u/3,Y)$ with $u=-9g/2f$ or $u=2f^2/3g$, at least one of which
is guaranteed not to have a pole along $z=0$.)
We write $v=v_1z$ and $w=w_1z$.

The condition to have type $I_m$ is then that $z\mathrel{\not |}u$ (which is easily
seen to be equivalent to $z\mathrel{\not |}f$ or to $z\mathrel{\not |}g$).
In this case  we define
$m=\operatorname{ord}_{z=0}(\Delta)$: this accounts for the next
few lines of Table~\ref{tab:kodaira}.  Note that in this case,
$u \equiv -9g/2f \mod z$.
The discriminant  can be then expanded as follows:
\[ \Delta = 4 u^3 w_1 z + O(z^2).\]
Note that if $m=1$ then the total space is non-singular and there is
nothing further to do.
On the other hand,
if $m \geq 2$ then $z^2\mathrel{|}w$.  If $m=2$ no monodromy is
possible; if $ m \geq 3$ we can
write $w=w_2z^2$ and compute the discriminant again:
\[ \Delta = (4 u^3 w_2 - u^2v_1^2)z^2 + O(z^3).\]
Thus, if $m \geq 3 $ we can write $w = (v_1^2/4u)z^2 + w_3z^3$.
Now, making the coordinate change $(X,Y) = (\widetilde{X}-(v_1/2u)z,\widetilde{Y})$,
we find
a new equation of the form
\begin{equation}\label{eq:Tate02}
\widetilde{Y}^2 = \widetilde{X}^3 + \widetilde{u}\widetilde{X}^2 + \widetilde{v}_2z^2\widetilde{X} + \widetilde{w}_3z^3
\end{equation}
valid for $I_m$ whenever $m\ge3$.  Note that $\widetilde{u}\equiv u \mod z$.

Now let us resolve  the singularities of the Weierstrass equation \ref{eq:Tate02}; in an
appropriate chart the first blowup has an
exceptional divisor determined by the intersection of $z=0$ and the quadratic terms in equation
\eqref{eq:Tate02}, namely,
\[
y_2^2 = \widetilde{u}x_2^2|_{z=0}
\]
and whether this exceptional divisor is reducible or irreducible is
determined by the Tate monodromy relation, that is, is determined by
whether
$\widetilde{u}|_{z=0}=u|_{z=0}=(-9g/2f)|_{z=0}$
has a square root in the field $K$ or not,
or equivalently, whether the monodromy cover defined by
\[ \psi^2 + (9g/2f)|_{z=0} =0\]
is reducible or irreducible\footnote{The divisor is actually  {\em
absolutely irreducible}\/, that is, it is
irreducible even after passing to an
extension field $L$ of $K$.  The point in this case is that when a
quadratic equation in two variables has maximal rank, it is absolutely
irreducible.}.  In Tate's original algorithm, the reducible
case is called \underline{Case 2a}, and the irreducible case is called
\underline{Case 2b}.  This explains the monodromy entry in the fourth
line of Table~\ref{tab:kodaira}.

\underline{Case 3.}  Now we may assume that $z\mathrel{|}u$ (or equivalently that
$z\mathrel{|}f$ and $z\mathrel{|}g$), and in this case it is convenient to return our attention
to the original Weierstrass form \eqref{eq:Tate00}.
If $z^2\mathrel{\not |}g$, we have a Kodaira fiber
of type $II$ (the next line on the Table).  There is no monodromy issue
in this case.

\underline{Case 4.}  We next assume that in addition, $z^2\mathrel{|}g$.
At this point we need to start resolving the singularity; we can write
$f=f_1z$ and $g=g_2z^2$ and on an appropriate chart of the blowup
(with $x=zx_1$ and $y=zy_1$)
we get an equation
\[ y_1^2 = zx_1^3 + f_1x_1 + g_2.\]
The exceptional divisor is defined by
\[ (y_1^2 =  f_1x_1 + g_2)|_{z=0},\]
and this is absolutely irreducible
 if and only if $z\mathrel{\not |}f_1$, i.e., $z^2\mathrel{\not |}f$.
This gives a Kodaira fiber of type $III$, the next line on the Table.
There is again
no monodromy issue.

\underline{Case 5.}  Now assume in addition $z^2\mathrel{|}f$, which implies that
$z^3\mathrel{|}\Delta$.  In this case our blown up equation is
\[ Y^2 = zX^3 +  zf_2X + g_2,\]
and the
exceptional divisor has equation
\[ Y^2 = g_2|_{z=0}.\]
Again we encounter a monodromy issue: the exceptional divisor is reducible
if $g_2|_{z=0}$ has a square root in the field $K$, and irreducible
otherwise.  Since $g_2|_{z=0} = (g/z^2)|_{z=0}$, this can be expressed
as before in terms of a monodromy cover; the one we need this
time is
\[ \psi^2 - (g/z^2)|_{z=0} = 0.\]
We have have Kodaira type $IV$ exactly when $z^3\mathrel{\not |}g$.

\underline{Cases 6 and beyond.}  If we now assume in addition that
$z^3\mathrel{|}g$, we have arrived at a Weierstrass equation of the form
\begin{equation}\label{eq:Tate03}
y^2=x^3+f_2z^2x+g_3z^3.
\end{equation}
The first blowup now leads to an equation of the form
\[ zy_2^2=x_1^3 + f_2 x_1 + g_3,\]
with exceptional divisor
\begin{equation}\label{eq:exceptional03}
 z=0 = x_1^3 + f_2 x_1 + g_3
\end{equation}
and Tate now describes the algorithm as having three branches, depending
on the number of roots of \eqref{eq:exceptional03}.

\underline{First Branch: Case 6.}  If \eqref{eq:exceptional03} has
three distinct roots, then we have type $I_0^*$.  This is the case
in which $\operatorname{ord}_{z=0}(\Delta)=6$, and the further
behavior is determined by the behavior of the splitting field of that
polynomial.  In our Table, we have rephrased this as the behavior
of the monodromy cover defined by
\[ \psi^3 + (f/z^2)_{z=0} \psi + (g/z^3)|_{z=0} =0.\]

\underline{Second Branch: Case 7.}  When \eqref{eq:exceptional03}
has one simple root and one double root,  we will get Kodaira type $I_{m-4}^*$
for $m = \operatorname{ord}_{z=0}(\Delta)-2$.  There is a monodromy
issue here as well, and to fully determine things, we need to execute
several blowups.  Tate describes these
by means of a subprocedure.  To initialize
the subprocedure, we make a change of coordinates similar to Case 1,
but this time putting the double root of \eqref{eq:exceptional03}
at $X=0$.  This coordinate change can be done via $(x,y)=(X+u_1z/3,Y)$,
with $u_1=-9g_3/2f_2=-9g/2zf$ or $u_1=2f_2^2/3g_3$, at least one of
which is guaranteed not to have a pole along $z=0$.  After the coordinate
change, $v$ and $w$ each vanish to an additional order.  We write $v=v_3z^3$
and $w=w_4z^4$, giving an equation
of the form
\begin{equation}\label{eq:start}
 Y^2 = X^3 + u_1 zX^2 + v_3 z^3X + w_4 z^4.
\end{equation}

\underline{Subprocedure}.
We assume that we are in the second branch, so that $z\mathrel{\not |}u_1$.
We claim by induction on the integer $\mu\ge5$ that if
$\operatorname{ord}_{z=0}(\Delta)\ge\mu+2$ then,
possibly after a change of coordinates, the equation \eqref{eq:start}
can be chosen so that $z^{[(\mu+1)/2]} \mathrel{|} v$ and
$z^{\mu-1}\mathrel{|}w$.
The initial step $\mu=5$ of this induction is what we established in
the paragraph above.
In the course of carrying out the induction, we will also exhibit
various blowups of the singularity and arrive at the monodromy
condition.

It is easiest to divide the analysis of the inductive step
into two cases: $\mu$ odd and $\mu$
even.  To handle the first case, we assume $\mu=2n-1$, $n\ge3$, and
write the equation in the form
\begin{equation}\label{eq:nuodd}
Y^2 = X^3 + u_1 z X^2 + v_{n}z^{n} X + w_{2n-2} z^{2n-2}.
\end{equation}
We compute the discriminant in this case as
\begin{equation}\label{eq:nuodd:discriminant}
 \Delta = 4 u_1^3 w_{2n-2} z^{2n+1} + O(z^{2n+2}).
\end{equation}
One coordinate chart in the blowup has $X=z^{n-1}X_{n-1}$, $Y=z^{n-1}Y_{n-1}$
and the equation takes the form
\begin{equation}\label{eq:nuodd:blowup}
Y_{n-1}^2 = z^{n-1}X_{n-1}^3 + u_1 z X_{n-1}^2
+ v_{n}z X_{n-1} + w_{2n-2},
\end{equation}
with exceptional divisor
\begin{equation}\label{eq:nuodd:exceptional}
(Y_{n-1}^2 =  w_{2n-2})|_{z=0}.
\end{equation}
If $\operatorname{ord}_{z=0}(\Delta) > \mu +2$ then we see from
equation \eqref{eq:nuodd:discriminant} that $z\mathrel{|}w_{2n-2}$.  In
this case, the exceptional divisor is a double line, and the
equation already has the form specified in our inductive statement
for $\mu+1$, so we have established the induction in this case.
On the other hand, if $z\mathrel{\not |}w_{2n-2}$
then there is a monodromy issue: \eqref{eq:nuodd:exceptional}
is reducible if and only if $w_{2n-2}|_{z=0}$ is a square.  From
equation \eqref{eq:nuodd:discriminant} we see that we can write
\[ w_{2n-2}|_{z=0}
= \left.\frac{\Delta}{4z^{2n+1}u_1^3}\right|_{z=0}
= \left.\frac14\left(\frac{\Delta}{z^{2n+1}}\right)\left(\frac{-2zf}{9g}\right)^3\right|_{z=0}
\]
and this is the form in which we expressed the monodromy cover
condition $\psi^2=w_{2n-2}|_{z=0}$ in Table~\ref{tab:kodaira}.

We now consider the case in which $\mu$ is even, and let
$\mu=2n$, $n\ge3$.
We write the equation in the form
\begin{equation}\label{eq:nueven}
Y^2 = X^3 + u_1 z X^2 + v_{n}z^{n} X + w_{2n-1} z^{2n-1}.
\end{equation}
We compute the discriminant in this case as
\begin{equation}\label{eq:nueven:discriminant}
 \Delta = (-u_1^2v_{n}^2 +4 u_1^3 w_{2n+3}) z^{2n+2} + O(z^{2n+3}).
\end{equation}
One coordinate chart in the blowup has $X=z^{n-1}X_{n-1}$, $Y=z^{n}Y_{n}$
and the equation takes the form
\begin{equation}\label{eq:nueven:blowup}
zY_{n}^2 = z^{n-1}X_{n-1}^3 + u_1  X_{n-1}^2
+ v_{n} X_{n-1} + w_{2n-1},
\end{equation}
with exceptional divisor
\begin{equation}\label{eq:nueven:exceptional}
0= (u_1  X_{n-1}^2 + v_{n} X_{n-1} + w_{2n-1})|_{z=0}.
\end{equation}
If $\operatorname{ord}_{z=0}(\Delta) > \mu+2$ then we see from
equation \eqref{eq:nueven:discriminant} that
$z\mathrel{|}(-v_{n}^2+4u_1w_{2n-1})$, so
we may write
\[ w_{2n-1} = \frac{v_{n}^2}{4u_1} + w_{2n}z.\]
Then the equation takes the form
\begin{equation}\label{eq:nuevenbis}
Y^2 = X^3 + u_1 z (X + \frac12v_{n}z^{n-1})^2 + w_{2n} z^{2n}.
\end{equation}
Making the coordinate change
$(X,Y)=(\widetilde{X}-\frac12v_{n}z^{n-1},\widetilde{Y})$
then puts the equation in the form specified for $\mu+1$, so the
induction is established.

On the other hand, if $\operatorname{ord}_{z=0}(\Delta) = \mu+2$
then the exceptional divisor \eqref{eq:nueven:exceptional}
is reducible if and only if the discriminant
$ (v_{n}^2-4u_1w_{2n-1})|_{z=0}$
of the quadratic equation \eqref{eq:nueven:exceptional}
is a perfect square.  Note that, by
equation \eqref{eq:nueven:discriminant}, we can write
\[ (v_{n}^2-4u_1w_{2n-1})|_{z=0}
=\left.\frac{-\Delta}{z^{2n+2}u_1^2}\right|_{z=0}
=\left.\left(\frac{-\Delta}{z^{2n+2}}\right)\left(\frac{-2zf}{9g}\right)^2\right|_{z=0}
\]
and this is the form in which we expressed the monodromy cover
condition $\psi^2=(v_{n}^2-4u_1w_{2n-1})|_{z=0}$
in Table~\ref{tab:kodaira}.

\underline{Branch 3 begins. Case 8:} Now suppose that  \eqref{eq:exceptional03}
has a triple root.  In this case, $z\mathrel{|}u_1$ and it is again convenient
to return our attention to the original Weierstrass form \eqref{eq:Tate00}.
We have $z^3\mathrel{|}f$ and $z^4\mathrel{|}g$, and can write the equation in the form
\[ y^2 = x^3 + f_3z^3x+g_4z^4.\]
The key chart for the blowup is
\[ y_2^2 = z^2 x_2^3 + f_3 z x + g_4.\]
When $z\mathrel{\not |}g_4$ we have Kodaira type $IV^*$, and in this case there is
a monodromy issue: the exceptional divisor is described by
\[ (y_2^2 =  g_4)|_{z=0},\]
which is reducible exactly when $g_4|_{z=0} = (g/z^4)|_{z=0}$
is a perfect square.  This leads to the final ``monodromy cover'' entry
in Table~\ref{tab:kodaira}.

\underline{Branch 3 continues. Case 9:}  We now assume\footnote{Note that right
at this point, Tate's paper has a small typographical error concerning
the exponent of $z$, which he calls $\pi$.} $z^5\mathrel{|}g$.  The relevant
chart for the blowup is
\[ z y_3^2 = zx_2^3 + f_3 x_2 + g_5.\]
If $z\mathrel{\not |} f_3$ then we have Kodaira type $III^*$.  There is no monodromy
issue in this case.

\underline{Branch 3 continues. Case 10:}  We now assume
that $z\mathrel{|}f_3$, i.e., $z^4\mathrel{|}f$.  If $z^6\mathrel{\not |}g$, we get Kodaira type $II^*$
with no monodromy.  If $z^6\mathrel{|}g$, the original Weierstrass equation was
not minimal.  We should start over with
\[ y_3^2 = x_2^3 + f_4 x_2 + g_6.\]

\section{Casimir computations for $\mathfrak{so}(\ell)$.}
\label{app:Casimir}

For $\mathfrak{so}(\ell)$ with $\ell=2m$ or $\ell=2m+1$,
the weight lattice has
generators $\varepsilon_1$, \dots, $\varepsilon_m$.  The Weyl group
${\mathcal W}_{\mathfrak{so}(\ell)}$
acts by permutations and sign changes on the $\varepsilon_i$'s, with
${\mathcal W}_{\mathfrak{so}(2m+1)}=\mathfrak{S}_m\rtimes (\mathbb{Z}_2)^{m}$
while
${\mathcal W}_{\mathfrak{so}(2m)}=\mathfrak{S}_m\rtimes (\mathbb{Z}_2)^{m-1}$
where the product of all signs must be $1$ in the second case.
As is easily seen, the ${\mathcal W}$-invariant polynomials are generated by
the elementary
symmetric functions in $\varepsilon_1^2$, \dots, $\varepsilon_m^2$
in the case of $\mathfrak{so}(2m+1)$, while these are supplemented
by $\prod_i \varepsilon_i$ in the case of  $\mathfrak{so}(2m)$
(and in this latter case, $\sigma_m(\varepsilon_1^2,\dots,\varepsilon_m^2)=
\prod_i \varepsilon_i^2$ becomes superfluous).

The vector
representation has weights
\[
\varepsilon_1, \dots, \varepsilon_m,
-\varepsilon_1, \dots, -\varepsilon_m
\]
for $\mathfrak{so}(2m)$, and
\[
\varepsilon_1, \dots, \varepsilon_m, 0,
-\varepsilon_1, \dots, -\varepsilon_m
\]
for $\mathfrak{so}(2m+1)$.  Thus, in both cases, we have
\[ \operatorname{Tr}_{\operatorname{vect}} F^k = \sum (\varepsilon_i)^k
+\sum (-\varepsilon_i)^k = 2\sum \varepsilon_i^k\]
for every even $k$.  Our convention is to set $\operatorname{tr} =
\frac12 \operatorname{Tr}_{\operatorname{vect}}$, so that
\[ \operatorname{tr}(F^k) = \sum \varepsilon_i^k.\]

Now in the case of $\mathfrak{so}(2m+1)$, the spinor representation has
weights $\frac12(\pm \varepsilon_1 \pm \dots \pm \varepsilon_m)$, whereas for
$\mathfrak{so}(2m)$, the even (resp.\ odd) spinor representation has weights
$\frac12(\pm \varepsilon_1 \pm \dots \pm \varepsilon_m)$ where the total number
of minus signs is even (resp.\ odd).

The Casimir operators evaluate as follows:
\[ \operatorname{Tr}_{\operatorname{spin_*}} F^k
= \frac1{2^k}\sum_{a_1,\dots,a_m} ((-1)^{a_1}\varepsilon_1 + \cdots
(-1)^{a_m}\varepsilon_m)^k,
\]
which we only consider for $k$ even.  In the sum $a_i\in \mathbb{Z}_2$,
and in the case of $\mathfrak{so}(2m)$ we impose the condition
$(-1)^{\sum a_i} = \pm 1$ to distinguish the two spinor representations.
In particular, {\em the number of terms in the sum is equal to the
dimension of $\operatorname{spin_*}$}.
Note that these expressions are all invariant under the Weyl group.

Taking a binomial expansion of these expressions in any case with $\ell\ge5$
other than $\ell=8$, the expansion of
$\operatorname{Tr}_{\operatorname{spin_*}} F^2$
can only involve $\sum_i\varepsilon_i^2$ and that of
$\operatorname{Tr}_{\operatorname{spin_*}} F^4$
can only involve $\sum_i\varepsilon_i^4$ and $\sum_{i<j}\varepsilon_i^2
\varepsilon_j^2$.  The reason $\ell=8$ is special is that in that case,
$\prod_i \varepsilon_i$ is also possible for $k=4$.  In any event,
the implication is that aside from the $\ell=8$, $k=4$ case, any terms
with odd powers of $\varepsilon_i$ will cancel out.

It follows that
\[ \operatorname{Tr}_{\operatorname{spin_*}} F^2
= \frac14 \sum_{a_1,\dots,a_m} (\sum_i \varepsilon_i^2)
= \frac14 \dim(\operatorname{spin_*}) \operatorname{tr}(F^2),\]
and that for $\ell\ne8$,
\begin{align}
 \operatorname{Tr}_{\operatorname{spin_*}} F^4
&= \frac1{16} \sum_{a_1,\dots,a_m}\left( \frac{4!}{4!}\sum_i \varepsilon_i^4
+ \frac{4!}{2!2!}
\sum_{i<j} \varepsilon_i^2\varepsilon_j^2\right)
\\ &= \frac1{16} \dim(\operatorname{spin_*})\left( \operatorname{tr}(F^4)
+ 6\cdot\frac12\left((\operatorname{tr}(F^2)^2)-(\operatorname{tr}(F^4))\right)\right)\\
& = \frac1{16} \dim(\operatorname{spin_*})\left(
3(\operatorname{tr}(F^2)^2)-2(\operatorname{tr}(F^4)\right).
\end{align}

The case $\ell=8$ is special, because we get a term
\[\frac{4!}{1!1!1!1!} (-1)^{a_1+a_2+a_3+a_4} \varepsilon_1\varepsilon_2
\varepsilon_3\varepsilon_4
\]
which evaluates to $24\prod_i\varepsilon_i$ for $\operatorname{spin}_+$
and $-24\prod_i\varepsilon_i$ for $\operatorname{spin}_-$.  Thus, for
$\ell=8$ (using the fact that $\dim(\operatorname{spin_*})=8$) we have
\[ \operatorname{Tr}_{\operatorname{spin_\pm}} F^4
= \frac32(\operatorname{tr}(F^2)^2)-(\operatorname{tr}(F^4))
\pm 12(\prod_i\varepsilon_i).
\]
The key relation for us is:
\[ \operatorname{Tr}_{\operatorname{vect}} F^4 +
\operatorname{Tr}_{\operatorname{spin_+}} F^4 +
\operatorname{Tr}_{\operatorname{spin_-}} F^4 =
3(\operatorname{tr}(F^2)^2),
\]
which follows immediately since $\operatorname{Tr}_{\operatorname{vect}} F^4
= 2\operatorname{tr}(F^4)$.

For completeness, we also work out the adjoint representation of
$\mathfrak{so}(\ell)$.  This time the weights are
$\pm\varepsilon_i\pm\varepsilon_j$ for $i<j$, and when $\ell$ is odd, also
$\pm\varepsilon_i$.    We thus have
\begin{align} \operatorname{Tr}_{\operatorname{adjoint}}(F^2) &=
\sum_{i<j}((\varepsilon_i+\varepsilon_j)^2 + (\varepsilon_i-\varepsilon_j)^2 +
(-\varepsilon_i+\varepsilon_j)^2 + (-\varepsilon_i-\varepsilon_j)^2)
\\ &\quad +(\ell-2m) \sum_i(\varepsilon_i^2+(-\varepsilon_i)^2)\\
&= \sum_{i<j}( 4\varepsilon_i^2+4\varepsilon_j^2) +
 2(\ell-2m)\sum_i \varepsilon_i^2 \\
&= 2\sum_{i\ne j} (\varepsilon_i^2+\varepsilon_j^2)
  +  2(\ell-2m)\sum_i \varepsilon_i^2 \\
&= 4(m-1)\sum_{i} \varepsilon_i^2 +  2(\ell-2m)\sum_i \varepsilon_i^2 \\
&= (2\ell-4)\operatorname{tr}(F^2),
\end{align}
and
\begin{align} \operatorname{Tr}_{\operatorname{adjoint}}(F^4) &=
\sum_{i<j}((\varepsilon_i+\varepsilon_j)^4 + (\varepsilon_i-\varepsilon_j)^4 +
(-\varepsilon_i+\varepsilon_j)^4 + (-\varepsilon_i-\varepsilon_j)^4)
\\ &\quad +(\ell-2m) \sum_i(\varepsilon_i^4+(-\varepsilon_i)^4)\\
&= \sum_{i<j} (4\varepsilon_i^4 + 24\varepsilon_i^2\varepsilon_j^2 +
4\varepsilon_j^4)
+2(\ell-2m) \sum_i\varepsilon_i^4\\
&= 24\sum_{i<j}\varepsilon_i^2\varepsilon_j^2
+ 2\sum_{i\ne j}(\varepsilon_i^4 + \varepsilon_j^4)
+2(\ell-2m) \sum_i\varepsilon_i^4\\
&= 12\left((\sum_i \varepsilon_i^2)^2 - \sum_i \varepsilon_i^4\right)
+ 4(m-1)\sum_{i}\varepsilon_i^4
+2(\ell-2m) \sum_i\varepsilon_i^4\\
&= 12(\operatorname{tr}(F^2)^2)
+ (2n-16)\operatorname{tr}(F^4)
\end{align}
Notice that all terms with odd powers of $\varepsilon_i$ and $\varepsilon_j$
canceled explicitly in the second line of the computation.
In particular, this computation is valid for all values of $\ell$ including
$\ell=8$.

\bigskip

{\bf Acknowledgments:} We would like to thank
S. Katz,
V. Kumar,
K. Rubin,
S. Schafer-Nameki,
J. Silverman,
J. Sully,
Y. Tachikawa,
and
especially W. Taylor
for useful discussions.
DRM thanks the Aspen Center for Physics for hospitality during
various stages of preparation of this paper.


\ifx\undefined\bysame
\newcommand{\bysame}{\leavevmode\hbox to3em{\hrulefill}\,}
\fi

\end{document}